\definecolor{darkblue}{rgb}{0.,0.,0.4}
\definecolor{darkred}{rgb}{0.5,0.,0.}
\numberwithin{equation}{section}
\numberwithin{figure}{section}
\numberwithin{table}{section}
\newtheorem{theorem}{Theorem}[section]
\newtheorem{lemma}[theorem]{Lemma}
\theoremstyle{definition}
\newtheorem{definition}[theorem]{Definition}
\newtheorem*{lemma*}{Lemma}
\DeclareMathAlphabet{\mathpzc}{OT1}{pzc}{m}{it}
\newcommand{\FF}{\mathbb{F}}
\newcommand{\nout}{{n_\text{\rm outer}}}
\newcommand{\nin}{{n_\text{\rm inner}}}
\newcommand{\kin}{{k_\text{\rm inner}}}
\newcommand{\nT}{n_T}
\newcommand{\nphys}{n_\text{\rm phys}}
\newcommand{\calS}{\mathcal{S}} 
\newcommand{\Sw}{{\rm Swap}}
\newcommand{\sO}{{\cal O}^*}
\newcommand{\dm}{{\rm diag}}
\begin{document}

\title{Magic State Distillation with Low Space Overhead and Optimal Asymptotic Input Count}
\author{Jeongwan Haah}
\affiliation{Quantum Architectures and Computation, Microsoft Research, Redmond, WA 98052, USA}

\author{Matthew B.~Hastings}

\affiliation{Station Q, Microsoft Research, Santa Barbara, CA 93106-6105, USA}
\affiliation{Quantum Architectures and Computation, Microsoft Research, Redmond, WA 98052, USA}

\author{D. Poulin}
\affiliation{D\'{e}partement de Physique \& Institut Quantique, Universit\'{e} de Sherbrooke, Quebec, Canada
}

\author{D. Wecker}
\affiliation{Quantum Architectures and Computation, Microsoft Research, Redmond, WA 98052, USA}

\date{26 September 2017}

\begin{abstract}
We present an infinite family of protocols to distill magic states for $T$-gates 
that has a low space overhead and uses an asymptotic number of input magic states to achieve a given target error that is conjectured to be optimal.  The space overhead, defined as the ratio between the physical qubits to the number of output magic states, is asymptotically constant, while both the number of input magic states used per output state and the $T$-gate depth of the circuit scale linearly in the logarithm of the target error $\delta$ (up to $\log \log 1/\delta$).
Unlike other distillation protocols, this protocol achieves this performance without concatenation and the input magic states are injected at various steps in the circuit rather than all at the start of the circuit. The protocol can be modified to distill magic states for other gates at the third level of the Clifford hierarchy, with the same asymptotic performance.  The protocol relies on the construction of weakly self-dual CSS codes with many logical qubits and large distance, allowing us to implement control-SWAPs on multiple qubits.  We call this code the ``inner code".  The control-SWAPs are then used to measure properties of the magic state and detect errors, using another code that we call the ``outer code".
Alternatively, we use weakly-self dual CSS codes which implement controlled Hadamards for the inner code, reducing circuit depth.  We present several specific small examples of this protocol.
\end{abstract}


The possibility of a large scale quantum computer relies on
fault-tolerant architectures,
in which errors are corrected faster than they are created%
~\cite{Shor1996Fault-tolerant,AharonovBenOr1996,KnillLaflammeZurek1996Threshold}.
The standard approach is to use stabilizer codes 
to protect logical qubits from noise%
~\cite{Gottesman1996Saturating,CalderbankRainsShorEtAl1997Quantum},
and perform quantum gates at the encoded level.
The overhead of the fault-tolerance is \emph{only} polynomial
in the logarithm of the desired accuracy,
but in practice the overhead is estimated to be overwhelmingly large~\cite{JonesEtAl2012,OGormanCampbell2016}.
Particularly expensive operations are non-Clifford gates
such as the $\pi/4$-rotation (T-gate) and Toffoli gate.
A compelling approach is to inject a special state, called a magic state,
into a Clifford-only circuit,
and pass the cost of implementing the non-Clifford operation
to the preparation of the magic states, which are distilled from noisy 
ones~\cite{Knill2004a,Knill2004b,BravyiKitaev2005Magic}.

There exist several distillation protocols for the magic state ($T$-state)
using specialized quantum error correcting codes%
~\cite{Knill2004a,BravyiKitaev2005Magic,MeierEastinKnill2012Magic-state,BravyiHaah2012Magic}.
Each code can provide a fixed degree of fidelity improvement 
that is given by the code distance.
In order to achieve arbitrary good fidelity,
one typically concatenates small routines.
In terms of the number of input magic states of low fidelity 
per one output magic state of high fidelity,
the best protocols to date are those in Refs.~\cite{BravyiHaah2012Magic, Jones2012}.
However, these protocols require a large batch 
of thousands of magic states to be useful.

In this paper, we introduce an infinite family of distillation protocols,
extending one of the very first protocols by Knill~\cite{Knill2004a},
and another by Meier, Eastin, and Knill~\cite{MeierEastinKnill2012Magic-state}.
Our protocol produces $n$ $T$-magic states using at most $cn$ qubits
and achieves at least $c'n$-th order error suppression
under the assumption that the sole noise source is the $T$ gate,
where $c,c'$ are small universal constants.
Since the degree of error suppression is high,
there is no need to concatenate small routines,
reducing the space overhead significantly.

Our protocol is also asymptotically superior (conjectured to be optimal) in terms of noisy $T$ count.
For any fixed odd $d \ge 5$,
we show that the number of noisy $T$ gates per one output magic state
with error suppressed to $d$-th order
converges to $d$ exactly in the large code length limit.

Beyond the magic states for $T$ gates,
our protocol can distill magic states for rotation by $\pi/2^k$
for $k=3,4,\ldots$ adapting the idea of Ref.~\cite{DP2014},
and any gate in the third level of Clifford hierarchy~\cite{ChuangGottesman1999}.
(See also Ref.~\cite{LandahlCesare2013Tgate}
for smaller angle ($k \ge 3$) rotations,
though we do not use ideas there.)
For the latter, the asymptotic performance is similar to the $T$ gate case.

Small instances of our family demonstrates 
reduction of space overhead, 
with a modest input $T$ count.
If noisy $\pi/4$ rotations can be directly done on qubits,
an instance of our family operates on 34 qubits including measurement ancillas,
produces 15 $T$-magic states with 5th order error suppression,
and requires 29 noisy $T$ gates per output.
In comparison, to the authors' knowledge, 
any previous protocol 
that can operate on less than 50 qubits
have either have lower order of error suppression,
or requires more non-Clifford gates per output.

Recent innovations show that the $\pi/4$-rotation and Toffoli gate 
can be implemented fault-tolerantly on a class of error correcting codes~\cite{
	PaetznickReichardt2013Universal,
	JochymOConnorLaflamme2014,
	ADP14a,
	Bombin2015,
	BravyiCross2015,
	JochymOConnorBartlett2016,
	JonesBrooksHarrington2016,
	Bombin2016,
	Yoder2016}.
These schemes achieve computational universality
through local operations
while circumventing no-go theorems~\cite{
	EastinKnill2009Restrictions,
	BravyiKoenig2012Classification}
by going back and forth between two code spaces.
This approach removes the need for magic states,
but, it is not a simple question to tell which approach is better.
This question depends on an architecture and underlying physical qubits' characteristic,
and thus we leave the realistic cost analysis and comparison to future work.

The organization of the present paper is as follows.
We start in Section~\ref{sec:basic} with a basic explanation 
of our ideas by exhibiting examples of small sizes.
Section~\ref{sec:cliff3} explains how to convert magic state distillation protocols for $T$ gates
into those for Toffoli gates.
In Section~\ref{sec:innercodes} we show that any weakly self-dual CSS code
can be used in distillation protocols by implementing measurement of Clifford operators.
In Section~\ref{sec:asymptotic}, 
we give asymptotic constructions of the codes, 
in the limit of either large distance or large code length.  
In Section~\ref{sec:simulation}, we give results of numerical simulations; 
in this section we also present some specific additional small size protocols that are not described elsewhere in the paper.
We conclude with discussion in Section~\ref{sec:discussion}.
In Appendix~\ref{app:codes}, we give details, including stabilizer checks, for some of the specific codes used in the paper.
Appendix~\ref{app:circuits} gives circuits for some of the protocols used.
Appendix~\ref{app:coincidence} describes unexpected relations among different distillation protocols.
Appendix~\ref{app:qudits} explains an extension to qudits,
using classification of symmetric forms over finite fields in Appendix~\ref{symmformfinite}.

Throughout this paper, 
all classical codes that we consider will be linear codes over the binary field $\FF_2$ 
and all quantum codes will be qubit stabilizer codes
(except for Appendix~\ref{app:qudits}).
Given a bit vector $v$, we use $|v|$ to denote its Hamming weight.
Our magic state is the $(+1)$-eigenstate
$\ket H = \cos \frac \pi 8 \ket 0 + \sin \frac \pi 8 \ket 1$
of the Hadamard operator $H$.
We use matrices
\begin{align}
X = \begin{pmatrix} 0 & 1 \\ 1 & 0 \end{pmatrix}, \quad
Y = \begin{pmatrix} 0 & -i \\ i & 0 \end{pmatrix}, \quad
Z = \begin{pmatrix} 1 & 0 \\ 0 & -1 \end{pmatrix},\\
H = \frac{1}{\sqrt{2}}\begin{pmatrix} 1 & 1 \\ 1 & -1 \end{pmatrix}, 
\quad
T = e^{-i \pi Y/8} = 
\begin{pmatrix} 
\cos \frac \pi 8 & - \sin \frac \pi 8 \\
\sin \frac \pi 8 &   \cos \frac \pi 8 
\end{pmatrix}.
\end{align}
A self-orthogonal subspace of some vector space 
is a subspace such that the inner product of any two vectors in the subspace vanishes.
A weakly self-dual CSS code is a quantum code 
whose stabilizer group is invariant under conjugation by the product of the Hadamard operator on all qubits; 
we call this product a ``transversal Hadamard".

\section{Basic Distillation Protocols}
\label{sec:basic}

Distillation protocols for magic states to date can be put roughly into three classes.
Those in the first class implement a non-Clifford $\pi/4$-rotation to a stabilizer state 
such as $\ket +$ or $\ket 0$%
~\cite{BravyiKitaev2005Magic,BravyiHaah2012Magic,LandahlCesare2013Tgate}.
The non-Clifford rotation must be done fault-tolerantly,
so the protocols in this class focus on finding error correcting codes that admits a transversal non-Clifford rotation.
This requires the underlying code to have a special symmetry, which is rather rare.
The protocols in the second class%
~\cite{Knill2004a,Knill2004b,MeierEastinKnill2012Magic-state,Jones2012,DP2014,Campbell2016}
implement measurements of ``stabilizers'' of the magic state,
based on the fact that a magic state in the third level of Clifford hierarchy~\cite{ChuangGottesman1999}
is an eigenstate of a Clifford operator.
To measure a Clifford operator one needs a non-Clifford operation
which has been implemented fault-tolerantly by a distance-two code.
The third class uses yet different symmetries of codes~\cite{BravyiKitaev2005Magic,Reichardt2005},
and has high threshold for distillation, but the success probability does not reach 1
even with perfect input magic states.

Our scheme in the present paper belongs to the second class,
and is an extension of the idea of Knill~\cite{Knill2004a}.
There are two levels of error correcting codes in our scheme,
which we call \emph{inner} and \emph{outer} codes.
 Roughly speaking, 
the outer codes specify a certain set of measurements of  Clifford operators on a set of input magic states,
and the inner codes specify how to implement these measurements.
We illustrate aspects of our ideas by two examples.
They are not the best protocols in regards to 
e.g. the total number of non-Clifford gates and states,
but will be simple to explain.
A more general class of protocols is presented in later sections.

Without loss of generality,
by a standard Clifford twirling argument, we can assume that 
each $\pi/4$ rotation and undistilled magic state 
suffers from independent $Y$ errors with probability $\epsilon$.
We refer to this error model as the {\bf stochastic} error model.

\subsection{Trivial outer code}
If we could implement the control-Hadamard,
then the distillation is trivial:
Prepare an ancilla qubit in $\ket +$ state, 
apply the control-Hadamard with the control on the ancilla
and the Hadamard on an arbitrary target qubit,
and measure the ancilla in $X$-basis to accept $+1$ outcome.
The accepted target qubit is projected onto the magic state.

The control-Hadamard belongs to the third level of Clifford hierarchy,
and thus cannot be implemented with Clifford operations.
To obtain an approximate control-Hadamard with noisy non-Clifford rotations,
we must use an error correcting code that can implement $H$ on the logical qubits
fault-tolerantly.

\subsubsection{$[[7,1,3]]$ inner code}
To this end, 
we observe that the Steane code~\cite{Steane_1996} admits a transversal Hadamard~\cite[Sec. IX]{Knill2004a}.
The stabilizers are
\begin{align}
 \begin{tabular}{ccccccc}
 	I&I&I&X&X&X&X \\
 	I&X&X&I&I&X&X\\
 	X&I&X&I&X&I&X
 \end{tabular},
 \quad
 \begin{tabular}{ccccccc}
 	I&I&I&Z&Z&Z&Z\\
 	I&Z&Z&I&I&Z&Z\\
 	Z&I&Z&I&Z&I&Z
 \end{tabular},
\end{align}
the group generated by which is fixed under the $H^{\otimes 7} : X_i \leftrightarrow Z_i$,
and the logical operator pair is
\begin{align}
  \begin{tabular}{ccccccc}
 	Z&Z&Z&Z&Z&Z&Z
 \end{tabular},
 \quad
   \begin{tabular}{ccccccc}
 	X&X&X&X&X&X&X
 \end{tabular}
\end{align}
which are interchanged by the transversal Hadamard.
Using an identity
\begin{align}
 H = T Z T^\dagger = e^{-i \pi Y / 8} ~Z~ e^{ i \pi Y/8} 
 \label{eq:decomp-H}
\end{align} 
we see that the logical control-Hadamard is possible by replacing the middle $Z$ by
the control-$Z$. The $T$ gate can be noisy 
as they act on the physical qubits of the Steane code.

This way, we have built a Hadamard measurement routine that is fault-tolerant.
Then, a magic state distillation protocol is as follows:
\begin{enumerate}
	\item Prepare a noisy magic state in the ``data'' register, and $\ket 0$ in $6$ check registers, 
		and embed them into the Steane code
	\item Prepare an ancilla in $\ket +$ and implement control-$H^{\otimes 7}$ using Eq.~\eqref{eq:decomp-H}, 
		where the control is the ancilla and the targets are the physical qubits of the Steane code.
	\item Inverse the embedding of the Steane code.
	\item Measure the ancilla in the $X$ basis, the check qubits in the $Z$ basis.
	\item Upon $+1$ outcome in all 7 measurements, a distilled magic state is in the data qubit.
\end{enumerate}
Let us examine the pattern of errors that may go undetected. There are two possibilities.
\begin{itemize}
	\item The initial magic state is faulty, and this is undetected due to malfunction of the control-Hadamard.
	\item The noisy $\pi/4$ rotations induce a logical error.
\end{itemize}
The first possibility is because
 a pair of simultaneous errors sandwiching the control-$Z$ can alter the ancilla measurement:
\begin{align}
 &\ket 0 \bra 0 \otimes I + \ket 1 \bra 1 \otimes e^{-i \pi Y / 8} Y ~Z~ Y e^{ i \pi Y/8} \nonumber \\
 &= (Z \otimes I)\left( \ket 0 \bra 0 \otimes I + \ket 1 \bra 1 \otimes e^{-i \pi Y / 8} ~Z~ e^{ i \pi Y/8} \right).
 \label{eq:errorInControl}
\end{align}
Thus, the first possibility occurs with probability $\epsilon^3$ to leading order.
One can easily see that this is the only possibility for weight 2 errors from the
control-Hadamard to escape.
The second possibility occurs at order $\epsilon^3$ since Steane's code has distance 3.
Overall, the protocol operates on $8$ qubits
(assuming $T$ gates are applied in place),
consuming $14$ $T$-gates and $1$ state of error rate $\epsilon$,
producing $1$ output $T$-state whose error rate is $O(\epsilon^3)$.

It is useful to think of the above protocol as
a Hadamard measurement ($H$-measurement) routine
that introduces a new error of order $\epsilon^3$ to the target,
and another error of order $\epsilon^2$ to the control.
The error on the control is easy to fix; repeat the measurement.%
\footnote{This corresponds to having redundant checks in the outer code.}
The error on the target is inherent to the choice of the inner quantum code,
and should thus be overcome by another quantum code.

\subsubsection{$[[17,1,5]]$ inner code}
\label{1715}
There exists a distance 5 code on 17 qubits with $H^{\otimes 17}$ being the logical Hadamard.
It is an instance of the color code~\cite{Bombin_2006,BravyiCross2015}.
We include the binary matrix for this code in Appendix~\ref{app:codes}.
In a similar way as above,
this $H$-measurement routine has error rate $O(\epsilon^5)$ on the target,
and $O(\epsilon^2)$ on the control.
By repeating the $H$-measurement twice using this inner code,
the control's error rate becomes $O(\epsilon^4)$.
The control error goes undetected only if the initial magic state is faulty.
Overall, only weight 5 errors may be undetected.
This protocol consumes $17\times 4 + 1 = 69$ noisy $T$'s.

In fact, we can {\it pipeline} the two $H$-measurement routines:
First, $H$-measure a noisy magic state using $[[7,1,3]]$ code,
and then $H$-measure the outcome using $[[17,1,5]]$ code.%
\footnote{
Interestingly, the protocol using Steane code,
and the pipelined protocol appear to have deep relation with triply even codes.
See Appendix~\ref{app:coincidence}.}
Hence, we have obtained a distillation routine with fifth order error suppression
that operates on 18 qubits in total,
consuming $1+ 7 \times 2 + 17 \times 2 = 49$ noisy $T$'s.
It is worth comparing this with the $69 \to 1$ protocol
in the preceding paragraph.
By using codes of smaller distance in the early stage of the protocol,
we obtain a more efficient protocol.
We loosely call this modification as a pipelined protocol.
The circuit is in
Fig.~\ref{1715fig} in Appendix \ref{app:circuits}.

\subsection{Repetition outer code}
Imagine we have $\nout$ noisy magic states that are to be distilled.
Under the stochastic error model,
we can think of the noisy magic states as an probabilistic ensemble of $\nout$-bit strings
where $0$ denotes a good magic state, and $1$ denotes a bad one.
The protocol in the previous subsection examines one qubit at a time,
and in terms of the bit strings, this amounts to checking individual bits.
If our goal is to suppress the overall error to $d$-th order where $d < \nout$,
the bit-wise check might be overkill.
A better way is to devise a measurement routine that can check the parity of several bits.

\subsubsection{$[[4,2,2]]$ inner code}

The simplest case is when $\nout = 2$ and the desired error suppression is quadratic.
If we can measure $H^{\otimes 2}$, 
then by postselecting on $+1$ outcome the noisy state is projected to the even parity subspace,
which is $O(\epsilon^2)$ away from the pair of perfect magic states.
We can describe the situation by saying that we have a repetition code on $\nout = 2$ bits with one parity check.
This is an outer code.

A corresponding inner code should implement control-$H^{\otimes 2}$ to accuracy $O(\epsilon^2)$,
both in the target and the control.
Meier, Eastin, and Knill have designed such a measurement routine~\cite{MeierEastinKnill2012Magic-state}.
The four qubit code $[[4,2,2]]$ whose stabilizers are $X^{\otimes 4}$ and $Z^{\otimes 4}$ admits
the transversal Hadamard $\bar H = H^{\otimes 4}$ as a logical operator.
If we choose the logical operators as
\begin{align}
\left\{
 \begin{matrix}
  \bar X_1 = X & X & I & I \\
  \bar Z_1 = I & Z & Z & I
 \end{matrix}
 \right.
 \qquad
 \left\{
	 \begin{matrix}
  \bar X_2 = Z & Z & I & I \\
  \bar Z_2 = I & X & X & I 
	 \end{matrix}
 \right.
\end{align}
then the transversal Hadamard swaps the two logical qubits.
Using Eq.~\eqref{eq:decomp-H}, this means that we can implement control-$\Sw$ to accuracy $O(\epsilon^2)$.
Now, a trick is to use the control-$\Sw$ twice sandwiching the Hadamard:
\begin{align}
 [^C\Sw_{12}] H_1 [^C\Sw_{12}] = [^C(H_1 \otimes H_2)] H_1
 \label{eq:CSwap-to-CProduct}
\end{align}
where the superscript $C$ denotes the control that is common for both control-$\Sw$s, 
and the subscripts $1$ and $2$ denote the qubits the operator acts on.
The extra $H_1$ does no harm since the magic state is its eigenstate.
The obtained control-$H^{\otimes 2}$ is accurate up to error $O(\epsilon^2)$ on the target 
since the distance of the four-qubit code is 2,
and also $O(\epsilon^2)$ on the control due to Eq.~\eqref{eq:errorInControl}.
This is a quadratic distillation protocol
operating on 5 qubits, consuming 18 noisy $T$'s to produce $2$ outputs.%
\footnote{Meier, Eastin, and Knill~\cite{MeierEastinKnill2012Magic-state}
compresses the circuit for the control-$H^{\otimes 2}$
to reduce the number of $T$'s of the protocol from 18 to 10,
but we do not explain it here.
See also Campbell and O'Gorman~\cite{Campbell2016}.}

\subsubsection{$[[16,6,4]]$ inner code and pipelining}
\label{1664}

The classical Hadamard code $[16,5,8]$ has a property that every code word has even overlap with any other code word.
By the CSS construction, using these classical codewords as stabilizers, we obtain a $[[16,6,4]]$ code; 
see Appendix \ref{app1664} for the stabilizers.
We will later show that there is a choice of logical operators such that the transversal Hadamard $H^{\otimes 16}$
implements simultaneous pairwise swaps on the three pairs of logical qubits.
This implies that we can measure any even product 
$\tilde H^{\otimes 2}$, $\tilde H^{\otimes 4}$, or $\tilde H^{\otimes 6}$ of Hadamards
on $\kin = 6$ magic states.
For example, we can generalize Eq.~(\ref{eq:CSwap-to-CProduct}) to
\begin{align}
 [^C\Sw_{12}] [^C\Sw_{34}] H_1 H_3 [^C\Sw_{12}] [^C\Sw_{34}] = [^C(H_1 \otimes H_2 \otimes H_3 \otimes H_4)] H_1 H_3.
\end{align}
The $H$-measurement routine puts quadratic error to the control and quartic error to the target.

Imagine that $\nout = 6$ magic states are laid on a ring.
We measure $H^{\otimes 2}$ on every nearest neighboring pair of the magic states.
There are six checks in total.
The measurement pattern follows the parity checks of the classical repetition code; there is a redundancy in the checks, which turns out to be necessary.  Let us see how this achieves quartic error suppression.
In order for an error on one of $\nout$ magic states to pass the measurement routines,
the two checks that involve that input state must both be faulty.
This process gives an  $O(\epsilon^5)$ error, i.e., the probability of both checks being faulty is   $O(\epsilon^4)$, so including the error on the input magic state the error is  $O(\epsilon^5)$.  Note that if we did not have a redundancy in the checks of the outer code, using only $5$ checks, one qubit would be checked only once and we would achieve only third order
error suppression.
More generally, any process involving one or more input magic state errors gives an error which is at most $O(\epsilon^5)$.
The dominant error after all the $H$-measurements
is then from the logical error by the $H$-measurement routine,
which happens with probability $O(\epsilon^4)$.
Overall, the protocol consumes 
$6 + 6 \times (16 \times 4) = 390$ $T$'s
to produce $6$ outputs.

We can pipeline the $[[4,2,2]]$ code routine in front of the $[[16,6,4]]$ code routine
to lower the complexity of the distillation circuit.
For instance, we can run the three $H$-measurement routines by the $[[4,2,2]]$ code
on pairs of magic states $(12)$, $(34)$, and $(56)$, 
and then run the three $H$-measurement routines by $[[16,6,4]]$ code
on pairs of magic states $(23)$, $(45)$, and $(61)$.
The number of $T$'s consumed is now
$ 6 + 3 \times (4 \times 4) + 3 \times (16 \times 4) = 246$,
while the number of outputs is still $6$.
It is left to the readers to show that
the modified version also achieves quartic error suppression.
We have simulated the modified version, and
the results can be found in Section~\ref{sec:simulation}.
The circuit is in
Fig.~\ref{1664fig} in Appendix \ref{app:circuits}.

\section{Third level of Clifford hierarchy}
\label{sec:cliff3}

The protocol above can be straightforwardly generalized to distilling other magic states 
to implement gates at the third level of the Clifford hierarchy.
Consider a state $\ket \psi$ on $q$ qubits such that $\ket \psi = U \ket \phi$ 
where $U$ is a gate at the third level of the Clifford hierarchy~\cite{ChuangGottesman1999},
and $\ket \phi$ is a stabilizer state.
Here we show that any such state $\ket \psi$ can be distilled.
An example of such a state $\ket \psi$ is the magic state to produce a CCZ gate, 
which is equivalent to the Toffoli gate up to a Hadamard on the target~\cite{Shor1996Fault-tolerant}.
\begin{align}
 \ket \psi = ~ \underbrace{^{CC}Z}_{U} 
 \underbrace{\left( \frac{\ket 0 + \ket 1}{\sqrt{2}} \right)^{\otimes 3}}_{\ket \phi} .
\end{align}

As $\ket \phi$ is a stabilizer state,
we can identify $q$ operators, $S(1), S(2), \ldots, S(q)$,
which are products of Paulis, generating the stabilizer group of $\ket \phi$,
so that $\ket \phi$ is the unique (up to global phase) 
$+1$ eigenstate of those operators.
For the CCZ, we see $S(1)=X_1$, $S(2)=X_2$, and $S(3)=X_3$.
Then, the state $\ket \psi$ is the unique $+1$ eigenstate of the
operators
\begin{align}
 W(a) \equiv U S(a) U^\dagger, \text{ for } a = 1,\ldots, q.
\end{align}
These operators $W(a)$ commute with each other by construction,
and belong to the second level of the hierarchy, the Clifford group.
For the CCZ, we see $W(1) = X_1 (^C Z)_{23}$, $W(2) = X_2 (^C Z)_{13}$, and $W(3) = X_3 (^C Z)_{12}$.

Here is an example protocol for CCZ state distillation using three copies of the $[[4,2,2]]$ code, 
comprising a $[[12,6,2]]$ code.
We regard the three copies as a single $[[12,6,2]]$ code
and index the logical qubits by $1,\ldots,6$.
We encode one CCZ state stabilized by $W(a)$ into logical qubits $1,3,5$ 
and another stabilized by $W'(a)$ into logical qubits $2,4,6$, where $a = 1,2,3$.
Consider a variant of Eq.~\eqref{eq:CSwap-to-CProduct}
\begin{align}
 &[^C(\Sw_{12}\Sw_{34}\Sw_{56})] (W(a)_{135} \otimes I_{246}) [^C(\Sw_{12}\Sw_{34}\Sw_{56})] \nonumber \\
 &= [^C(W(a)_{135} \otimes W(a)_{246})] (W(a)_{135} \otimes I)
\end{align}
where the control qubit is common for every gate.
The simultaneous control-$\Sw$s are implemented by the control-$H^{\otimes 12}$ on the $[[12,6,2]]$ code,
where the control-$H^{\otimes 12}$ in turn is implemented by noisy $T$ gates.
Thus, we obtain a measurement routine for $W(a) \otimes W'(a)$.%
\footnote{
Since $W(1),W(2),W(3)$ are the same up to permutations of qubits,
the measurement routine can in fact measure any product $W(a)\otimes W'(b)$ for $a,b=1,2,3$ on the pair of CCZ states.
}
Then the protocol is to measure $W(1)W'(1)$, $W(2)W'(2)$, and $W(3)W'(3)$.
Overall, this protocol takes 2 noisy CCZ states
and $3\times (12 \times 4) = 144$ noisy $T$ gates
to produce 2 CCZ states at a lower error rate.
Note that this explicit example protocol 
performs worse than existing ones in terms of 
non-Clifford gate count~\cite{Eastin2013,Jones2013Tof2,CampbellHoward2017}.

By applying the Clifford stabilizers $W$ uniformly at random
to a noisy magic state for CCZ, it becomes a mixture of eigenstates of $W$'s.
Hence we may assume an error model 
where an error flips at least one of $W(1),W(2),W(3)$ with probability $\epsilon$.
Since the measurement routine puts measurement error at rate $O(\epsilon^2)$
and logical error at rate $O(\epsilon^2)$,
the protocol achieves quadratic error reduction for CCZ state.
For higher order reduction,
one should use inner and outer code of higher distances.

A related discussion on the error model for the $T$ state is given in Section~\ref{sec:magicStateFidelity}.

In passing, we note that the Clifford unitary 
$ V_{123} = ~(^{C_1}Z_2) (^{C_1}X_2)  X_2 X_3 $
on three qubits $1,2,3$ is a stabilizer of the CCZ state 
$\ket{ \text{CCZ}} = \frac{1}{\sqrt{8}}\sum_{a,b,c=0,1} (-1)^{abc} \ket{abc}$.
Since the CCZ state is permutation invariant,
we obtain six such stabilizers.
They do not commute, 
but any triple of them uniquely determines the CCZ state.
The controlled version can be implemented with only four $T$ gates~\cite{Eastin2013} 
(See also \cite{Jones2013Tof2}):
$
^{C_0} V_{123} =  T_2 (^{C_1} Z_2) T_2 (^{C_0}Z_2) T_2^\dagger (^{C_1}Z_2) T_2^\dagger (^{C_0} X_3) .
$
It might be possible to use these stabilizers 
with normal codes such as $[[7,1,3]]$ and $[[17,1,5]]$,
but because they do not commute the resulting measurement routine rejects faulty inputs with a probability less than 1,
even in the limit $\epsilon \to 0$.

\section{Inner Codes}
\label{sec:innercodes}

In this section,
we find a general class of inner codes that can be used in distillation protocols.
On the first read, a reader may wish to skip the discussion on symmetric forms,
noting only the magic basis in Definition~\ref{def:magicBasis},
and the construction of codes in Theorem~\ref{thm:magicCSScodes}.

\subsection{Symmetric forms over $\FF_2$}

We consider finite dimensional vector spaces over the binary field $\FF_2$.
The space $\FF_2^n$ is equipped with a symmetric dot product $v \cdot w = \sum_{i} v_i w_i \in \FF_2$.
This dot product on $\FF_2^n$ is \emph{non-degenerate}, 
i.e., for any nonzero vector $v \in \FF_2^n$
there is a vector $w \in \FF_2^n$ such that $v \cdot w \neq 0$.
Let $\calS$ be a null (self-orthogonal) subspace of $\FF_2^n$, 
on which the dot product identically vanishes.
Since $\calS$ is null, 
the dot product of $\FF_2^n$ canonically induces a dot product on the quotient space $\FF_2^n / \calS$
by $[v] \cdot [w] := v \cdot w$ where $[v]$ and $[w]$
denote the equivalence classes (members of the quotient space)
represented by $v$ and $w$, respectively.
Let $\calS^\perp$ denote the orthogonal complement of $\calS$ with respect to the dot product.
\begin{lemma}
The induced dot product on $\calS^\perp / \calS$ is non-degenerate.
\label{lem:nondegeneracy}
\end{lemma}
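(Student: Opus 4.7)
The plan is to show that any class $[v] \in \calS^\perp/\calS$ lying in the radical of the induced form (i.e., orthogonal to every $[w] \in \calS^\perp/\calS$) must be the zero class, by identifying such a $v$ as an element of $(\calS^\perp)^\perp$ and then showing $(\calS^\perp)^\perp = \calS$.

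First I would quickly verify well-definedness: since $\calS \subseteq \calS^\perp$ (this is exactly the self-orthogonality assumption), for any $v, w \in \calS^\perp$ and $s, t \in \calS$ we have $(v+s)\cdot(w+t) = v \cdot w$, because $s \cdot w = v \cdot t = s \cdot t = 0$. So the induced form is a well-defined symmetric bilinear form on $\calS^\perp/\calS$.

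Now suppose $[v] \in \calS^\perp/\calS$ satisfies $[v] \cdot [w] = 0$ for every $[w] \in \calS^\perp/\calS$. Unpacking the definition, this says $v \cdot w = 0$ for every $w \in \calS^\perp$, which is precisely the statement $v \in (\calS^\perp)^\perp$. So the claim reduces to proving $(\calS^\perp)^\perp = \calS$. The inclusion $\calS \subseteq (\calS^\perp)^\perp$ is immediate from the definition of $\calS^\perp$. For the reverse inclusion I would argue by dimension: because the dot product on $\FF_2^n$ is non-degenerate, the map $\FF_2^n \to \mathrm{Hom}(\FF_2^n, \FF_2)$ sending $v \mapsto v \cdot(\cdot)$ is an isomorphism, which gives $\dim \calS + \dim \calS^\perp = n$ for any subspace $\calS$. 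Applying this twice yields $\dim (\calS^\perp)^\perp = n - \dim \calS^\perp = \dim \calS$, and combined with $\calS \subseteq (\calS^\perp)^\perp$ this forces equality. Hence $v \in \calS$, so $[v] = 0$, which is exactly the non-degeneracy statement.

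The only real subtlety is the identity $(\calS^\perp)^\perp = \calS$, which over $\FF_2$ could feel suspicious since self-orthogonal vectors exist (e.g., $(1,1) \cdot (1,1) = 0$); the point I want to be careful about is that non-degeneracy of the ambient form, not positive-definiteness, is what is needed, and the dimension-counting argument above uses only non-degeneracy. Everything else is essentially bookkeeping.
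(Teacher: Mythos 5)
Your proof is correct and follows essentially the same route as the paper's: reduce non-degeneracy of the induced form to the identity $(\calS^\perp)^\perp = \calS$, which both you and the paper establish by the dimension count $\dim \calS + \dim \calS^\perp = n$ (you phrase it via the duality isomorphism, the paper via counting solutions of linear equations, but these are the same argument). The extra check of well-definedness is a sensible addition but not a difference in substance.
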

\begin{proof}
First, we claim that $(\calS^\perp)^\perp = \calS$.
It is clear by definition that $\calS \subseteq (\calS^\perp)^\perp$.
Interpreting the orthogonal complement as the solution space of a system of linear equations,
we see that the claim holds by dimension counting.
For $[v] \in \calS^\perp/\calS$, if $[v] \cdot [w] =0$ for any $w$,
then $v$ belongs to $(\calS^\perp)^\perp = \calS$,
implying that $[v] = 0 \in \calS^\perp / \calS$. 
\end{proof}
For any basis $\{ [v^{(1)}], \ldots, [v^{(k)}] \}$ of $\calS^\perp / \calS$,
we consider the symmetric matrix $\Lambda$ representing the dot product:
\begin{align}
 \Lambda^{ab} = v^{(a)} \cdot v^{(b)}.
\end{align}
Lemma~\ref{lem:nondegeneracy}
is equivalent to saying that the matrix $\Lambda$ is non-singular.
Any basis change of $\calS^\perp / \calS$ induces a congruent transformation $\Lambda \to M^T \Lambda M$ 
where $M$ is the invertible matrix of the basis change.
We consider equivalence classes of $\Lambda$ under the congruent transformations.
\begin{lemma}[Classification of symmetric forms over $\FF_2$]
A non-degenerate symmetric form over $\FF_2$ is equivalent to one of the two non-equivalent choices:%
\footnote{
Any finite dimensional symmetric forms 
over the binary field
is the intersection form of a closed topological surface $X$,
defined by the cup product $H^1(X;\FF_2) \times H^1(X;\FF_2) \to H^2(X;\FF_2) \cong \FF_2$ of cohomology.
Indeed, $I_n$ corresponds to the connected sum of $n$ copies of
$\mathbb R P^2$, and $\lambda_n$ the connected sum of $n/2$ copies of 2-torus.
The classification in the lemma is the orientability of a given surface, and Eq.~\eqref{eq:2RP2-to-torus}
is expressing the fact that two $\mathbb R P^2$'s can be turned into a torus in the presence of another $\mathbb R P^2$.
We thank Michael Freedman for pointing this out.
}
\begin{align}
I_n = \begin{pmatrix} 1 & & \\ & \ddots & \\ & & 1 \end{pmatrix}, \quad
\lambda_n = I_{n/2} \otimes \begin{pmatrix} 0 & 1 \\ 1 & 0 \end{pmatrix} .
\end{align}
\label{lem:dot-classification}
\end{lemma}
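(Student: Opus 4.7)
The plan is to induct on $n$, splitting according to whether the map $q(v) := v \cdot v$ vanishes identically on $V = \FF_2^n$. The crucial observation is that in characteristic $2$,
\[
(v+w) \cdot (v+w) = v \cdot v + 2(v \cdot w) + w \cdot w = v \cdot v + w \cdot w,
\]
so $q$ is a \emph{linear} functional on $V$. Whether $q \equiv 0$ is therefore basis-invariant, and this dichotomy will turn out to distinguish $\lambda_n$ from $I_n$.

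Next I would run both cases of the induction. If $q \not\equiv 0$, pick $v_1$ with $q(v_1) = 1$; then $v_1 \notin v_1^\perp$, so $V = \langle v_1 \rangle \oplus v_1^\perp$ as an orthogonal direct sum, and a short argument shows the restriction to $v_1^\perp$ remains non-degenerate: any $w \in v_1^\perp$ orthogonal to all of $v_1^\perp$ is automatically orthogonal to $v_1$, hence to all of $V$, so $w = 0$. I then recurse on the $(n-1)$-dimensional space $v_1^\perp$. If instead $q \equiv 0$, pick any nonzero $v_1$ and use non-degeneracy to find $v_2$ with $v_1 \cdot v_2 = 1$; since $q(v_1) = q(v_2) = 0$, the plane $\langle v_1, v_2 \rangle$ contributes a $\lambda_2$ block, and one checks $\langle v_1, v_2 \rangle \cap \langle v_1, v_2 \rangle^\perp = \{0\}$ so the complement splits off orthogonally, inherits non-degeneracy, and remains alternating; recurse. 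Combining, every non-degenerate symmetric form is congruent to $I_k \oplus \lambda_{n-k}$ for some $0 \le k \le n$, with $n-k$ even.

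Finally I would collapse $I_k \oplus \lambda_{n-k}$ for $k \ge 1$ all the way down to $I_n$ using the key identity $I_1 \oplus \lambda_2 \cong I_3$, verified by exhibiting the basis $m_1 = (1,1,1)$, $m_2 = (1,1,0)$, $m_3 = (0,1,1)$ of $\FF_2^3$, whose Gram matrix with respect to the standard dot product is exactly $I_1 \oplus \lambda_2$. Iterating,
\[
I_k \oplus \lambda_{2m} \cong I_{k-1} \oplus (I_1 \oplus \lambda_2) \oplus \lambda_{2(m-1)} \cong I_{k+2} \oplus \lambda_{2(m-1)} \cong \cdots \cong I_{k+2m}.
\]
Inequivalence of $I_n$ and $\lambda_n$ for even $n$ then follows because $q(v) = \sum_i v_i$ is nontrivial for $I_n$ but identically zero for $\lambda_n$, and vanishing of $q$ is basis-invariant. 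The step I expect to be the main obstacle is precisely this congruence $I_1 \oplus \lambda_2 \cong I_3$: the rest is a standard Gram–Schmidt / symplectic-basis induction, but without this absorption identity the family $\{I_k \oplus \lambda_{n-k}\}_{k \ge 0}$ would appear to give a much finer list of classes than the two claimed representatives.
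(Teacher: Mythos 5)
Your proof is correct and follows essentially the same route as the paper's: first reduce to a direct sum $I_k\oplus\lambda_{n-k}$ (you do this by orthogonally splitting off one-dimensional anisotropic lines or hyperbolic planes, which is the subspace-level version of the paper's Gaussian elimination on the Gram matrix), then absorb the hyperbolic blocks via the congruence $I_1\oplus\lambda_2\cong I_3$, which is exactly the paper's Eq.~\eqref{eq:2RP2-to-torus}. Your inequivalence argument (linearity of $v\mapsto v\cdot v$ in characteristic~2, hence basis-invariance of its vanishing) is also the same distinction the paper draws between all vectors being self-orthogonal or not.
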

\begin{proof}
The two options are not equivalent since $\lambda_n$ means that the every vector is self-orthogonal,
whereas $I_n$ implies that not every vector is self-orthogonal.
For completeness, 
we give an elementary algorithmic proof by manipulating symmetric matrices.

First, we claim that any symmetric matrix can be brought to a direct sum of $I_p$
and $\lambda_q$ for some $p \ge 0$ and $q \ge 0$, where $q$ is even.
If there is a nonzero diagonal element one can bring this to the top-left by permutation.
Gaussian elimination on the first column and row reveals that $I_1$ is a direct summand.
Induction gives a direct summand $I_p$, and we are left with a symmetric matrix $\Lambda'$ with the zero diagonal.
Any column cannot be zero since $\Lambda'$ is non-singular,
and thus some permutation brings $1$ to $(2,1)$ and $(1,2)$ entries of $\Lambda'$.
Gaussian elimination on the first and second columns and rows reveals a direct summand $\lambda_2$.
By induction, our first claim is proved.

The second claim is that $I_{p+2} \oplus \lambda_{q-2} \cong I_{p} \oplus \lambda_{q}$ whenever $p,q > 0$,
whose proof is immediate:
\begin{align}
 \begin{pmatrix}
 	1 & 1 & 1 \\ 1 & 1 & 0 \\ 1 & 0 & 1
 \end{pmatrix} 
 \begin{pmatrix}
 	1 &  & \\
 	& 1 & \\
 	& & 1
 \end{pmatrix}
 \begin{pmatrix}
 	1 & 1 & 1 \\ 1 & 1 & 0 \\ 1 & 0 & 1
 \end{pmatrix}
 =
 \begin{pmatrix}
 	1 & 0 & 0 \\
 	0 & 0 & 1 \\
 	0 & 1 & 0
 \end{pmatrix}
\label{eq:2RP2-to-torus}
\end{align}
Therefore, whenever $p > 0$, we have $I_p \oplus \lambda_q \cong I_{p+q}$.
If $p = 0$, there is nothing more to prove.
\end{proof}

The classification motivates the following notion of bases.
\begin{definition}\label{def:magicBasis}
Given a null subspace $\calS \subseteq \FF_2^n$,
a basis of $\calS^\perp / \calS$ is called {\bf $(p,q)$-magic}
if the symmetric matrix $\Lambda$ representing 
the dot product on $\calS^\perp / \calS$ among the basis vectors
is equal to $I_p \oplus \lambda_q$ for some $p \ge 0$ and $q \ge 0$.
We call a magic basis {\bf normal} if $q = 0$, or {\bf hyperbolic} if $p =0$.
\end{definition}

We summarize the results of this section into a theorem.
\begin{theorem}\label{thm:magicBasis}
For any self-orthogonal subspace $\calS \subseteq \FF_2^n$,
there exists a $(p,q)$-magic basis for $\calS^\perp / \calS$, 
where $p + q = \dim_{\FF_2} \calS^\perp / \calS$.
If $p>0$ and $q>0$, then a $(p+2,q-2)$-magic basis exists.
\end{theorem}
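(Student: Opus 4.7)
The plan is to assemble this theorem directly from the two lemmas already proved in the subsection, so no new conceptual ingredient is needed; the work is just unpacking how the classification of symmetric forms applies to the concrete quotient $\calS^\perp/\calS$.

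First, I would invoke Lemma~\ref{lem:nondegeneracy} to conclude that the bilinear form induced on $\calS^\perp/\calS$ by the ambient dot product on $\FF_2^n$ is non-degenerate. This is the hypothesis required to apply the classification. Then Lemma~\ref{lem:dot-classification} says that any non-degenerate symmetric form over $\FF_2$ is congruent to $I_p\oplus \lambda_q$ for some $p,q\ge 0$ with $p+q$ equal to the dimension of the underlying space. Choosing a basis of $\calS^\perp/\calS$ realizing this congruence produces, by Definition~\ref{def:magicBasis}, a $(p,q)$-magic basis with $p+q=\dim_{\FF_2}\calS^\perp/\calS$. This settles the first claim.

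For the second claim, suppose a $(p,q)$-magic basis exists with $p>0$ and $q>0$. I would split off one $I_1$ summand and one $\lambda_2$ summand, leaving the rest $I_{p-1}\oplus\lambda_{q-2}$ untouched. On the three-dimensional block spanned by these chosen generators, the Gram matrix is $I_1\oplus\lambda_2 = \dm(1,0,0)\oplus\begin{pmatrix}0&1\\1&0\end{pmatrix}$, i.e.\ the right-hand side of Eq.~\eqref{eq:2RP2-to-torus}. The explicit congruence displayed there (reading Eq.~\eqref{eq:2RP2-to-torus} in reverse) shows this block is congruent to $I_3$. Applying the corresponding basis change only inside this three-dimensional block and keeping the remaining basis vectors fixed yields a Gram matrix $I_3\oplus I_{p-1}\oplus \lambda_{q-2}=I_{p+2}\oplus\lambda_{q-2}$, which is precisely a $(p+2,q-2)$-magic basis.

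There is no real obstacle: the only thing to be careful about is that the basis change is applied only within the chosen three-dimensional block, which preserves all cross Gram entries with the untouched basis vectors (they are already zero, since the $(p,q)$-magic basis is block-diagonal). The heavy lifting — classification, the identity $I_1\oplus\lambda_2\cong I_3$, and non-degeneracy of the quotient form — has already been done, so this theorem functions mainly as a convenient packaging of those results in the language of magic bases that later sections will use.
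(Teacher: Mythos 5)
Your proof is correct and is essentially the paper's own argument: the paper presents Theorem~\ref{thm:magicBasis} explicitly as a summary of Lemma~\ref{lem:nondegeneracy} and Lemma~\ref{lem:dot-classification} (whose proof contains both the $I_p\oplus\lambda_q$ reduction and the congruence in Eq.~\eqref{eq:2RP2-to-torus}), which is exactly what you unpack. The only blemish is notational: $I_1\oplus\lambda_2$ is the $3\times 3$ matrix with first diagonal entry $1$ and an off-diagonal $2\times 2$ block, not ``$\dm(1,0,0)$'' direct-summed with $\lambda_2$.
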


\subsection{CSS codes from self-orthogonal matrices}

It is standard to associate a bit string $v = (v_1, \ldots, v_n)$ to a Pauli operator:
$X(v) = X_1^{v_1} \cdots X_n^{v_n}$ where $X_j$ is the Pauli $\sigma^x$ on qubit $j$,
and $Z(v) = Z_1^{v_1} \cdots Z_n^{v_n}$ where $Z_j$ is the Pauli $\sigma^z$ on qubit $j$.
The commutation relation is that
\begin{align}
 X(v) Z(w) = (-1)^{v \cdot w} Z(w) X(v).
\end{align}
The CSS construction of quantum codes applies to a self-orthogonal (null) subspace $\calS \subseteq \FF_2^n$:
For every vector $v \in \calS$, we define an $X$-stabilizer $X(v)$, as well as $Z$-stabilizer $Z(v)$.
The set of equivalence classes of $X$-type ($Z$-type) logical operators
is then in one-to-one correspondence with $\calS^\perp / \calS$.
The number of logical qubits is thus 
\begin{align}
k = \dim_{\FF_2} \calS^\perp / \calS = n - 2 \dim_{\FF_2} \calS.
\end{align}

We encode logical qubits 
by choosing a complete set of logical operators $\tilde X^{(j)}$ and $\tilde Z^{(j)}$ as follows.
Choose a $(p,q)$-magic basis $\{ v^{(1)}, \ldots, v^{(p)}, w^{(1)}, \ldots, w^{(q)} \}$ of $\calS^\perp/\calS$.%
\footnote{
Here we have abused notation to denote an equivalence class (a member of the quotient space) by a representative.
}
Then, we define
\begin{align}
&\begin{cases}
 \tilde X^{(i)} &= X(v^{(i)})\\
 \tilde Z^{(i)} &= Z(v^{(i)}) 
\end{cases}
&\text{ for } i = 1,\ldots, p, \nonumber
\\
&\begin{cases}
 \tilde X^{(p+2j-1)} &= X(w^{(2j-1)})\\
 \tilde Z^{(p+2j-1)} &= Z(w^{(2j)})\\	
\end{cases}
\quad \begin{cases}
 \tilde X^{(p+2j)} &= Z(w^{(2j-1)})\\
 \tilde Z^{(p+2j)} &= X(w^{(2j)})	
\end{cases}
&\text{ for } j = 1,\ldots,q/2. \label{eq:DefLOP}
\end{align}
By definition of the magic basis,
these logical operators obey the canonical commutation relation of Pauli operators on $k$ qubits:
\begin{align}
  \tilde X^{(a)} \tilde Z^{(b)}= (-1)^{\delta_{ab}} \tilde Z^{(b)} \tilde X^{(a)} .
\end{align}
Note that the commutation relation can be realized 
with arbitrary signs $\pm$ in the choice of the logical operators,
but induced Clifford logical operators will depend on the signs.
We enforce \eqref{eq:DefLOP} in order for the transversal Hadamard $\bar H = H^{\nin}$
to be the logical Hadamard $\prod_{a=1}^{\kin} \tilde H^{(a)}$.

We have defined CSS codes based on self-orthogonal subspaces over $\FF_2$:
\begin{theorem}
Let $\calS \subseteq \FF_2^n$ be a self-orthogonal subspace 
with a $(p,q)$-magic basis of $\calS^\perp / \calS$.
Then, there exists a CSS code on $n$ qubits with $p+q$ logical qubits
and a choice of logical operators such that transversal Hadamard $H^{\otimes n}$
implements the logical Hadamards for the logical qubits $1,\ldots,p$, and simultaneously 
the swaps between the logical qubit $p+2j-1$ and $p+2j$ where $j = 1,\ldots, q/2$.
\label{thm:magicCSScodes}
\end{theorem}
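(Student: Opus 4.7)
The plan is to take the CSS code whose $X$- and $Z$-stabilizer groups are both generated by $\{X(s),Z(s):s\in\calS\}$ (this is well-defined because $\calS$ is self-orthogonal, so all these operators commute), and then simply exhibit the logical operators $\tilde X^{(a)},\tilde Z^{(a)}$ using the given $(p,q)$-magic basis exactly as prescribed by Eq.~\eqref{eq:DefLOP}. Everything else is then a direct verification using two basic facts: the commutation relation $X(v)Z(w)=(-1)^{v\cdot w}Z(w)X(v)$, and the fact that conjugation by $H^{\otimes n}$ sends $X(v)\mapsto Z(v)$ and $Z(v)\mapsto X(v)$ for every $v\in\FF_2^n$.

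First I would check that the specified operators really represent $p+q$ logical qubits. Each $v^{(i)}$ and $w^{(j)}$ lies in $\calS^\perp$, so $X(v^{(i)}),Z(v^{(i)}),X(w^{(j)}),Z(w^{(j)})$ all commute with every stabilizer, hence descend to operators on the code space. Since the basis is assumed magic, the Gram matrix of dot products on $\{v^{(1)},\ldots,v^{(p)},w^{(1)},\ldots,w^{(q)}\}$ is $I_p\oplus\lambda_q$. Writing $\Lambda=I_p\oplus\lambda_q$ block by block, one computes directly from Eq.~\eqref{eq:DefLOP} that $\tilde X^{(a)}\tilde Z^{(b)}=(-1)^{\delta_{ab}}\tilde Z^{(b)}\tilde X^{(a)}$: in the normal block the $\Lambda^{ii}=1$ entries give the canonical anticommutation of $X(v^{(i)})$ and $Z(v^{(i)})$, while the zero off-diagonal entries give commutation; in each hyperbolic block, the off-diagonal $\Lambda^{2j-1,2j}=1$ gives exactly the needed anticommutation between $\tilde X^{(p+2j-1)}=X(w^{(2j-1)})$ and $\tilde Z^{(p+2j-1)}=Z(w^{(2j)})$, and similarly for $\tilde X^{(p+2j)}=Z(w^{(2j-1)})$ and $\tilde Z^{(p+2j)}=X(w^{(2j)})$, while the pair $\tilde X^{(p+2j-1)},\tilde Z^{(p+2j)}$ are both $X$-type and hence commute, and analogously for $\tilde X^{(p+2j)},\tilde Z^{(p+2j-1)}$. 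The block-diagonal structure of $\Lambda$ makes all cross-block pairs commute. This confirms that Eq.~\eqref{eq:DefLOP} defines a complete set of logical Paulis on $p+q$ logical qubits.

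Second I would track the action of the transversal Hadamard. For the normal part, $H^{\otimes n}$ sends $\tilde X^{(i)}=X(v^{(i)})\mapsto Z(v^{(i)})=\tilde Z^{(i)}$ and vice versa, which is precisely the logical Hadamard on qubit $i$. For each hyperbolic pair, $H^{\otimes n}$ sends
\begin{align}
\tilde X^{(p+2j-1)}=X(w^{(2j-1)})&\mapsto Z(w^{(2j-1)})=\tilde X^{(p+2j)},\\
\tilde Z^{(p+2j-1)}=Z(w^{(2j)})&\mapsto X(w^{(2j)})=\tilde Z^{(p+2j)},
\end{align}
and symmetrically swaps $\tilde X^{(p+2j)}\leftrightarrow\tilde X^{(p+2j-1)}$ and $\tilde Z^{(p+2j)}\leftrightarrow\tilde Z^{(p+2j-1)}$. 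Exchanging the $X$ and $Z$ operators of two qubits simultaneously is exactly the SWAP between those qubits (up to an overall sign convention which is fixed by the choice of encoded $\ket{+\cdots+}$), so $H^{\otimes n}$ indeed realizes the simultaneous swap of the $q/2$ hyperbolic pairs together with a logical Hadamard on each of the $p$ normal qubits.

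The main obstacle is really only bookkeeping: there is no deep step, because Theorem~\ref{thm:magicBasis} supplies the magic basis, and once Eq.~\eqref{eq:DefLOP} is adopted, the CSS commutation rule and $H\!X\!H=Z$ do all the work. The one mildly subtle point to watch is confirming that no extra global signs appear that would turn a swap into, say, a swap composed with a logical $Z\otimes Z$; this is handled by noting that the stabilizer group is invariant under $H^{\otimes n}$ and that the logical operators in Eq.~\eqref{eq:DefLOP} are themselves literal tensor products of $X$'s and $Z$'s with no $Y$'s or sign prefactors, so their Hadamard images are again literal $X/Z$ tensor products matching the target logical operators.
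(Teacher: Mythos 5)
Your proof is correct and follows essentially the same route as the paper, which establishes the theorem by the construction surrounding it: take the CSS code defined by $\calS$, adopt the logical operators of Eq.~\eqref{eq:DefLOP} built from the $(p,q)$-magic basis, check the canonical commutation relations from the Gram matrix $I_p\oplus\lambda_q$, and read off the action of $H^{\otimes n}$ from $X(v)\leftrightarrow Z(v)$. Your added remarks on sign bookkeeping and on why the induced logical Clifford is exactly the swap (rather than swap times a logical Pauli) are consistent with, and slightly more explicit than, the paper's presentation.
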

We will call a weakly self-dual CSS code normal if a normal magic basis exists,
and hyperbolic otherwise.
It is possible for a normal code 
to have an even number of logical qubits, an even number of physical qubits,
and an even distance.
Every hyperbolic code, however, must have an even number of logical qubits,
an even number of physical qubits, and an even distance.
For instance, in Sec.~\ref{sec:basic},
the Steane code $[[7,1,3]]$ and
the $[[17,1,5]]$ color code are normal.
The $[[4,2,2]]$ code and the $[[16,6,4]]$ are hyperbolic.
We have used $(0,2)$-magic basis for $[[4,2,2]]$,
and $(0,6)$-magic basis for $[[16,6,4]]$.
The ``$H$-code'' by Jones~\cite{Jones2012}
is a normal code with parameters $[[k+4,k,2]]$ where $k$ is even.
Below, we will mostly use normal codes with $(p,0)$-magic basis
for distillation protocols.

We note that the CSS codes derived from a self-orthogonal matrices
are not too restrictive.
By representing each qubit in any stabilizer code of parameters $[[n,k,d]]$ by Majorana modes,
we obtain a weakly self-dual CSS code of parameters $[[4n,2k,2d]]$~\cite{BravyiLeemhuisTerhal2010Majorana}.
We will briefly review this mapping in Section~\ref{innerconstruct},
where we will also present other families of such codes with improved rate.

\section{Coding Theory and Asymptotic Performance}
\label{sec:asymptotic}

\subsection{Asymptotic Performance}
In this section we consider the asymptotic properties of the class of protocols defined above, 
for appropriate choice of inner and outer codes.  
We ignore all possibilities of pipelining,
and use only a single inner and outer code to define each protocol;
this will reduce the question of asymptotic properties 
to the question of the existence of code families with certain properties.

``Asymptotic" will refer to one of two limits.
In the first limit, we consider a family of protocols parametrized by $d$,
the order of reduction in error.
An instance in the family reduces error probability from 
$\epsilon$ to a constant times $\epsilon^d$ in the limit of small $\epsilon$.
We prove that
\begin{theorem}
\label{thm:dasymptotic}
There is a family of protocols parametrized by an integer $d \ge 1$
to obtain a $d$-th order reduction in error,
using a total of $\Theta(d)$ physical qubits,
producing $\nout = \Theta(d)$ magic states.
The total number of $T$ gates used is $\nT = \Theta(d^2)$,
so that the number of $T$ gates per magic state is $\Theta(d)$.
The $T$-gate depth of the circuit is also $\Theta(d)$, where the $T$-gate depth refers to the circuit depth assuming that an arbitrary Clifford can be executed in depth $1$.
\end{theorem}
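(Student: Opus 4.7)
The plan is to reduce the theorem to the existence of two asymptotic code families. The inner family consists of weakly self-dual CSS codes with parameters $[[\nin, \kin, d_{\rm in}]]$ where $\nin, \kin, d_{\rm in} = \Theta(d)$ and satisfying the magic-basis hypothesis of Theorem~\ref{thm:magicCSScodes}; the outer family consists of classical linear codes on $\nout = \kin$ bits with minimum distance $d_{\rm out} \ge d$ whose parity-check matrix is redundant in the sense that every bit participates in $\Omega(d)$ checks.

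For the inner family, I would invoke the Majorana-mode reduction (to be reviewed in Sec.~\ref{innerconstruct}), which converts any $[[n,k,d]]$ stabilizer code into a weakly self-dual CSS code of parameters $[[4n, 2k, 2d]]$ compatible with Theorem~\ref{thm:magicCSScodes}. Feeding in any asymptotically good stabilizer code (for instance one obtained from the quantum Gilbert--Varshamov bound) then produces the required family. For the outer family, a classical linear code of constant rate and linear distance suffices; the redundancy condition can be achieved by simply including a constant number of repeated copies of the parity-check rows, which preserves rate and distance up to constants.

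Given these two codes, the protocol is the natural generalization of Sec.~\ref{1664}: take $\nout = \kin$ noisy magic states and encode them as the logical qubits of a single block of the inner code, plus one ancilla per check. For each parity check of the outer code, perform an $H^{\otimes w}$-measurement on the designated subset of logical qubits by combining the transversal Hadamard of the inner code with the controlled-swap trick of Eq.~\eqref{eq:CSwap-to-CProduct} and the decomposition Eq.~\eqref{eq:decomp-H}, then decode the inner code and post-select on all checks accepting. The error analysis then follows the small $[[16,6,4]]$ example: each $H$-measurement introduces logical (target) error at rate $O(\epsilon^{d_{\rm in}}) = O(\epsilon^d)$ by inner-code distance, and measurement (control) error at rate $O(\epsilon^2)$ per round by Eq.~\eqref{eq:errorInControl}. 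Redundancy forces any single undetected input error to require $\Omega(d)$ simultaneous control misfires, again $O(\epsilon^d)$, and higher-weight input error patterns are ruled out by $d_{\rm out} \ge d$. Resource counting is then immediate: $\Theta(\nin) = \Theta(d)$ physical qubits, and $\Theta(d)$ measurement rounds each costing $\Theta(\nin) = \Theta(d)$ noisy $T$ gates yields $\nT = \Theta(d^2)$; executing the $T$ gates within a single round in parallel while sequencing the rounds gives total $T$-depth $\Theta(d)$.

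The main obstacle is the inner code: we need simultaneously weakly self-dual CSS structure, constant rate, linear distance, and magic-basis compatibility with Theorem~\ref{thm:magicCSScodes}. The last condition is automatic once the first three hold, and the Majorana reduction supplies weakly self-dual structure while preserving constant rate and linear distance up to constant factors, so the whole question reduces to the well-known existence of asymptotically good stabilizer codes. The outer code presents no difficulty beyond the mild redundancy requirement, which is a constant-factor overhead.
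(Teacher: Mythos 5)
Your inner-code construction and the overall architecture match the paper's approach (the Majorana reduction is one of the routes the paper itself reviews in subsection~\ref{innerconstruct}, and magic-basis compatibility is indeed automatic by Theorem~\ref{thm:magicBasis}), but there is a genuine gap in your outer code. The property actually needed is not ``distance $\ge d$ plus every bit in $\Omega(d)$ checks''; it is what the paper calls $(\tilde d,s)$-sensitivity with $\tilde d,s=\Theta(d)$, i.e.\ \emph{every} nonzero error vector $v$ of weight up to $\tilde d$ must violate $|Mv|\ge s$ checks (more precisely $2|Mv|+|v|\ge d$ for all nonzero $v$, as in Lemma~\ref{senslemma}). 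Your analysis handles $|v|=1$ via the column-weight condition and then dismisses higher-weight input errors as ``ruled out by $d_{\rm out}\ge d$,'' but distance only guarantees $|Mv|\ge 1$ for $2\le |v|\le d-1$. A single violated check can be masked by one measurement error, which costs only two faulty $T$ gates inside the inner code, so such a pattern escapes at order $\epsilon^{|v|+2}$ --- e.g.\ order $\epsilon^{4}$ for a weight-$2$ input error violating one check --- which destroys $d$-th order suppression for $d>4$. Repeating each parity-check row a constant number of times does not repair this: it multiplies $|Mv|$ by a constant, and two bits can participate in nearly identical check sets so that $|M(e_i+e_j)|$ stays $O(1)$ even when every column has weight $\Omega(d)$ and the code has linear distance.

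The fix the paper uses is Lemma~\ref{transpose}: take a \emph{generator} matrix of a classical code with good rate and linear distance and use its transpose (columns $v_1,\dots,v_k$ together with $v_1+\cdots+v_k$) as the parity-check matrix of the outer code. Then $Mv$ is itself a nonzero codeword of the good code for every $v\notin\{0,\vec 1\}$, so $|Mv|\ge \Theta(d)$ uniformly, which is exactly the sensitivity required; a direct randomized construction (Lemma~\ref{random}) also works. One smaller point: for a normal inner code the controlled transversal Hadamard measures $H^{\otimes \kin}$ on \emph{all} logical qubits, not on an arbitrary designated subset, so the protocol re-encodes the $w=\kin$ magic states relevant to each check into a fresh inner-code block rather than holding all $\nout$ states in one block; this does not change the resource counts, but your single-block description as stated only works for hyperbolic codes with checks aligned to swap pairs.
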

In the second limit, we fix $d$ and consider a family of protocols parametrized by $\nout$, the number of magic states produced.
We prove that
\begin{theorem}
\label{thm:noutasymptotic}
For any odd $d\geq 5$, there is a family of protocols 
using $\nout \cdot (1+o(1))$ physical qubits, producing $\nout$ magic states 
with a $d$-th order reduction in error.  The total number of $T$ gates used is
\begin{align}
\nT= d(1+o(1)) \nout. \label{idealasymptotic}
\end{align}
\end{theorem}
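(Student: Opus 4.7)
The plan is to instantiate the general inner/outer protocol with a \emph{normal} weakly self-dual CSS inner-code family whose rate tends to one, paired with a biregular sparse classical outer code. I would first invoke the asymptotic inner-code constructions to be developed later in Section~\ref{sec:asymptotic} to obtain, for any fixed $d$, a sequence $[[\nin,\kin,d_{\rm in}]]$ of normal weakly self-dual CSS codes with $d_{\rm in}\ge d$ and $\nin/\kin\to 1$ as $\kin\to\infty$, in which, by Theorem~\ref{thm:magicCSScodes}, the transversal Hadamard equals the logical $H^{\otimes\kin}$ on some $(\kin,0)$-magic basis. I would let $\kin$ grow with $\nout$ but keep $\kin=o(\nout)$, so that the inner-code rate approaches one while the ancilla overhead stays sublinear in $\nout$.

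For the outer code I would choose a classical linear code $C_{\rm out}$ of length $\nout$ whose parity-check matrix is biregular with column weight $(d-1)/2$ and row weight $\kin$, and with classical distance at least $d$. For $(d-1)/2\ge 3$ (i.e.\ $d\ge 7$) such codes exist for all sufficiently large $\nout$ by the Gallager LDPC ensemble, which gives distance $\Omega(\nout)$ with high probability; for $d=5$, column weight $2$, it suffices to take any biregular bipartite graph of girth at least $5$ on the appropriate number of vertices, such as the incidence graph of a generalized quadrangle. The protocol itself then prepares one noisy magic state per coordinate, and for each row of the outer parity-check matrix loads the $\kin$ indicated inputs into a fresh inner-code block, runs the $\tilde H^{\otimes\kin}$-measurement routine of Section~\ref{sec:basic}, and aborts unless every outcome is $+1$.

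The rest is mechanical counting and error analysis. The $\nout$ data qubits plus $\nin+O(1)=o(\nout)$ qubits for the currently active inner block give $\nphys=\nout(1+o(1))$. The number of checks is $r=(d-1)\nout/(2\kin)$; each consumes $2\nin(1+o(1))=2\kin(1+o(1))$ noisy $T$-gates to implement the controlled transversal Hadamard, so together with the $\nout$ noisy $T$'s spent on the inputs one gets $\nT=\nout+r\cdot 2\nin(1+o(1))=d\,\nout(1+o(1))$. For the error order, Eq.~\eqref{eq:errorInControl} gives measurement-outcome error $O(\epsilon^2)$ per check and, since $d_{\rm in}\ge d$, logical-target error $O(\epsilon^d)$ per check; an undetected failure at coordinate $i$ therefore requires either a direct inner-code logical fault (contributing $O(\epsilon^d)$ per output, since each output lies in $(d-1)/2=O(1)$ checks) or an input error together with simultaneous control failures on all $(d-1)/2$ of its incident checks (probability $\epsilon\cdot O(\epsilon^{d-1})=O(\epsilon^d)$), while higher-weight input-plus-measurement error patterns are controlled by the classical distance of $C_{\rm out}$ via a union bound. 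The main obstacle is therefore not the outer code or the counting, but the coding-theoretic input of Step~1: the existence of normal weakly self-dual CSS code families with rate tending to one and distance bounded below by a prescribed constant $d$. Granted that input, Theorem~\ref{thm:noutasymptotic} follows by assembly.
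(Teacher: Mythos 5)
Your overall architecture is the same as the paper's: a normal weakly self-dual inner code family with $\nin/\kin\to 1$ for fixed distance $\ge d$, a biregular outer check matrix with row weight $\kin$ and column weight $(d-1)/2$, and the identical $T$-count arithmetic $\nT=\nout+\frac{(d-1)\nout}{2\kin}\cdot 2\nin\to d\,\nout$. However, there is a genuine gap in what you require of the outer code. Classical distance $\ge d$ only guarantees $|Mv|\ge 1$ for $0<|v|<d$, whereas the protocol needs the sensitivity condition $2|Mv|+|v|\ge d$ for \emph{every} nonzero $v$ (Lemma~\ref{senslemma}): each unviolated check that ``should'' have fired costs only two $T$-gate faults to fake, so a weight-$2$ input error whose two columns overlap in many checks can escape with total fault weight well below $d$. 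Concretely, for $d=7$ (column weight $3$) a single $4$-cycle in the Tanner graph gives $|v|=2$ with $|Mv|=2$, hence an undetected weight-$6$ event; Gallager-ensemble distance bounds do not exclude such $4$-cycles, and your union-bound appeal to the distance of $C_{\rm out}$ does not close this. Your $d=5$ instance has the same problem: girth $5$ (hence $6$) in the bipartite Tanner graph still admits $6$-cycles, i.e.\ weight-$3$ errors with $Mv=0$, and the incidence graph of a generalized quadrangle has girth exactly $8$, which admits a weight-$4$ error with $Mv=0$; one needs Tanner-graph girth strictly greater than $2(d-1)$.

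The paper closes exactly this gap with Lemma~\ref{bulkoc}: it takes a biregular bipartite graph of girth $>2\tilde d$ with $\tilde d=d-1$ (existence from Ref.~\cite{furedi1995graphs}), and uses the forest structure of the error neighborhood to show $|Mv|\ge\min(s,2s-2)=s=(d-1)/2$ for all $1\le|v|\le d-1$, which combined with the inner distance gives $d$-th order suppression. If you replace your outer-code step by this girth-$>2(d-1)$ construction (or otherwise prove $(d-1,\tfrac{d-1}{2})$-sensitivity, e.g.\ by expurgating short cycles from the LDPC ensemble), the rest of your argument assembles into the theorem as the paper's does.
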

The reason $d$ is odd is that the minimal weight 
of an error that is not caught in the protocol 
due to wrong $H$-measurement outcomes is always odd.

Given one particular protocol with 
$\epsilon_\text{out} = C \epsilon_\text{in}^d$ consuming $\nT / \nout$ $T$ gates per output,
an infinite family of protocols can be defined by concatenation with itself.
For this concatenated family,
the number of $T$ gates to achieve an arbitrarily small error rate $\delta$ in output magic states
scales like $O( (\log 1/\delta )^\gamma)$ where the scaling exponent~\cite{BravyiKitaev2005Magic,BravyiHaah2012Magic} 
is
\begin{align}
\gamma = \log_d( \nT / \nout ).
\end{align}
Smaller values of $\gamma$ reflect asymptotically more efficient distillation protocols.
The triorthogonal codes~\cite{BravyiHaah2012Magic} achieve $\gamma \rightarrow \log_2(3)$,
and ``multilevel'' protocol~\cite{Jones2012} achieves $\gamma \to 1^+$. 
We will comment on this multilevel protocol in Discussion~\ref{sec:discussion}.
It was conjectured that no protocol could achieve $\gamma<1$~\cite{BravyiHaah2012Magic}.
Both families in Theorems~\ref{thm:dasymptotic} and \ref{thm:noutasymptotic} achieve $\gamma \to 1^+$.

We note that the measure $\gamma$ slightly underestimates 
the $T$-count efficiency of the family in Theorem~\ref{thm:dasymptotic}.
In order to achieve an arbitrary small final error rate $\delta$ from a fixed initial error rate, say, $\epsilon = 0.01$,
we can pick a member $P_d$ of the family of error reduction degree $d$ such that $\delta > C_d (\epsilon/2d^2)^d$.
Here $C_d$ is the leading coefficient of the output error probability of the protocol $P_d$,
which is at most the number of ways that weight $d$ errors occur among $\nT = O(d^2)$ $T$ gates;
$C_d \le \alpha d^{2d}$ for some $\alpha > 0$ independent of $d$.
For the condition $\delta > C_d(\epsilon/d^2)^d$, it suffices that $d > (\log(1/\delta) + \log \alpha)/\log(1/\epsilon)$.
We initially distill magic states to suppress the error rate from $\epsilon$ to $\epsilon' = \epsilon/d^2$,
by using a concatenated protocol $P_\text{init}$.
This takes $n_\text{init} = O(\log d)^\gamma$ input magic states per output magic states for some $\gamma > 1$.
We can then feed $P_d$ with the outputs from $P_\text{init}$ at error rate $\epsilon'$.
It follows that
\begin{align}
 \nT / \nout = O(d) \cdot n_\text{init} = O(\log(1/\delta) (\log \log 1/\delta)^\gamma ) 
 \end{align}
 magic states at error rate $\epsilon$ suffice to achieve final accuracy $\delta$.
Thus, the scaling of $\nT / \nout$ is linear in $\log (1/\delta)$ up to a logarithmic correction.
(One can iterate the argument recursively to further slow down the dependency on $1/\delta$.)

Theorem \ref{thm:noutasymptotic} will use normal codes.  The reduced number of $T$ gates required to implement checks with a normal code is essential to obtaining the number of $T$ gates in the theorem (we would need roughly twice as many using hyperbolic codes).  This explains why $d$ is chosen odd.  The case of $d=1$ is of course trivial: no codes are needed.  Thus, the reader may wonder why the case $d=3$ is not used; this is explained further below. 

Theorems \ref{thm:dasymptotic},~\ref{thm:noutasymptotic} will follow almost immediately given certain families of inner and outer codes obeying certain properties of the codes that we define below.
We will prove these theorems given these properties in this subsection and we construct families of inner and outer codes with these properties in subsections \ref{innerconstruct}, \ref{outerconstruct}.

Consider first the inner code.  This code will have
$\kin$ logical qubits and $\nin$ physical qubits.  The distance of the inner code will be at least $d$.
Consider then the effect of errors in the $T$ gates inside the inner code; i.e., in the $T$ gates acting on the encoded state.  To obtain $d$-th order reduction in error, it suffices to consider the case that fewer than $d$ errors occur in such $T$ gates.  Since the inner code distance is at least $d$, these errors cannot produce a logical error.
There is one way, however, in which these errors can have an effect without being detected by the inner code.  It is possible that a pair of errors act inside the inner code, both on $T$ gates acting on the same qubit.  The effect of
these errors is to cause an error in the check being measured by the inner code, i.e., if the check was measuring a given product of $W$ operators specified by the outer code, we instead measure the opposite sign; we call this a ``measurement error".

The possibility of measurement errors affects some of the properties that we require of the outer code.
Of course we need the outer code to have distance at least $d$, as otherwise a pattern of fewer than $d$ errors in the input magic states could cause an undetectable error, but this is not sufficient.  It is necessary that
a pattern of fewer than $d$ errors causes enough checks to be violated so that even a small number of measurement errors
will lead to an error detected by the code.  This is defined by the property of ``sensitivity" that we now define.

The outer code will have $m$ parity checks, encoded in an $m$-by-$\nout$ parity check matrix $M$, where each
row of the matrix indicates a given check.
We can measure rows of this matrix with even weight 
using an hyperbolic inner code and rows with odd weight using a normal inner code.
For simplicity we will either have all rows have even weight 
or have all rows use odd weight so that we can use the same inner code for all checks.
(More generally, one could use both a hyperbolic and a normal code.)
Then, this inner code must have $\kin$ greater than or equal to the maximum row weight of $M$.
The difference between the row weight of $M$ and $\kin$
must be an even number.
In this case, we say that the inner code can implement the checks of the outer code.
\begin{definition}
An $m$-by-$\nout$ parity check matrix $M$ for a classical linear code is said to be
{\bf $(\tilde d,s)$-sensitive} if any nonzero bit vector $v$ of length $\nout$
with $|v| \leq \tilde d$, we have
$|M v| \geq s$.  That is, for any such vector, the number of violated parity checks is at least $s$.
\end{definition}
We emphasize that sensitivity is a property of the check matrix of the outer code, 
rather than the codewords of the outer code, and in some examples the rows of the check matrix may be linearly dependent.
A $(\tilde d,s)$-sensitive parity check matrix is $(\tilde d -1,s)$-sensitive by definition.

\begin{lemma}
\label{senslemma}
Given an $m$-by-$\nout$ parity check matrix $M$  
such that $2|Mv|+|v| \ge d$ for any nonzero $v$
(e.g. $(d-1,\frac{d-1}{2})$-sensitive $M$),
and given an inner code of parameters $[[\nin,\kin,d]]$
that can implement the checks defined by $M$,
the protocol yields $d$-th order reduction in error.
The protocol overall takes $\nout$ noisy magic states,
$2\nin m$ noisy $T$ gates when the inner code is normal,
or $4\nin m$ when hyperbolic,
and outputs $\nout$ magic states.
\end{lemma}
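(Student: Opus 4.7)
The plan has two parts: (i) verify the resource counts, and (ii) establish the $d$-th order error suppression. For (i), each of the $m$ rows of $M$ is measured by one $H$-measurement routine built from the inner code. That routine implements a control-$H^{\otimes \nin}$ on the $\nin$ physical qubits of the inner code. By the identity $H = T Z T^\dagger$ in Eq.~\eqref{eq:decomp-H}, each physical control-$H$ costs two noisy $T$ gates, yielding $2\nin$ $T$ gates per routine for a normal inner code. For a hyperbolic inner code, the logical Hadamard is realized only indirectly, via the sandwich identity Eq.~\eqref{eq:CSwap-to-CProduct}, which requires two copies of the transversal control-$H$; this doubles the count to $4\nin$ per routine. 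Multiplying by $m$ routines gives $2\nin m$ or $4\nin m$ $T$ gates overall, while the external input/output bookkeeping is $\nout$ in and $\nout$ out.

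For (ii), I would classify the undetectable error events by where the faulty $T$ gates land. Following the analysis already used in Section~\ref{sec:basic}, the stochastic error model reduces each noisy $T$ and each noisy input magic state to an independent Pauli $Y$ with probability $\epsilon$. Within a single $H$-measurement routine, a single $Y$ after the control-$Z$ propagates to a Pauli on the inner-code block which, by the distance-$d$ property of the inner code, can be completed into a logical error only by at least $d$ faults; this gives a logical error on the targeted magic states at rate $O(\epsilon^d)$. The only lower-order way for a fault in the routine to escape is the mechanism of Eq.~\eqref{eq:errorInControl}: a pair of $Y$ errors sandwiching the control-$Z$ on the \emph{same} physical qubit has trivial action on the target but deposits a $Z$ on the control ancilla, flipping the sign of the measurement outcome. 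Any such ``measurement error'' costs probability $O(\epsilon^2)$ and is independent across routines.

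I would then lift these local observations to the outer code. A joint fault event consists of a vector $v\in\FF_2^{\nout}$ of faulty input magic states together with a subset $f\in\FF_2^m$ of flipped check outcomes. The apparent syndrome is $Mv + f$, and the event is undetected precisely when $Mv + f = 0$, i.e.\ $f = Mv$. The probability of such an event is at most $O(\epsilon^{|v|+2|f|}) = O(\epsilon^{|v|+2|Mv|})$. Under the hypothesis $|v| + 2|Mv| \ge d$ for all nonzero $v$, every undetected joint fault contributes at most $O(\epsilon^d)$; the residual case $v=0$ is covered by the inner-code logical errors, which are also $O(\epsilon^d)$. This yields the claimed $d$-th order suppression. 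The lemma's equivalent formulation is immediate: $(d-1,(d-1)/2)$-sensitivity of $M$ gives $|Mv|\ge (d-1)/2$ for $0<|v|\le d-1$, hence $2|Mv|+|v|\ge d$, while $|v|\ge d$ trivially suffices.

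The main obstacle I expect is the rigorous justification in the middle paragraph that \emph{the only} sub-$\epsilon^d$ failure mode inside an inner-code routine is the same-qubit $Y$-$Y$ sandwich producing a measurement flip, so that measurement flips truly cost $\epsilon^2$ each and combine multiplicatively across routines. This requires tracking how arbitrary low-weight $Y$-patterns on the noisy $T$ gates push through the (transversal Hadamard or transversal SWAP) structure of the inner code and decompose into (a) stabilizers of the inner code, (b) logical errors (which need weight $\ge d$), and (c) ancilla $Z$'s. Everything else, including the $T$-count arithmetic and the sensitivity implication, is a calculation rather than a genuine difficulty.
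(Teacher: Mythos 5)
Your proposal is correct and follows essentially the same route as the paper's proof: the paper likewise counts $|v|$ input errors plus at least $2|Mv|$ $T$-gate errors needed to flip the $|Mv|$ violated checks, concludes the total is at least $d$, and defers the $T$-count arithmetic to the constructions of Section~\ref{sec:basic}. The propagation fact you flag as the "main obstacle" (that a sub-distance fault pattern commuting with the inner stabilizers either acts trivially or flips the check outcome at cost two) is exactly the point the paper also asserts with reference to Eq.~\eqref{eq:errorInControl} rather than re-deriving.
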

\begin{proof}
Any error pattern inside the inner codes with weight less than $d$ cannot cause a logical error.
Thus, if an error pattern inside the inner code does not violate a stabilizer of the inner code, 
it either has no effect or it leads to an error in measurement of a check of the outer code;
the latter possibility requires at least two errors inside the inner code.
Any input state with $|v| \ge 1$ errors
will violate at least $|Mv| \ge (d - |v|)/2$ checks of the outer code.
If no violation of these checks is detected, 
there must be at least $2|Mv|$ errors on $T$ gates inside the inner code.
Thus, there must be at least $d$ errors in total.

The input $T$ gate and state count is clear from Section~\ref{sec:basic}.
\end{proof}

We now define some asymptotic properties of the codes needed.
\begin{definition}
A family of quantum error correcting codes with increasing number of qubits $n$ has {\bf good rate}
if the number of encoded qubits $k$ is $\Theta(n)$ and has good distance if the distance $d$ is $\Theta(n)$.
\end{definition}

\begin{definition}
Given a family of outer codes with increasing $\nout$, we say that this family has {\bf good sensitivity} if
each code in the family is $(\tilde d,s)$-sensitive for $\tilde d=\Theta(\nout)$ and $s=\Theta(\nout)$.
\end{definition}

\begin{definition}
Given a family of outer codes with increasing $\nout$, we say that this family has {\bf good check rate}
if the parity check matrix is $m$-by-$\nout$ with
$m=\Theta(\nout)$.
\end{definition}

\begin{proof}[Proof of Theorem~\ref{thm:dasymptotic}]
In subsection~\ref{innerconstruct}, 
we show that families of both hyperbolic and normal inner codes 
with good rate and distance exist and 
in subsection~\ref{outerconstruct} 
we show that families of outer codes with good check rate, good sensitivity, and even row weight exist.
Combining these results with Lemma~\ref{senslemma}, Theorem~\ref{thm:dasymptotic} follows.
\end{proof}

\begin{proof}[Proof of Theorem~\ref{thm:noutasymptotic}]
In subsection \ref{innerconstruct}, we show that, for any $d$,
there exist families of both hyperbolic and normal inner codes with increasing $\nin$ such that $\kin/\nin\rightarrow 1$.
To prove this theorem, we will only need the result for normal inner codes.
Consider some code from this family with given $\kin,\nin$.
In subsection \ref{outerconstruct} we show Lemma~\ref{bulkoc} which we reproduce here:
\begin{lemma*}
Given integers $\tilde d, w \ge 1$ and $s \ge 2$,
there exists an $m \times \nout$ parity check matrix $M$ that is $(\tilde d, s)$-sensitive
where $m = \nout \cdot s / w$ and every row of $M$ has weight $w$ exactly.
\end{lemma*}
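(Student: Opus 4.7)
The plan is a probabilistic construction: identify $M$ with the Tanner graph of a biregular bipartite graph, then reduce the sensitivity condition to a unique-neighbor expansion property that a random biregular graph satisfies with positive probability.

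First I would exchange $M$ for the bipartite graph $G$ whose two sides are the $m$ checks (rows) and the $\nout$ variables (columns), with an edge between $i$ and $j$ iff $M_{ij}=1$. The row-weight condition says every check has degree $w$. Applying sensitivity to a weight-one vector $v=e_j$ forces every column to have weight at least $s$, and since the total number of $1$'s in $M$ equals both $mw$ and the sum of the column weights, the identity $mw=\nout\cdot s$ forces every column weight to equal $s$ exactly. Hence without loss of generality $G$ is $(s,w)$-biregular and the whole problem becomes combinatorial.

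Second, I would reduce the sensitivity condition for $|v|\ge 2$ to a unique-neighbor expansion property. For any $S\subseteq[\nout]$ of size $k$, the vector $Mv$ (with $v$ the indicator of $S$) has a $1$ in row $i$ precisely when check $i$ meets $S$ an odd number of times. So $|Mv|$ is at least the number $n_1(S)$ of checks meeting $S$ exactly once. A double count of the $sk$ edges leaving $S$, using that every check not of unique-neighbor type contributes at least two edges, yields the standard inequality $n_1(S)\ge 2|N(S)|-sk$. Therefore, $|Mv|\ge s$ for every nonzero $v$ with $|v|\le\tilde d$ whenever $G$ obeys
\[
|N(S)|\ge \tfrac{s(|S|+1)}{2} \qquad \text{for every } S\subseteq[\nout] \text{ with } 1\le|S|\le\tilde d,
\]
which is automatic when $|S|=1$ by biregularity.

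Third, I would produce such a $G$ by the configuration model: give each variable $s$ stubs and each check $w$ stubs, and match the $\nout s=mw$ stubs by a uniformly random perfect matching. For fixed $k\in\{2,\ldots,\tilde d\}$, a fixed $k$-set $S$, and a fixed $\ell$-set $T$ of checks with $\ell<s(k+1)/2$, the probability that all $sk$ stubs of $S$ match into $T$'s $\ell w$ stubs is bounded by the telescoping ratio and hence by $(\ell/m)^{sk}$. Summing first over $T$ and then over $S$ gives, for each $k$, an expected number of violating subsets of size $k$ of order $O(\nout^{k+\ell-sk})$ with $\ell<s(k+1)/2$, whose exponent is non-positive and strictly negative outside of a handful of small $(s,k)$ combinations. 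Taking $\nout$ sufficiently large, with $w\mid\nout s$ so that $m=\nout s/w$ is an integer, and union-bounding over the finite set $k\in\{2,\ldots,\tilde d\}$ makes the expected total count of violating subsets strictly less than one, so such a $G$ exists.

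The main obstacle will be the borderline parameter regimes (most notably $s=2$, where the first-moment exponent is exactly zero for every $k$, and the isolated case $(s,k)=(3,2)$), in which the expected number of violating subsets is $O(1)$ rather than $o(1)$. I would handle these by the standard deletion trick: sample a slightly larger $(s,w)$-biregular random graph, then delete a few variables along with their incident edges to kill the $O(1)$ violating small subsets, and finally perform a small local adjustment to restore biregularity and the divisibility $w\mid\nout s$; this changes $\nout$ by $o(\nout)$ and preserves all required counting conditions, yielding the desired $(\tilde d,s)$-sensitive $M$ with every row of weight exactly $w$.
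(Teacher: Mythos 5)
Your route is genuinely different from the paper's. The paper also passes to the $(s,w)$-biregular Tanner graph, but then imposes girth $>2\tilde d$: any error set $V$ with $|V|\le\tilde d$ together with its neighbors then induces a forest, and a two-leaves-of-a-tree argument gives at least $2s-2\ge s$ violated checks; existence of large-girth biregular graphs is simply cited from F\"uredi et al. You instead reduce sensitivity to unique-neighbor expansion via $|Mv|\ge n_1(S)\ge 2|N(S)|-s|S|$ and try to realize the expansion with a random configuration-model graph. That reduction is correct, and for $s\ge 4$ (and $s=3$, $k\ge 3$) your first-moment exponent $(k-1)(1-s/2)$ (up to the integrality of $\ell$) is indeed strictly negative, so that part goes through. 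The expansion route is arguably more robust (it would tolerate approximate biregularity), whereas the girth route gets the borderline cases for free.

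The gap is in your handling of the borderline regimes, and it is not a removable technicality of the kind the deletion trick usually fixes. First, note that the lemma's conditions \emph{force} exact $(s,w)$-biregularity (as you yourself observe: sensitivity on weight-one vectors gives column weight $\ge s$, and $mw=\nout s$ then forces equality), so after deleting variables you cannot simply "restore biregularity by a small local adjustment": the deficient checks need new edges, and any new edge pushes some column weight above $s$, which is forbidden. Second, for $s=2$ the statement you need is exactly that the $w$-regular check graph $H$ (variables = edges of $H$) has no cycle of length $\le\tilde d$; a random regular graph has a $\Theta(1)$ expected number of such cycles and contains one with probability bounded away from $0$, so the first moment cannot give existence, and removing them while preserving regularity is precisely the Erd\H{o}s--Sachs switching argument (or the Bollob\'as/Wormald Poisson small-subgraph computation showing $\Pr[\text{girth}>g]\to e^{-\sum\lambda_i}>0$), neither of which is what you sketched. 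The same applies to the isolated case $(s,k)=(3,2)$. You also leave unaddressed the multi-edges of the configuration model, which would silently reduce some row weights below $w$. So either supply the switching/Poisson argument for the borderline cases, or do what the paper does and cite the existence of biregular graphs of girth $>2\tilde d$, which subsumes all of them at once.
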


Choose $w=\kin$.
Choosing $\tilde d\geq d-1$ and $s= (d-1)/2$, this gives us an outer code such that the checks can be performed by the given inner code and we need to perform $\nout (s/w)=\nout(s/\kin)$ checks with the inner code.
Each such check with the inner code requires using $2 \nin$ $T$ gates, so that the total number of $T$ gates needed to perform the checks with the inner code is equal to $2 \nout s (\nin/\kin)$.  Additionally, we need to perform $\nout$
$T$ gates to create the input magic states to the outer code.  Thus, the total number of $T$ gates is
\begin{align}
\nT&= \nout (1+2s \nin/\kin) \nonumber \\
&= \nout (d + (d-1)(\nin/\kin-1)).
\end{align}
Taking $\nin$ large so $\nin/\kin \rightarrow 1$, we conclude $\nT \rightarrow \nout d$. 
\end{proof}

We can now see better why we needed $d\geq 5$ in Theorem~\ref{thm:noutasymptotic}.
This is because for $d=3$, 
we have $s=1$ and Lemma~\ref{bulkoc} does not apply.
The reader will see later why the case $s=1$ is excluded from that lemma;
roughly, this is because in this case, 
each bit participates in only a single check 
and we would lack certain expansion properties for a certain graph defined later.

\subsection{Inner Codes}
\label{innerconstruct}
In this subsection, we give asymptotic constructions of inner codes.
There are at least two constructions of weakly self-dual codes with good rate and distance in the literature.  We review these, before giving an alternative probabilistic proof which has some advatanges.

First, in Ref.~\cite{CalderbankShor1996Good}, it is shown that given any ratio $d/n$, one can find a family of weakly self-dual CSS codes with $n$ qubits and distance $d$ and given ratio $d/n$ achieving a rate
$k/n \rightarrow 1-2H_2(d/n)$, where $H_2$ is the binary entropy function.
The codes found in that paper all are hyperbolic codes.  However, we can obtain normal codes from them by a ``puncturing procedure" (see also \cite[Sec.~3.5]{gottesmanthesis}):
\begin{definition}
Given a hyperbolic weakly-self-dual CSS code $C$ on $n$ qubits with $k$ logical qubits, define a ``punctured code" $C'$ as follows.  Choose a qubit $i$ (the code $C'$ may depend upon the choice of $i$).  Write the stabilizer generators of $C$ such that only one $X$-type and one $Z$-type generator is supported on $i$.  Define $C'$ by removing qubit $i$ and removing the stabilizer generators support on $i$.  Then $C'$ has $n'=n-1$ qubits and $k+1$ logical qubits.  The code $C'$ is a normal code by construction.
\end{definition}
If $C$ is non-degenerate with distance $d$, then $C'$ has distance $d' \geq d-1$.  More generally, $d'+1$ is greater than or equal to the minimum weight of
an operator which commutes with the stabilizer group of $C$, because given an $X$-type logical operator $O$ in $C'$ then either $O$ or $O X_i$ must commute with the stabilizer group of $C$.
Indeed, one may show that puncturing the codes of Ref.~\cite{CalderbankShor1996Good} reduces the distance by at most $1$.

The only disadvantage of this proof is that it is a greedy proof that we not know how to implement efficiently.
While it would be desirable to find an explicit family of codes achieving this rate, we do not know how to do this.  However, another construction in the literature is a randomized construction
which allows us to give codes which, with high probability, have the desired distance.  Unfortunately, this construction
will only achieve
$k/n \rightarrow 1/2-H_2(2d/n)$.
This construction uses 
a general method to construct weakly self-dual CSS codes in Ref.~\cite{BravyiLeemhuisTerhal2010Majorana}.

Consider a stabilizer code $C_{qubit}$ which acts on $n$ physical qubits and has $k$ logical qubits and distance $d$.
From this code, one can derive a code for Majorana fermions $C_{Majorana}$ which acts on $4n$ Majorana modes 
and has $k$ logical qubits and distance $2d$, 
where now the distance refers to minimum weight of a product of Majorana operators that is a logical operator.
The code $C_{Majorana}$ is derived in the following way: 
For each physical qubit of $C_{qubit}$, one introduces four Majorana modes, 
$\gamma_0,\gamma_1,\gamma_2,\gamma_3$, 
and declares that the product $\gamma_0 \gamma_1 \gamma_2 \gamma_3$ is a stabilizer of $C_{Majorana}$.
For each stabilizer of $C_{qubit}$, 
one defines a stabilizer of $C_{Majorana}$ 
by replacing an operator 
$X$ on a qubit by $i \gamma_0 \gamma_1$,
$Y$ by $i \gamma_0 \gamma_2$,
and $Z$ by $i \gamma_0 \gamma_3$.
The stabilizer generators of $C_{Majorana}$
are given by bit strings of length $4n$ such that the dot product over $\FF_2$ 
of any pair of such bit strings is $0$. 
Thus, from $C_{Majorana}$, one can define a weakly self-dual CSS code $C_{wsd}$ with $4n$ physical qubits,
$2k$ logical qubits and distance $2d$.
Since a randomized construction (see, for example, Eq.~7.200 of Ref.~\cite{preskill})
gives stabilizer codes $C_{qubit}$ with
$k/n \rightarrow 1-H_2(d/n)-(d/n) \log_2(3)$, 
mapping these stabilizer codes $C_{qubit}$ to weakly-self dual codes $C_{wsd}$ gives 
$k/n \rightarrow (1/2)[1-H_2(2d/n)-(2d/n)\log_2(3)]$.
Since the randomized construction gives a lower bound to the weight 
of any operator commuting with the stabilizer group, 
we can puncture these codes and reduce the distance by at most $1$.

Here we give another proof of the existence of such good weakly self dual-codes. 
This will lead to rate $k/n \rightarrow 1-2 H_2(d/n)$. 
For any fixed distance $d$, 
one can obtain families of stabilizer codes with 
$n$ physical qubits and $k$ logical qubits with the ratio $k/n \rightarrow 1$ as $n \rightarrow \infty$.
While this improvement is only by constant factors over the construction via Majorana codes,
it will lead to nice asymptotic expressions for the number of $T$-gates, $\nT$, 
required to attain $d$-th order suppression in error. 
It is also a randomized construction, 
showing that codes in a certain ensemble have the desired properties with high probability.

Define a random ensemble of $c$-by-$n$ self-orthogonal matrices as follows, where a matrix $M$ is defined to be self-orthogonal if $M M^T=0$.
Choose the first row of the matrix to be the all $1$s vector $\vec 1$. 
Choose the second row uniformly at random subject to the constraint 
that it have vanishing dot product with the first row.
Continue in this fashion, 
choosing the $j$-th row uniformly at random subject to the constraint 
that it have vanishing dot product with the first $j-1$ rows.
(Remark: the requirement that the first row be the all $1$s vector 
is simply chosen to simplify some notation, 
so that we do not need to add the requirement that each row have even weight.)

\begin{lemma}
\label{rwsdlemma}
Consider a fixed $n$-component vector $v$, 
with $v \neq 0$ and $v\neq \vec 1$.
For a random $c$-by-$n$ self-orthogonal $M$,
the probability that $M v=0$ 
is at most $2^{-c+1}+2^{-n+c+1}$.
\end{lemma}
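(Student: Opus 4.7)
The plan is to handle the parity of $|v|$ separately and then partition the event $Mv=0$ into two pieces using the self-orthogonality of the row span. If $|v|$ is odd, $r_1\cdot v = |v| \bmod 2 = 1$, so $Mv\neq 0$ deterministically; henceforth assume $|v|$ is even. Write $U_j = \mathrm{span}(r_1,\ldots,r_j)$, and let $A_j$ denote the event $\{v\in U_j^\perp\}$, i.e.\ $r_i\cdot v=0$ for all $i\le j$. Since $MM^T=0$ forces $U_c\subseteq U_c^\perp$, the inclusion $v\in U_c$ already implies $A_c$, hence
\begin{align*}
\Pr[Mv=0] \;=\; \Pr[A_c,\, v\notin U_c] \;+\; \Pr[v\in U_c],
\end{align*}
and I will bound these two terms by $2^{-c+1}$ and $2^{-n+c+1}$ respectively.

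For the first term I would prove by induction on $j$ that $\Pr[A_j,\, v\notin U_j]\le 2^{-(j-1)}$. Given $A_j$ and $v\notin U_j$, the row $r_{j+1}$ is uniform on $U_j^\perp$, and the crucial observation (using $|v|$ even, $v\in U_j^\perp$, and $U_j\subseteq U_j^\perp$) is that the coset $v+U_j$ already sits inside $\{r\in U_j^\perp : r\cdot v=0\}$. Hence the subset of $r_{j+1}$ yielding both $r_{j+1}\cdot v=0$ and $v\notin U_{j+1}$ (equivalently $r_{j+1}\notin v+U_j$) has size $|U_j^\perp|/2 - |U_j|$, and the conditional probability is $\tfrac12 - |U_j|/|U_j^\perp| \le \tfrac12$. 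Chaining $c-1$ such factors gives the claim.

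For the second term I would expand by first-entry time: $\Pr[v\in U_c] = \sum_{j=2}^c \Pr[v\in U_j\setminus U_{j-1}]$, where the $j=1$ term vanishes since $v\neq 0,\vec 1$. For $j\ge 2$, this event requires $A_{j-1}$, $v\notin U_{j-1}$, and $r_j\in v+U_{j-1}$; the last event, conditioned on prior rows, has probability $|U_{j-1}|/|U_{j-1}^\perp| = 2^{2\dim U_{j-1}-n}\le 2^{2(j-1)-n}$. Combining with the already-proved $\Pr[A_{j-1},v\notin U_{j-1}]\le 2^{-(j-2)}$ gives $\Pr[v\in U_j\setminus U_{j-1}]\le 2^{j-n}$, and summing the geometric series yields $\Pr[v\in U_c]\le 2^{c-n+1}$.

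The main subtlety is the self-orthogonality collapse: because the rows are mutually orthogonal, the $v\in U_c$ contribution gets no per-step factor of $\tfrac12$ and must instead be controlled by a union bound over the first step at which $v$ enters the span. This is precisely where the excluded vectors enter: $v\neq 0$ kills the trivial case $Mv=0$ always, while $v\neq\vec 1$ is what lets the first-entry sum start at $j=2$ rather than $j=1$.
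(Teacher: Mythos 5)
Your proof is correct and follows essentially the same route as the paper's: the same split of $\{Mv=0\}$ into the events $v\in U_c^\perp\setminus U_c$ and $v\in U_c$, the same per-row factor of $\tfrac12$ for the first event, and the same first-entry-time decomposition with the coset bound $|U_{j-1}|/|U_{j-1}^\perp|\le 2^{2(j-1)-n}$ for the second. Your version is in fact slightly more careful than the paper's at the inductive step (tracking the joint event $A_j\wedge v\notin U_j$ and the exact count $|U_j^\perp|/2-|U_j|$), but the argument is the same.
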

\begin{proof}
Let $w_1,\ldots,w_c$ be the rows of $M$.
Let $V_j$ be the self-orthogonal subspace which is the span of the first $j$ rows of $M$.
We will estimate the desired probability by a union bound, considering separately the event that $v\in V_c^\perp$ and $v\not\in V_c$, and the event that $v\in V_c^\perp$ and $v\in V_c$.

Consider the first event.
Let $j>1$.  Then
\begin{align}
\label{est1}
\Pr[w_j \cdot v =0 | v \not \in V_{j-1}]=\frac{1}{2},
\end{align}
because the constraint that $(v,w_j)=0$ is independent of the constraints on the vector $w_j$.
Thus, for any $k$,
\begin{align}
\label{estk}
\Pr[v \in V_k^\perp \, {\rm and} \, v \not \in V_k] \leq \prod_{j=2}^k \frac{1}{2}=2^{-k+1}.
\end{align}
For $k=c$, we find in particular that
\begin{align}
\label{estc}
\Pr[v \in V_c^\perp \, {\rm and} \, v \not \in V_c] \leq 2^{-c+1}.
\end{align}

Now we estimate the probability of the second event.  
Note that if $v \in V_c$, there is a least $j$ such that $v\in V_j$.  
So,
\begin{align}
\label{sumeq}
\Pr[v \in V_c^\perp \, {\rm and} \, v \in V_c] 
\leq 
\sum_{j=2}^c 
\Pr[v\in V_j \, {\rm and} \, v \in V_{j-1}^\perp \, {\rm and} \, v \not \in V_{j-1}].
\end{align}
We have
\begin{align}
&\Pr[v \in V_{j} \, {\rm and} \, v\in V_{j-1}^\perp \, {\rm and} \, v \not \in V_{j-1}] \nonumber \\ 
&= 
\Pr[v \in V_{j-1}^\perp \, {\rm and}\, v\not \in V_{j-1}] \cdot  
\Pr [v\in V_j  ~|~ v \in V_{j-1}^\perp \, {\rm and}\, v\not \in V_{j-1}] \nonumber\\
&\leq 
2^{-j+2}  \Pr [ v \in V_j  ~|~ v \in V_{j-1}^\perp \, {\rm and}\, v\not \in V_{j-1}],
\end{align}
where we used Eq.~(\ref{estk}).

Now we estimate the probability  $\Pr [v\in V_j  ~|~ v \in V_{j-1}^\perp \, {\rm and}\, v\not \in V_{j-1}]$.
This is possibly nonzero only if $v \cdot v = 0$.
Consider the space of all $n$-component vectors modulo vectors in $V_{j-1}$; 
this quotient space has dimension at least $n-(j-1)$.
Let $\pi$ be the natural map from the space of all vectors to this quotient space.
The vector $\pi v$ is nonzero by assumption.
The vector $w_j$ is subject to at most $j-1$ independent constraints from $V_{j-1}$.
Consider the space of possible $\pi w_j$, given that $w_j$ obeys those constraints;
this space has dimension at least $n-2(j-1)$
and so the probability that a random vector in this space 
is equal to $\pi v$ is at most $2^{-(n-2j+2)}$.
Hence, $\Pr [v\in V_j  ~|~ v \in V_{j-1}^\perp \, {\rm and}\, v \notin V_{j-1}] \leq 2^{-(n-2j+2)}$,
so 
$\Pr[v \in V_j \, {\rm and} \, v\in V_{j-1}^\perp \, {\rm and} \, v \notin V_{j-1}] \leq 2^{-n+j}$.
So by Eq.~(\ref{sumeq}),
\begin{align}
\label{event2}
\Pr[v \in V_c^\perp \, {\rm and} \, v \in V_c] \leq \sum_{j=2}^c 2^{-n+j} \leq 2^{-n+c+1}.
\end{align}
By a union bound, adding probabilities in Eqs.~(\ref{estc},\ref{event2}), the lemma follows.
\end{proof}

\begin{lemma}\label{lem:randomcodes-distance}
Let $n,c,d$ be such that
\begin{align}
\label{ncd}
(2^{-n+c+1}+2^{-c+1}) \sum_{w=1}^d \binom{n}{w}  <1.
\end{align}
Then, there exists a $c$-by-$n$ matrix $M$ such that $M M^T=0$ 
and such that $M v \neq 0$ for any $v\neq 0$ with $v$ having Hamming weight at most $d$.
\begin{proof}
This follows from Lemma~\ref{rwsdlemma} and by a first moment bound.  
For a random $M$ from the above ensemble, 
the expected number of vectors $v\neq 0$ 
with Hamming weight at most $d$ such that $M v=0$ is at most
$(\sum_{w=1}^{d} \binom{n}{w}) \cdot (2^{-n+c+1}+2^{-c+1})$.
\end{proof}
\end{lemma}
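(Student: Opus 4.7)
The plan is to use a standard first-moment (probabilistic method) argument on the random ensemble of self-orthogonal $c$-by-$n$ matrices $M$ constructed just before Lemma~\ref{rwsdlemma}. Every matrix drawn from this ensemble already satisfies $MM^T=0$ by construction, so the only thing remaining is to show that with positive probability $M$ has nonzero action on every nonzero vector $v \in \FF_2^n$ of Hamming weight at most $d$.

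For a random $M$ in this ensemble, I would let $X$ denote the number of nonzero vectors $v$ with $|v|\le d$ satisfying $Mv=0$, and bound $\mathbb{E}[X]$ by linearity of expectation. For each such $v$ with $v\neq \vec 1$, Lemma~\ref{rwsdlemma} gives $\Pr[Mv=0]\le 2^{-c+1}+2^{-n+c+1}$. The number of nonzero vectors of weight at most $d$ is $\sum_{w=1}^d\binom{n}{w}$, so summing yields
\begin{equation}
\mathbb{E}[X] \;\le\; \left(\sum_{w=1}^{d}\binom{n}{w}\right)(2^{-c+1}+2^{-n+c+1}),
\end{equation}
which is strictly less than $1$ by the hypothesis~(\ref{ncd}). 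Since $X$ is a nonnegative integer, some realization of $M$ must have $X=0$, and that $M$ is exactly the matrix the lemma asserts.

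The one point that deserves scrutiny is the excluded vector $v=\vec 1$, since Lemma~\ref{rwsdlemma} is stated only for $v\ne\vec 1$. This is easily dispatched: if $d=n$ then $\sum_{w=1}^n\binom{n}{w}=2^n-1$ and the left-hand side of~(\ref{ncd}) is at least $2^{c+1}+2^{n-c+1}$ times $(1-2^{-n})$, which cannot be less than $1$ for any $c\ge 0$. Hence (\ref{ncd}) forces $d<n$, so $\vec 1$ is never among the vectors counted, and the union bound above is valid as written. In short, the argument is a clean first-moment bound; the real probabilistic content has already been absorbed into Lemma~\ref{rwsdlemma}, so there is no substantive obstacle here, only the bookkeeping of checking that the edge case $v=\vec 1$ cannot appear.
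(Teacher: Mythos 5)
Your proof is correct and is exactly the paper's argument: a first-moment bound over the random self-orthogonal ensemble using Lemma~\ref{rwsdlemma}, with the expectation bounded by $\bigl(\sum_{w=1}^{d}\binom{n}{w}\bigr)(2^{-c+1}+2^{-n+c+1})<1$. Your explicit check that hypothesis~(\ref{ncd}) forces $d<n$, so that the excluded vector $\vec 1$ never enters the union bound, is a detail the paper leaves implicit and is worth having.
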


\begin{lemma}
\label{fixedd}
For any fixed $d$, one can find a family of $M$ with increasing $n$ such that the ratio $c/n$ tends asymptotically to zero and such that Eq.~(\ref{ncd}) is obeyed.  
Hence, for any distance $d$, one can find a family of 
hyperbolic or normal weakly self-dual CSS codes 
such that the ratio $\kin/\nin \rightarrow 1$ as $\nin \rightarrow \infty$.
\begin{proof}
Immediate for the hyperbolic case.
Since the lemma \ref{rwsdlemma} upper bounds the probability that an operator commutes with the stabilizer group one can also puncture these codes to obtain a normal code.
\end{proof}
\end{lemma}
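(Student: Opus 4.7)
The plan is to invoke Lemma~\ref{lem:randomcodes-distance} with the number of self-orthogonality constraints $c$ chosen to grow only logarithmically in $n$, and then read off the resulting code parameters. For fixed $d$, the combinatorial factor satisfies $\sum_{w=1}^{d}\binom{n}{w} = O(n^d)$, so it suffices to arrange $(2^{-n+c+1}+2^{-c+1})\cdot O(n^d) < 1$. Setting $c = \lceil (d+2)\log_2 n\rceil$, the first term contributes $O(n^{-2})\to 0$ and the second is super-exponentially small, so hypothesis (\ref{ncd}) holds for all sufficiently large $n$. Lemma~\ref{lem:randomcodes-distance} then yields a self-orthogonal $c$-by-$n$ matrix $M$ with $Mv\neq 0$ for every nonzero $v$ of Hamming weight at most $d$.

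Next I interpret the row span of $M$ as a self-orthogonal subspace $\calS\subseteq \FF_2^n$ and define a weakly self-dual CSS code with both $X$- and $Z$-type stabilizers generated by the rows of $M$. This gives $\nin = n$ and $\kin = n - 2c = n - O(\log n)$, so $\kin/\nin \to 1$. Since every nonzero element of $\calS^\perp = \ker M$ has weight at least $d+1$, the code has distance at least $d+1\geq d$. Moreover, the code is automatically hyperbolic: the first row $\vec 1$ of $M$ lies in $\calS$, so every $v\in\calS^\perp$ satisfies $v\cdot\vec 1 = 0$, hence $|v|$ is even and $[v]\cdot[v] = |v| \bmod 2 = 0$ on $\calS^\perp/\calS$; by the classification in Lemma~\ref{lem:dot-classification} this symmetric form is $\lambda_q$, i.e., hyperbolic. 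This handles the hyperbolic half of the lemma.

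For the normal case, I apply the puncturing procedure defined earlier to the hyperbolic code just constructed. Because every nonzero vector in $\calS^\perp$ has weight at least $d+1$, i.e., every Pauli commuting with the stabilizer group of the original code has weight at least $d+1$, the punctured code on $n-1$ qubits has distance at least $d$. It has $\kin + 1 = n - 2c + 1$ logical qubits and is normal by construction, and its rate $(n-2c+1)/(n-1)$ still tends to $1$ as $n\to\infty$.

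The only mild obstacle is balancing the two terms in (\ref{ncd}): $c$ must grow quickly enough that $2^{-c+1}$ dominates the polynomial factor $n^d$, yet remain $o(n)$ so that $2^{-n+c+1}$ stays negligible. Any choice $c = \Theta(\log n)$ with leading coefficient greater than $d$ fits comfortably in this window, so the estimate is robust and no delicate tuning is needed. The same $M$ can be used for the hyperbolic code and its puncture, so the two halves of the lemma follow from a single application of Lemma~\ref{lem:randomcodes-distance}.
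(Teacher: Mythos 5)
Your proposal is correct and follows essentially the same route as the paper's (very terse) proof: pick $c=\Theta(\log n)$ so that Eq.~(\ref{ncd}) holds for fixed $d$, observe that the resulting code is hyperbolic (since $\vec 1\in\calS$ forces every vector of $\calS^\perp/\calS$ to be self-orthogonal) with rate tending to $1$, and then puncture to get the normal case, using the fact that Lemma~\ref{lem:randomcodes-distance} bounds the weight of \emph{every} operator commuting with the stabilizer group so that puncturing costs at most one unit of distance. You have simply filled in the quantitative details the paper leaves implicit.
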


\subsection{Outer Codes}
\label{outerconstruct}
In this subsection, we construct families of outer codes with  good check rate and sensitivity.  We begin with a randomized construction, and then show how to construct explicit families using previous results in coding theory.

\begin{lemma}
\label{random}
There exist families of outer codes with good check rate and sensitivity and even row weight.
Similarly, there exist families of outer codes with good check rate and sensitivity and odd row weight.
\begin{proof}
We only give a proof for check matrices of even row weight.
The proof for odd row weight case is completely analogous.

Consider a random $m$-by-$\nout$ parity check matrix $M$.
Let $\tilde d = \nout-1$.
Choose each row independently but with the constraint that it should be of even weight. For any vector $v$ 
with
$|v| \leq \tilde d$, the syndrome vector $Mv$ has independent entries from the uniform distribution.
Thus, the probability that $|Mv| \leq s$ for $s\leq m/2$ is bounded by
\[
	2^{-m} \sum_{i \leq s} {m \choose i}=\sO(2^{m(H(s/m)-1)}),
\]
where $H(p)=-p \log_2(p)-(1-p) \log_2(1-p)$ is the binary entropy function, and $\sO$ hides polynomial factors.
The number of such vectors $v$ is bounded by $2^{\nout}$.
By a union bound, the probability that there is an error vector $v$ of weight less than $\tilde d$ such that the syndrome has weight less than $s$ is
bounded by
\[
\sO(2^{m(H(s/m)-1)})2^{\nout} =
\sO(2^{\nout (1 + (m/\nout)(H(s/m)-1))}).
\]
For sufficiently large ratio $m/\nout$ and sufficiently small ratio $s/m$,
this quantity is exponentially small in $\nout$.
\end{proof}
\end{lemma}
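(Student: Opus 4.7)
The plan is a straightforward probabilistic argument. I would fix positive constants $\alpha,\beta,\gamma$ (to be tuned) and set $m = \alpha\,\nout$, $\tilde d = \beta\,\nout$, and $s = \gamma\,\nout$. Draw the $m\times\nout$ parity check matrix $M$ by sampling each row independently and uniformly from the set of length-$\nout$ binary vectors of even weight (for the first claim) or odd weight (for the second). The aim is to show the probability that $M$ fails to be $(\tilde d,s)$-sensitive tends to $0$; any successful realization then furnishes the desired family, since $m/\nout$, $\tilde d/\nout$, and $s/\nout$ are all fixed positive constants.

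For a fixed nonzero $v$ with $1\le|v|\le\tilde d$, I would first pin down the distribution of the syndrome $Mv$. Provided $v\neq\vec 1$, there exist indices $i,j$ with $v_i=1$ and $v_j=0$, and the map $r\mapsto r\oplus e_i\oplus e_j$ is a fixed-point-free involution on even-weight (respectively odd-weight) vectors that flips the bit $r\cdot v$; hence $r\cdot v$ is an unbiased coin flip. Therefore the entries of $Mv$ are i.i.d.\ Bernoulli$(1/2)$, and a Chernoff/entropy bound gives
\begin{align}
\Pr\!\bigl[\,|Mv|<s\,\bigr]\;\le\;2^{-m\bigl(1-H_2(s/m)\bigr)}
\end{align}
whenever $s\le m/2$. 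Since there are at most $2^{\nout}$ candidate $v$, a union bound yields
\begin{align}
\Pr\!\bigl[\,M \text{ not } (\tilde d,s)\text{-sensitive}\,\bigr]\;\le\;2^{\nout\bigl(1-\alpha(1-H_2(\gamma/\alpha))\bigr)}.
\end{align}
Choosing $\gamma/\alpha$ small (so $H_2(\gamma/\alpha)$ is close to $0$) and then $\alpha$ large enough that $\alpha(1-H_2(\gamma/\alpha))>1$ drives this probability to decay exponentially in $\nout$, proving existence.

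The only delicate point is the boundary case $v=\vec 1$, which is the single vector for which the pairing argument breaks. In the even-weight ensemble $r\cdot\vec 1=0$ deterministically, so $M\vec 1=0$ always; this is harmless once we restrict $\tilde d<\nout$, which excludes $\vec 1$ from the set of vectors we must handle. In the odd-weight ensemble $r\cdot\vec 1=1$, so $M\vec 1$ is the all-ones vector of length $m\ge s$ and is trivially fine, so we may even take $\tilde d=\nout$. The main obstacle is therefore purely bookkeeping---tuning $(\alpha,\beta,\gamma)$ so that the sensitivity and rate parameters are all positive constants while keeping the union-bound exponent strictly negative---after which the construction is a standard Gilbert--Varshamov-style first-moment argument.
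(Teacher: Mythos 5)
Your proposal is correct and follows essentially the same route as the paper: sample rows independently and uniformly from the fixed-parity vectors, observe that the syndrome entries are i.i.d.\ unbiased bits, apply a Chernoff/entropy bound and a union bound over at most $2^{\nout}$ error vectors, and tune the constant ratios $m/\nout$ and $s/m$ to make the failure probability exponentially small. The only additions are the explicit involution argument justifying the uniformity of $r\cdot v$ and the careful exclusion of $v=\vec 1$, both of which the paper leaves implicit (it simply sets $\tilde d=\nout-1$).
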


The above randomized construction is very similar to randomized constructions of classical codes with good rate and distance, where we define
\begin{definition}
A family of classical error correcting codes with increasing number of bits $n$ has {\bf good rate} 
if the number of encoded bits $k$ is $\Theta(n)$ and has {\bf good distance} if the distance $d$ is $\Theta(n)$.
\end{definition}
That is, even though we are considering very different properties (number of violated checks rather than distance of the code), the first moment argument above is very similar to standard first moment arguments to construct such codes with good rate and distance, with some additional technicalities required to ensure even weight of the parity checks.
This is not a coincidence.  As we now show, given a family of codes with good rate and distance, one can construct a family of codes with good check rate and sensitivity.

\begin{lemma}
\label{transpose}
Let $C$ be a classical error correcting code that encodes $k$ bit messages into $n$ bit codewords.  Let $C$ have distance $d$.  Let $v_1,\ldots,v_k$ be a basis for the codewords of $C$.
Let $M$ be the $n$-by-$(k+1)$ matrix
whose columns are the vectors $v_1,\ldots,v_k,w$ where $w=v_1+\ldots+v_k$.
Then, all rows of $M$ have even weight and
$M$ is a parity check matrix for a code with
$\nout=k+1$ bits which is $(\tilde d,s)$ sensitive with $s=d$ and $\tilde d=\nout-1$.  Thus, the code with parity checks encoded by $M$ has only two codewords (the all $0$ vector and the all $1$ vector) and any message which is not a codeword will violate at least $d$ checks.
\begin{proof}
For any $(k+1)$-bit vector $v$, the vector $M v$ is a codeword of $C$.
If $v$ is nonzero and is not equal to the all $1$ vector, then $Mv$ is a nonzero codeword of $C$ and hence has weight at least $d$.
\end{proof}
\end{lemma}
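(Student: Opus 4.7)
The plan is to verify the three claims (even row weights, $M$ being a check matrix of a two-codeword code, and sensitivity) by direct calculation, exploiting the fact that $Mv$ can be rewritten as an explicit linear combination of the basis codewords $v_1,\ldots,v_k$ of $C$.

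First I would check the even row weight property. The $i$-th row of $M$ is $((v_1)_i,\ldots,(v_k)_i,w_i)$ with $w_i=\sum_{j=1}^k(v_j)_i \pmod 2$. If the first $k$ entries contain an even number of $1$s then $w_i=0$ so the full row has even weight; if an odd number, then $w_i=1$ and the row again has even weight. So every row of $M$ has even Hamming weight, which is exactly the condition needed for an hyperbolic inner code to implement these checks.

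Next I would analyze $Mv$ for $v=(c_1,\ldots,c_k,c_{k+1})\in\FF_2^{k+1}$. By linearity and the definition of $w$,
\begin{align}
Mv = \sum_{j=1}^k c_j v_j + c_{k+1}\sum_{j=1}^k v_j = \sum_{j=1}^k (c_j+c_{k+1})\, v_j.
\end{align}
In particular $Mv$ is always a codeword of $C$, and $M\mathbf 1=0$ (taking $c_j=1$ for all $j$ makes every coefficient vanish), so both the all-zero and all-one vectors lie in the kernel of $M$. Conversely, if $v$ is nonzero and not the all-ones vector, then the coefficient tuple $(c_1+c_{k+1},\ldots,c_k+c_{k+1})$ is nonzero: if $c_{k+1}=0$ this is just $(c_1,\ldots,c_k)\neq 0$, and if $c_{k+1}=1$ vanishing would force every $c_j=1$, making $v=\mathbf 1$, a contradiction. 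Since $v_1,\ldots,v_k$ are linearly independent, $Mv$ is then a nonzero codeword of $C$, hence $|Mv|\geq d$ by the distance assumption.

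This immediately gives all three conclusions: the code whose parity checks are the rows of $M$ has kernel exactly $\{\mathbf 0,\mathbf 1\}$ (two codewords), and for every $v\neq\mathbf 0,\mathbf 1$ — in particular for every nonzero $v$ with $|v|\le\tilde d=\nout-1=k$ — we have $|Mv|\ge d$, establishing $(\tilde d,d)$-sensitivity. There is no real obstacle here; the only thing to be careful about is ruling out the all-ones vector from the sensitivity bound, which is automatic because $|\mathbf 1|=\nout>\tilde d$, so $\mathbf 1$ is not among the vectors the sensitivity property quantifies over.
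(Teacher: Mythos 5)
Your proof is correct and follows essentially the same route as the paper's: observe that $Mv$ is always a codeword of $C$, that its kernel is exactly $\{\mathbf 0,\mathbf 1\}$, and hence that every admissible $v$ yields $|Mv|\geq d$. You simply spell out the details (the explicit coefficient computation $Mv=\sum_j(c_j+c_{k+1})v_j$ and the even-row-weight check) that the paper's terse two-line proof leaves implicit.
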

Since $\nout=k+1$, in order to obtain an even $\nout$, if $C$ has $k$ even, we can simply define a new code $C'$ which encodes $k-1$ bit messages into $n$ bit codewords by using any $(k-1)$-dimensional subspace of the codewords of $C$, in this way obtaining a parity check matrix for a code with $\nout=k-1+1=k$.

Using lemma \ref{transpose}, we can construct explicit families of codes with good check rate and good sensitivity given any explicit family of codes with good rate and good distance.  As an example of such a code family, we can use the expander codes of Ref.~\cite{sipser1996expander}.

\begin{lemma}
\label{bulkoc}
Given integers $\tilde d, w \ge 1$ and $s \ge 2$,
there exists an $m \times \nout$ parity check matrix $M$ that is $(\tilde d, s)$-sensitive
where $m = \nout \cdot s / w$ and every row of $M$ has weight $w$ exactly.
\end{lemma}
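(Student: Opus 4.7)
My plan is to realize $M$ as the biadjacency matrix of a simple $(s,w)$-biregular bipartite graph $G$ with $\nout$ left-vertices (``bits'') each of degree $s$ and $m = \nout s / w$ right-vertices (``checks'') each of degree $w$. Such a $G$ automatically gives every row of $M$ weight exactly $w$. To obtain $(\tilde d, s)$-sensitivity, I would lower bound $|M\mathbf{1}_S|$ by the number $|\Gamma_1(S)|$ of checks incident to exactly one bit of $S$ (each such unique neighbor contributes $1$ to the syndrome), thereby reducing the problem to exhibiting a $G$ with $|\Gamma_1(S)| \ge s$ for every $S \subseteq L$ with $1 \le |S| \le \tilde d$.

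The central combinatorial step is the claim: if the induced bipartite subgraph $G[S \cup N(S)]$ is a forest, then $|\Gamma_1(S)| \ge |S|(s-2) + 2$, which is at least $s$ whenever $s \ge 2$. I would prove this by letting $n_j$ be the number of checks in $N(S)$ adjacent to exactly $j$ bits of $S$. The identity $\sum_{j \ge 1} j\, n_j = |S| s$ counts the edges out of $S$, and the forest bound $|S| s \le |S| + |N(S)| - 1$ gives $\sum_{j \ge 1} n_j = |N(S)| \ge |S|(s-1) + 1$. Subtracting the two yields $\sum_{j \ge 2}(j-1) n_j \le |S| - 1$, hence $n_1 \ge |N(S)| - (|S| - 1) \ge |S|(s-2) + 2$.

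To ensure $G[S \cup N(S)]$ is a forest for every $|S| \le \tilde d$, it suffices that $G$ has girth strictly greater than $2\tilde d$: any cycle in $G[S \cup N(S)]$ is also a cycle in $G$, alternates between sides, and so uses at most $|S| \le \tilde d$ bits, giving length at most $2\tilde d$. The existence of simple $(s,w)$-biregular bipartite graphs of girth $>2\tilde d$ on $(\nout, m)$ vertices for all sufficiently large $\nout$ (with the divisibility $w \mid \nout s$) is classical, either via a probabilistic construction (random biregular graphs from the configuration model conditioned on simplicity have girth $\Omega(\log \nout)$ with probability tending to one) or via explicit algebraic constructions such as bipartite Ramanujan or Cayley-type graphs.

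The main obstacle is the forest-to-unique-neighbor bookkeeping in the second paragraph; once that is in place, the girth-to-forest reduction is immediate and the existence of high-girth biregular bipartite graphs is off the shelf. The calculation also clarifies why the hypothesis $s \ge 2$ enters: the bound $|S|(s-2) + 2$ ceases to dominate $s$ when $s = 1$, so the strategy breaks for $|S| \ge 2$ in that regime, consistent with the earlier remark in the paper that the $d = 3$ (i.e.\ $s = 1$) case must be excluded from Theorem~\ref{thm:noutasymptotic}.
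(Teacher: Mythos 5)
Your proof is correct and follows essentially the same route as the paper's: both realize $M$ as the biadjacency matrix of an $(s,w)$-biregular Tanner graph of girth $>2\tilde d$ (whose existence the paper takes from F\"uredi et al.), observe that the induced neighborhood of any set of at most $\tilde d$ bits is then a forest, and lower-bound the syndrome weight by the number of unique-neighbor (degree-one) checks. The only difference is the final count — the paper takes the two furthest-apart bits in a tree component and counts their $2(s-1)\ge s$ pendant checks, whereas you use a global edge/vertex count to get $n_1 \ge |S|(s-2)+2 \ge s$ — and both are valid.
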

\begin{proof}
A parity check matrix $M$ defines a bipartite graph $G$, often called a Tanner graph.  
One set of vertices of the graph (which we call $B$ labeled by the columns of $M$)
corresponds to bits of the code 
and the other set (which we call $C$ labeled by the rows of $M$) corresponds to checks, 
with an edge between a pair of vertices 
if $M$ is nonzero in the corresponding entry.
Equivalently, given such a bipartite graph $G$, this defines a parity check matrix.
We claim that given a bipartite graph with all vertices in $B$ having degree $s$ 
and all vertices in $C$ having degree $w$ and with girth $> 2 \tilde d$, 
the corresponding parity check matrix defines a code with the desired properties.
Once we have shown this, the lemma follows, 
since Ref.~\cite{furedi1995graphs} shows the existence of such graphs.

Note first that the degree of vertices in $C$ corresponds to the row weight of $M$.
Next, note that if all vertices in $C$ have degree $w$ and all in $B$ have degree $s$,
then $m=|C|=\nout \frac{s}{w}$ with $\nout=|B|$.

To prove the claim, let $V \subseteq B$ be a nonempty set of erroneous bits.
By assumption, $1 \le |V| \le \tilde d$.
Consider a subgraph $H$ of $G$ defined by all vertices of $V$ and its neighbors.
By the girth condition on $G$, the subgraph $H$ has to be a collection of disjoint trees.
Thus, it suffices to prove the claim in case where $H$ is connected.
If $|V| = 1$, then the error violates $s$ checks, and we are done.
If $|V| \ge 2$, let $v_1, v_2 \in V$ be a pair that are the furthest apart.
The choice of the pair ensures that each of $v_1$ and $v_2$ has $s-1$ leaves attached to it.
Therefore, $V$ violates at least $2s-2 \ge s$ checks.
\end{proof}

Note that the ratio $m/\nout=s/w$ in lemma \ref{bulkoc} is the best possible, 
because each bit must participate in at least $s$ checks 
(i.e., every column of the parity check matrix must have weight at least $s$).
Note also that though the existence of desired graphs is guaranteed,
they might be too large in practice; 
$w^s \le \nout \le \mathrm{poly}(w^s)$~\cite{furedi1995graphs}.
However, one does not have to be too strict
on the biregularity of the graph in practice.
If small violation of the biregularity
gives a much smaller graph,
then it might be more useful.

\section{Numerical Simulation}
\label{sec:simulation}
In this section, we give results of numerical simulations.
We begin by explaining the error model we used for simulations.
We then explain two protocols that we simulate that are not explained previously; 
one of these protocols uses a $[[21,3,5]]$ code.  
Then we give the simulation results.   
One interesting result of the simulation is 
how little effect the subleading terms have, even at fairly large noise values.

\subsection{Magic state fidelity}
\label{sec:magicStateFidelity}

When we inject a magic state $\mu$ for a 
$\pi/4$ rotation into a quantum circuit,
there is a probability for correction $K$ by angle $\pi/2$ to be applied.
If we represent the overall procedure by a quantum channel $\mathcal C_\mu$,
it is $\mathcal C_\mu(\rho) = \Pi_+ (\rho \otimes \mu) \Pi_+ + K\Pi_- (\rho \otimes \mu) \Pi_- K^\dagger$,
where $\Pi_\pm$ denotes the measurement combined with a control-Pauli 
on the magic state and a target data qubit.
Let $\ket{\mu_0}$ be the ideal magic state, and $\ket{\mu_0^\perp}$ 
be the orthogonal state.
Then, it is straightforward to calculate that 
$\mathcal C_{\ket{\mu_0}\bra{\mu_0^\perp} + \ket{\mu_0^\perp}\bra{\mu_0}}(\rho) = 0$.

This implies that for \emph{any} initial approximate magic state $\mu$,
the result of the injection is the same as if 
$\mu$ had been through a twirling channel $\mathcal E$
that dephases the magic state in the basis $\{ \ket{\mu_0}, \ket{\mu_0^\perp} \}$:
\begin{align}
 \mu = 
 \begin{pmatrix} 1-\epsilon & * \\ * & \epsilon \end{pmatrix} 
 \xrightarrow{~~~~\mathcal E~~~~}
 \begin{pmatrix} 1-\epsilon & 0 \\ 0 & \epsilon \end{pmatrix}.
\end{align}
The twirled state is $\epsilon$ away from the ideal state in the trace distance,%
\footnote{The trace distance is defined as $T(\rho,\sigma) = \frac 1 2 \| \rho - \sigma \|_1$.  }
resulting in error at most $\epsilon$ to the quantum circuit's outcome.
The error $\epsilon$ can be expressed by the squared fidelity%
\footnote{The fidelity is defined as $F(\rho,\sigma) = \| \sqrt \rho \sqrt \sigma \|_1$,
which is equal to $|\braket{\rho|\sigma}|$ for pure $\rho$ and $\sigma$.}
as
\begin{align}
 1-\epsilon = F^2(\mu_0, \mu) = \bra{\mu_0} \mu \ket{\mu_0} = \bra{\mu_0} \mathcal E (\mu) \ket{\mu_0}.
\end{align}
 This formula is convenient in that it yields the same answer regardless of whether or not twirling is applied to $\mu$ (this is the last equality in the above formula).
When a state $\mu_n$ that approximates $\mu_0^{\otimes n}$ is injected,
the error from this multi-qubit magic state is given by $1 - F^2(\mu_0^{\otimes n}, \mu_n)$.
Note that $F^2(\mu_0,\mu)$ is linear in $\mu$.
Below, we use $1-F^2$ as the probability of error to report our simulation results.

\subsection{Error Models}
The typical model to analyze distillation protocols is the stochastic error model.
In typical distillation protocols, 
one has only a single output magic state, 
and so one is interested in the probability that 
the output magic state has an error as a function of the input, 
conditioned on no error being detected by the code; 
the error probability is a ratio of polynomials in $\epsilon$, 
with the leading term being of order $\epsilon^d$ for some $d$, with an integer coefficient.

For our purposes, since the codes used are fairly large, 
enumeration of all possible error patterns becomes difficult, 
especially if one wishes to go beyond leading order in $\epsilon$.  
For this reason, we use numerical simulation.  
One could simulate a mixed state, using a quantum channel to describe an approximate $T$-gate; however, this is numerically prohibitive and so we prefer to use an approach that involves only pure states.
One could numerically simulate pure states using the stochastic error model 
by choosing errors to occur with probability $p$, 
and sampling the output error probability.
However, this simulation also becomes difficult, 
precisely because the codes lead to a high suppression in the error.
For example, if the target error probability is $10^{-10}$, 
one would require $\sim 10^{10}$ samples, 
with a fairly large number of qubits needed to be simulated in each run, 
to determine the output error probability accurately.

While there may be ways to overcome this sampling issue using importance sampling, 
we use another method.
Instead of rotating by either $\pi/4$ or by $5\pi/4$ as in the stochastic error model, 
each $T$ gate rotates by an angle chosen uniformly in the interval $[\pi/4-\theta,\pi/4+\theta]$,
for some angle $\theta>0$.
Then, conditioned on the code not detecting an error, 
we determine the error in the output state.

In fact, the model 
with input angles $[\pi/4-\theta,\pi/4+\theta]$ 
and the stochastic error model describe the same average input state, 
assuming an appropriate choice of $\epsilon$ and $\theta$.
\begin{align}
\mu = 
 \frac{1}{2\theta}\int_{[-\theta,+\theta]} {\rm d}x
 \underbrace{
 \begin{pmatrix} 
 \cos^2 \frac{x}{2} & \sin \frac{x}{2} \cos \frac{x}{2} \\
 \sin \frac{x}{2} \cos \frac{x}{2} & \sin^2 \frac{x}{2}
 \end{pmatrix}
 }_{\rho_x}
&=
  (1-\epsilon) 
\underbrace{
 \begin{pmatrix}
1 & 0 \\ 0 & 0 
\end{pmatrix}
}_{\mu_{0}} 
+ \epsilon
\underbrace{
\begin{pmatrix}
0 & 0 \\ 0 & 1 
\end{pmatrix}
}_{\mu_1}
 \nonumber\\
\frac{1}{2}-\frac{\sin \theta}{2\theta} &= \epsilon \label{epsis}
\end{align}
Hence, one wants $\epsilon \approx \theta^2/12$.
(We emphasize that this is in a notation where $\theta$ is the rotation angle in the Bloch sphere;
the $T$-gate is a rotation by $\pi/4$, not by $\pi/8$.)
In the stochastic error model with small $\epsilon$, 
one must do roughly $1/\epsilon$ runs to obtain meaningful statistics, 
while here, one needs only a constant number of runs.
The reason is as follows.
Since $\theta$ is small, the simulated circuit can be approximated
by an analytic series in $\theta$, and the linear term amounts to a single error,
which is projected out by the post-selection on measurement outcomes
as our protocol always has $d \ge 2$.
Thus, in our post-selected simulation,
a circuit with $\rho_{x \in [-\theta,\theta]}$
is equivalent at leading order to a circuit with $\rho'_x$
where $\rho'_x$ is $\rho_x$ with the linear term in $x$ dropped.
Then, the distance from $n$-sample average of $\rho'_x$ to $\mu$ 
is $O(\epsilon/\sqrt{n})$,
whereas the distance from $n$-sample average of $\mu_i$ ($i=0,1$) to $\mu$
is $O(\sqrt{\epsilon/n})$.
The acceptance probability depends on the fidelity to the ideal magic state $\mu_0$,
which is $1-O(\epsilon)=1-O(\theta^2)$ in any case.

\subsection{Other Protocols}
\subsubsection{$[[16,2,4]]$  Inner Code}
\label{1624}
In subsubsection \ref{1664} we explained a protocol using a $[[16,6,4]]$ inner code.  
This required using a total of $17$ physical qubits, namely $16$ for the code and one ancilla.
We can also modify this inner code to a $[[16,2,4]]$ inner code, by turning some of the logical operators into checks. 
This inner code suffices to implement the $H$-measurements on pairs of states $(23),(45),(61)$
and so it can implement the checks of the outer code used in subsubsection \ref{1664}.
Using a $[[16,2,4]]$ inner code, if we want to have $\nout=6$, 
we need a total of $21$ physical qubits, 
since we need $16$ for the code, plus $4$ for the logical qubits not encoded in the code, plus one ancilla.
Thus, this requires additional physical qubits compared to the $[[16,6,4]]$ code.
The reason for considering the $[[16,2,4]]$ code in numerics is to see if 
it reduces the prefactor in the error, 
since the $[[16,2,4]]$ code has fewer logical operators than the $[[16,6,4]]$ code.
The number of $T$ gates in the protocol using $[[16,2,4]]$
is the same as that using $[[16,6,4]]$.
We pipeline the protocol with the $[[16,2,4]]$ inner code 
in the same way as we did with the $[[16,6,4]]$ inner code.
See Fig.~\ref{1624fig} in Appendix \ref{app:circuits}.

\subsubsection{$[[21,3,5]]$ Inner Code}
\label{2135}
Another inner code that we used is a $[[21,3,5]]$ inner code, described in Appendix~\ref{app:codes}. 
This allows us to obtain fifth order reduction in error.  
We used $\nout=4$ with the outer code having check matrix
\begin{align}
M=\begin{pmatrix}
1 & 1 & 1 & 0 \\
1 & 1 & 0 & 1 \\
1 & 0 & 1 & 1 \\
0 & 1 & 1 & 1
\end{pmatrix}.
\end{align}
It uses $4$ checks for $4$ qubits.  
This matrix is not $(4,2)$-sensitive,
but is sufficient to achieve fifth order reduction in error
since $2|M v| + |v| \ge 5$ for every nonzero $v$.
This protocol consumes $4 + 4 \times (21 \times 2) = 172$
$T$'s to produce $4$ outputs $T$ states.
See Fig.~\ref{2135fig} in Appendix \ref{app:circuits}.

A simple pipelining can reduce the noisy $T$ gate count compared to this protocol.
Distill three independent magic states using $[[7,1,3]]$ inner code.
(The outer code is trivial in this case.)
The three distilled magic states is then pipelined into the $[[21,3,5]]$ inner code.
This produces 3 magic states with error $O(\epsilon^5)$,
consuming, per output, $28$ $T$ gates and one $T$ state with error $\epsilon$.

Without the pipelining in the preceding paragraph,
but using weight $3$ checks from $[[21,3,5]]$,
we can find an outer code that is $(4,2)$ sensitive using $(2/3) \nout$ checks.
This produces $\nout$ magic states with fifth order error suppression,
consuming $42\cdot 2 / 3 = 28$ $T$ gates and one $T$ state, per output.
The smallest such outer code is explained in the appendix, where it is called the Petersen graph code.
For this outer code, $\nout = 15$,
so the overall protocol consumes $435$ $T$'s 
to produce $15$ outputs.
It is possible to further reduce the number of input $T$'s, 
by replacing some of the early checks
with a smaller code e.g. $[[15,7,3]]$ code,
which can be obtained by puncturing $[[16,6,4]]$ code.

\subsubsection{$[[23,1,7]]$ Inner Code}
\label{2317}
In the appendix, we give a $[[23,1,7]]$ inner code.  Pipelining this code with a $[[7,1,3]]$ and a $[[17,1,5]]$ inner code gives us $\nout=1$ with error $O(\epsilon^{7})$.
This protocol consumes $1 + 7 \times 2 + 17 \times 2 + 23 \times 2 = 95$ $T$'s, to produce 1 output magic state.
We could also apply this code to each of the output bits of any of the other fifth order protocols of section \ref{2135} to obtain error $O(\epsilon^7)$; we do not show results for this here.

\begin{figure}[t]
\centering
\includegraphics[width=0.7\textwidth]{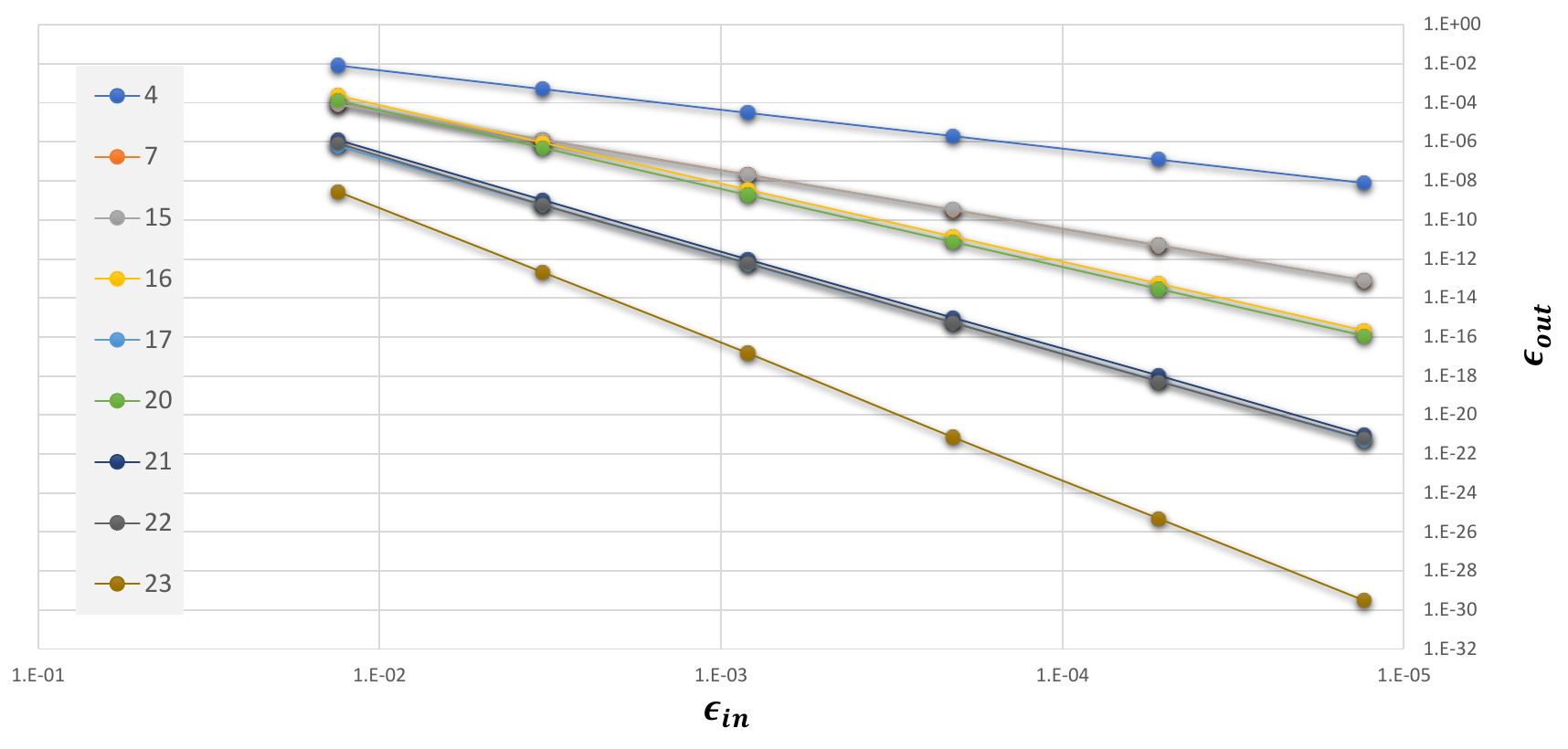}
\begin{tabular}{c|c|c|c|c}
\hline
Protocol & ~$d$~ & $\nT/\nout$ & $\nout$ & Prefactor \\
\hline
``4'' & 2 & 9 & 2 & $45 \times 10^0$ \\
``7'' & 3 & 15 & 1 & $35 \times 10^0$ \\
``15''& 3 & 15 & 1 & $35 \times 10^0$ \\
``16''& 4 & 41 & 6 & $7.3 \times 10^3$\\
``17''& 5 & 49 & 1 & $1.4 \times 10^3$\\
``20''& 4 & 41 & 6 & $3.9 \times 10^3$\\
``21''& 5 & 29 & 3 & $2.5 \times 10^3$\\
``22''& 5 & 43 & 4 & $1.4 \times 10^3$\\
``23''& 7 & 95 & 1 & $49 \times 10^3$\\
\hline
\end{tabular}
\caption{
Results of numerical simulations.
$\epsilon_{in}$ represents input error; 
this is the error $\epsilon$ of Eq.~\eqref{epsis} for the given $\theta$.  
$\epsilon_{out}$ is the average of $1-\bra{\mu_0} \mu \ket{\mu_0}$ over runs.
The fitting function is $\epsilon_{out} = C \epsilon_{in}^d$.
The graph confirms that the leading term dominates
the output error probability in the simulated regime.
The numbers labelling curves indicate the number of physical qubits, not including the ancilla qubit.
Specifically,
``$4$'' indicates protocol using $[[4,2,2]]$ inner code.
(We did not compress the circuit as Ref.~\cite{MeierEastinKnill2012Magic-state} did.)
``$7$'' indicates protocol using $[[7,1,3]]$ inner code.
``$15$'' is the Bravyi-Kitaev 15-to-1 protocol included for comparison purposes; 
``$7$'' and ``$15$'' have almost identical performance. 
``$16$'' is pipelined protocol using $[[16,6,4]]$ inner code.
``$20$'' is pipelined protocol using $[[16,2,4]]$ inner code.
``$17$'' is pipelined protocol using $[[17,1,5]]$ inner code.
``$22$'' is protocol using $[[21,3,5]]$ inner code with $\nout=4$.
``$21$'' is protocol using $[[21,3,5]]$ inner code pipelined with $[[7,1,3]]$ inner code.
``$23$'' is protocol using $[[23,1,7]]$ inner code pipelined
with ``$21$.''
}
\label{runs}
\end{figure}

\subsection{Results}

The results of the simulations are shown in Fig.~\ref{runs}.  Note that the plots are close to linear on a log-log plot, with only small deviations at high error rate.
Each data point represents the average of at least $10^{4}$ runs, with statistical flucutations negligible on the scale of the plot.  The asymptotic behavior is within statistical error of that given by an enumeration of minimum weight error patterns.

The protocol using $[[16,2,4]]$ inner code has a slightly reduced output error, compared to the protocol using the $[[16,6,4]]$ inner code.

We emphasize that $\epsilon_{out}$ indicates the probability that there is any error in the output state which is a multi-qubit state.  Suppose that two protocols give the same value of $\epsilon_{out}$ for a given $\epsilon_{in}$, but one protocol has a large $\nout$.  If the total number of magic states needed in our computation is large compared to $\nout$, the number of times we need to call the protocol is inversely proportional to $\nout$, and so the protocol with the larger $\nout$ for the given $\epsilon_{out}$ is less likely to produce an error.

The probability that no error is detected by the protocol is roughly $(1-\epsilon_{in})^{\nT}$.  This result would be exact if any error in an input $T$ gate led to the protocol detecting an error.  Instead, some high weight error patterns do not lead to any error detected by the code, leading to slight corrections to this formula.

\section{Discussion}
\label{sec:discussion}

We have given a general scheme to construct distillation protocols using inner and outer codes.
If desired, our protocols can be concatenated with other protocols.
However, on their own, 
they achieve asymptotic behavior which is conjectured to be optimal,
as well as having small size examples which perform well.
The concatenation can be useful
in an architecture where Clifford gates are performed at a logical level
so that one can tune the fidelity of Clifford gates
to match the relatively low fidelity of output magic states from 
early stages of concatenation~\cite{RaussendorfHarringtonGoyal2007,Fowler2012,OGormanCampbell2017}.
In contrast, our schemes without concatenation can be useful
if Clifford gates are already of high fidelity.

One of the major advantages of our protocols
is the small number of qubits that they use,
as they maintain a constant ratio of physical to logical qubits in the asymptotic limit.
It is interesting to consider the asymptotics of 
this overhead between physical and logical qubits.
Note that given any distillation protocol, 
there is a trivial way to define a new protocol with a fixed ratio of physical to logical qubits.  
Suppose, for example, that some protocol uses $\nphys$ qubits to produce $1$ output magic state.
Here $\nphys$ includes all physical qubits used in the protocol,
not only initial noisy $T$ states.
Call this protocol $P$.
One can define a new protocol $P'$ that works on $2\nphys$ qubits to produce $\nphys$ output magic states:
First, by working on $\nphys$ qubits,
leaving other $\nphys$ inactive,
we obtain $1$ output, which is put away.
Next, we use $\nphys$ qubits out of $2\nphys -1$ qubits
to obtain the second output, which is put away.
Next, we use $\nphys$ qubits out of $2\nphys -2$ qubits
to obtain the third output, which is put away.
Continuing, we apply $P$ a total of $\nphys$ times sequentially.
However, the circuit depth of $P'$ now 
is proportional to $\nphys$ times the depth of $P$.

In contrast, in Theorem~\ref{thm:dasymptotic},
we obtain $d$-th order error reduction at fixed ratio of physical to logical 
with a $T$-gate depth proportional to $d$.
That is, the protocols we have constructed are space and time efficient
in terms of $T$-gates.
It should be noted that we have ignored all Clifford gates,
and thus the $T$-efficiency claim is meaningful 
when $T$-gates are much more costly than Clifford gates.

Including Clifford gates, the entire circuit of Theorem~\ref{thm:dasymptotic}
has depth $O(d^2 \log d)$ if geometric locality of the gate is ignored.
Indeed, an $n$-qubit stabilizer code's encoding circuit can be constructed
in depth $O(n \log n)$.
It is essentially Gauss elimination of a binary symplectic matrix.
The Gauss elimination for a single column
can be done by a circuit of depth $O(\log n)$,
and hence the entire Gauss elimination can be done 
by a circuit of depth $O(n \log n)$.
Since we are using a good family of codes, 
the circuit depth of an encoding circuit is $O(d \log d)$,
and there are $O(d)$ encoding/decoding steps.
The total gate count (spacetime) per output is $O(d^2 \log d)$.

Jones~\cite{Jones2012} 
constructed a family of protocols giving $\gamma \rightarrow 1$.
(For the definition of $\gamma$, see the discussion below Theorem~\ref{thm:dasymptotic}.)
This protocol builds upon Knill's~\cite{Knill2004a},
and is in fact a subclass of ours.
Implicitly in Ref.~\cite{Jones2012},
the inner code is obtained by concatenating a $[[k+4,k,2]]$ code $\nu$ times
where $k$ is even, 
yielding $[[k^{\nu}+4(k^\nu-1)/(k-1),k^\nu,2^\nu]]$ normal weakly self-dual codes.
(For the definition of normal and hyperbolic codes, see~\ref{sec:innercodes}.)
The outer code is a hypercubic grid
with checks along coordinate axes.
The dimension of the grid is proportional to $\nu \cdot 2^\nu$,
since a check using an inner code of distance $2^\nu$
has to be implemented $2^{\nu-1}$ times
along $2^{\nu -1}$ independent axes,
and the concatenation for the inner codes is also performed on a grid.
For a given $\nu$, in the large $k$ limit,
one obtains a protocol
that consumes $2^\nu + 1 = d + 1$ noisy $T$ gates per output
at $d$-th order of reduction in error.
The asymptotic performance of this family 
is similar to our Theorem~\ref{thm:noutasymptotic}.
If we worked with normal codes of even distance in Theorem~\ref{thm:noutasymptotic},
then we would have concluded that the input $T$ count per output is $d+1$.
Note that the space requirement
of Jones' scheme is much larger than that of Theorem~\ref{thm:dasymptotic},
as the grid outer code used by Jones would require roughly
$k^{\nu 2^\nu} = k^{O(d)}$ qubits.
This exponential dependence on $d$ also holds for the protocols
in the proof of Theorem~\ref{thm:noutasymptotic}.
In contrast, Theorem~\ref{thm:dasymptotic} requires only $O(d)$ qubits.

In comparison with Jones' scheme,
our main technical contribution is
to explicitly separate inner and outer codes
with general criteria for them to be useful in distillation.
The criteria are that inner codes have to be weakly self-dual,
and outer codes have to be sensitive 
(i.e., the parity check matrix $M$ should satisfy $2|Mv|+|v| \ge d$
for any nonzero error vector $v$).
These requirements are rather simple,
so we were able to consider random constructions for them.
In fact, a large pool of existing codes can be incorporated.
In particular, we note that there are quantum BCH codes~\cite{qbch},
some of which have encoding rate greater than $1/2$ 
with code distances 5 and 7 at modest code lengths.

For Theorem~\ref{thm:noutasymptotic} 
we have resorted to a graph theoretic construction of outer codes from 
Ref.~\cite{furedi1995graphs}. 
This is sufficient for the proof, 
but one may wish to have more concrete examples.
In fact, a hypercubic grid of dimension $D \ge 3$ 
yields an outer code of desired sensitivity for $d=2D+1$,
which will be analyzed in detail elsewhere~\cite{enum}.

\bibliographystyle{apsrev4-1}
\nocite{apsrev41Control}
\bibliography{magic-ref}

\appendix

\section{Specific Small Inner and Outer Codes}
\label{app:codes}
In this appendix, we give some specific inner and outer codes, 
either giving the stabilizers or referring to the literature.
Some of these codes are explained in the basic distillation section~\ref{sec:basic} 
or in numerical simulations~\ref{sec:simulation} in the body of the paper.
Other codes have other useful properties that we describe for specific codes.

When we give stabilizers for an inner code, each row gives one stabilizer generator.
Each row consists of a binary string, of length equal to the number of qubits, 
with a $1$ indicating that that stabilizer acts on that qubit, i.e., we give the parity check matrix.

\subsection{Inner Codes}

\subsubsection{$[[4,2,2]]$ Inner Code}

This is explained in Section~\ref{sec:basic}.

\subsubsection{$[[16,6,4]]$ Inner Code}
\label{app1664}
This is from Ref.~\cite{Hastings}.
The stabilizer matrix is the classical Hadamard code $[16,5,8]$.
\begin{center}
\begin{tabular}{l}
1111111111111111\\
1111111100000000\\
1111000011110000\\
1100110011001100\\
1010101010101010
\end{tabular}
\end{center}

\subsubsection{$[[7,1,3]]$ Inner Code}
This is explained in Section~\ref{sec:basic}.

\subsubsection{$[[17,1,5]]$ Inner Code}
This is an instance of color code~\cite{Bombin_2006,BravyiCross2015}.
It is the smallest normal code that we found with $\kin=1$ and distance $5$.
The stabilizers are: 
\begin{center}
\begin{tabular}{l}
11011010101000010\\
01100011001100110\\
00110110010011001\\
00010101000111110\\
00001110010011101\\
00000101000110000\\
00000011111011010\\
00000001010100001
\end{tabular}
\end{center}

\subsubsection{$[[21,3,5]]$ and $[[23,1,7]]$ Inner Codes}
The (extended) Golay code is a classical self-dual code which has parameters $[24,12,8]$.
Puncturing a bit by collecting all code words that has zero on that bit,
we obtain a self-orthogonal $[23,11,7]$.
From this, we obtain a weakly self-dual CSS code which is $[[23,1,7]]$.
(Reichardt has used this code in a very different distillation protocol~\cite{Reichardt2005}.)
There are many positions to puncture, but due to high symmetry of the Golay code,
the resulting codes have the same weight enumerators.
One can pipeline the
$[[23,1,7]]$ code after
the protocol of section \ref{1715} 
to give a protocol with one output magic state and seventh order suppression in error.

By puncturing the $[[23,1,7]]$ code twice, we obtain a $[[21,3,5]]$ code.
In a numerical search, we did not find any smaller normal code with $\kin=3$ and distance $5$.  The stabilizers of the $[[21,3,5]]$ code are:

\begin{center}
\begin{tabular}{l}
100000000011110110100\\
010000000001111011010\\
001000000110110011001\\
000100000011011001101\\
000010000001101100111\\
000001000110111000110\\
000000100101010010111\\
000000010100100111110\\
000000001100011101011
\end{tabular}
\end{center}

\subsubsection{Other Inner Codes}
Some other examples of inner codes can be found in Ref.~\cite{Hastings},
from which we reproduce optimal $\kin$ found for given distance and $\nin$ in Table~\ref{tb:evenCodeTable}.
For stabilizers, see Ref.~\cite{Hastings}.

\begin{table}
\centering
\begin{tabular}{|c||c|c|c|c|c||c|c|c|}
\hline
$\nin$ & 16 & 20 & 24 & 28 & 30 & 20 & 28 & 30 \\
\hline
$\kin$ & 6  & 8  & 12 & 14 & 16 & 2  & 4  & 6 \\ 
\hline
$d$    & 4  & 4  & 4  & 4  & 4  & 6  & 6  & 6 \\
\hline
\end{tabular}
\caption{Parameters of small hyperbolic weakly self-dual CSS codes~\cite{Hastings}.
The code $[[20,2,6]]$ can be constructed from the five-qubit code $[[5,1,3]]$
by going through the Majorana operators~\cite{BravyiLeemhuisTerhal2010Majorana},
while the others cannot be constructed in this way.
}
\label{tb:evenCodeTable}
\end{table}

\subsection{Outer Codes}
\subsubsection{Petersen Graph Code}
The outer code in section \ref{2135} has $4$ qubits
uses $4$ checks of weight $3$.
However, from Lemma~\ref{bulkoc}, 
we know that there is some $\nout$ such that there is a code which is $(4,2)$ sensitive with weight-$3$ checks, 
which has only $(2/3) \nout$ checks.  
We now explain this code.
The proof of Lemma~\ref{bulkoc} reduces the problem of finding such a code to finding a bipartite graph $G$.  
Since the set $B$ of that lemma has degree $2$, 
we can equivalently define the code by a graph $H$ 
such that the vertices of the graph $H$ correspond to checks and the edges correspond to bits; 
i.e., in the case that $B$ has degree $2$, 
the possible bipartite graphs $G$ are in one-to-one correspondence with degree-$3$ graphs $H$.  
Then, from the proof of Lemma~\ref{bulkoc} 
we know that if $H$ has girth at least $5$, 
then the corresponding code is $(4,2)$ sensitive.  
The smallest such graph $H$ is known to be the Petersen graph.  
This is a degree-$3$ graph with $15$ vertices and $10$ edges.
This graph can be thought of as the dodecahedron with antipodes identified.
Note that the girth being 5 is optimal in this case,
because if $H$ has girth $4$, then there is a weight $4$ error that violates no checks.

\section{Circuits}
\label{app:circuits}
In this section we give circuits for some of the protocols above.  Boxes labelled $Enc$ or $Enc'$ denote encoding and decoding circuits, which are Cliffords.  The number in the box indicates what code is used.  $H$ denotes Haamard, $M$ denotes measurement in $Z$ basis, $JM_x$ denotes measurement in $X$ basis, $Czs$ denotes control-$Z$ operations.

\begin{figure}
\includegraphics[width=6in]{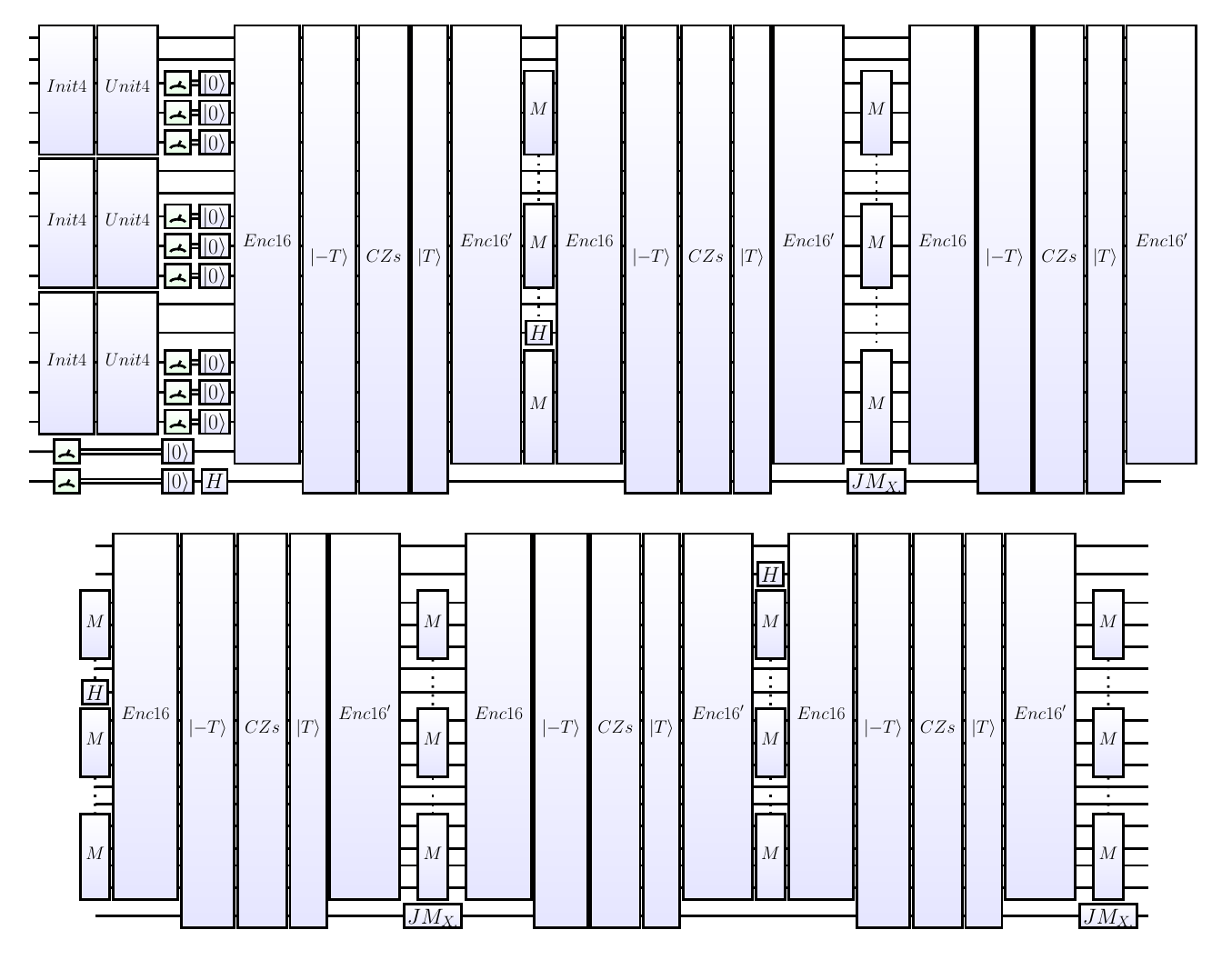}
\caption{Pipelined circuit using $[[16,6,4]]$ code described in section \ref{1664}.}
\label{1664fig}
\end{figure}

\begin{figure}
\includegraphics[width=6in]{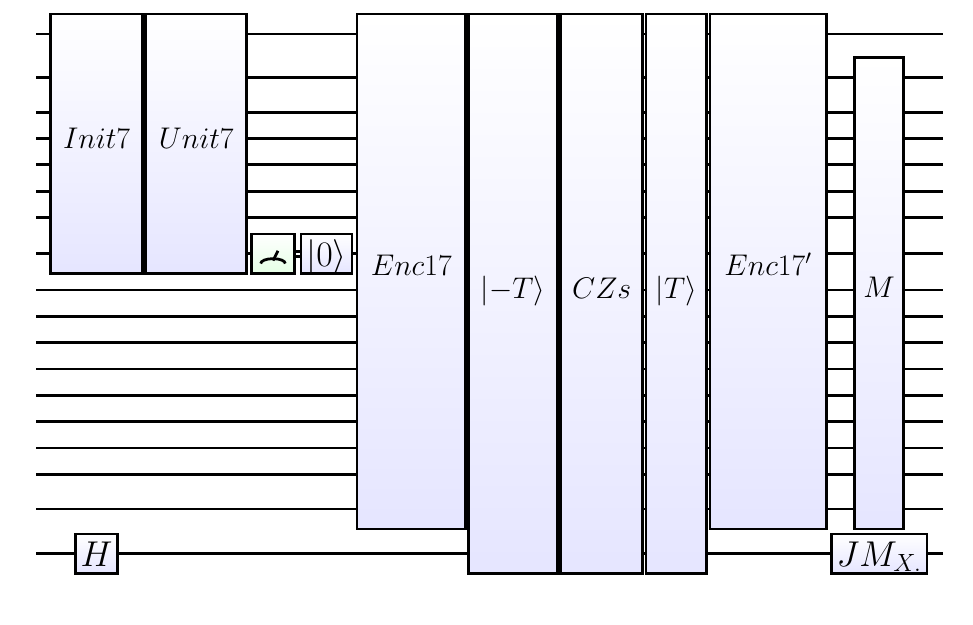}
\caption{Pipelined circuit using $[[17,1,5]]$ code described in section \ref{1715}.}
\label{1715fig}
\end{figure}

\begin{figure}
\includegraphics[width=6in]{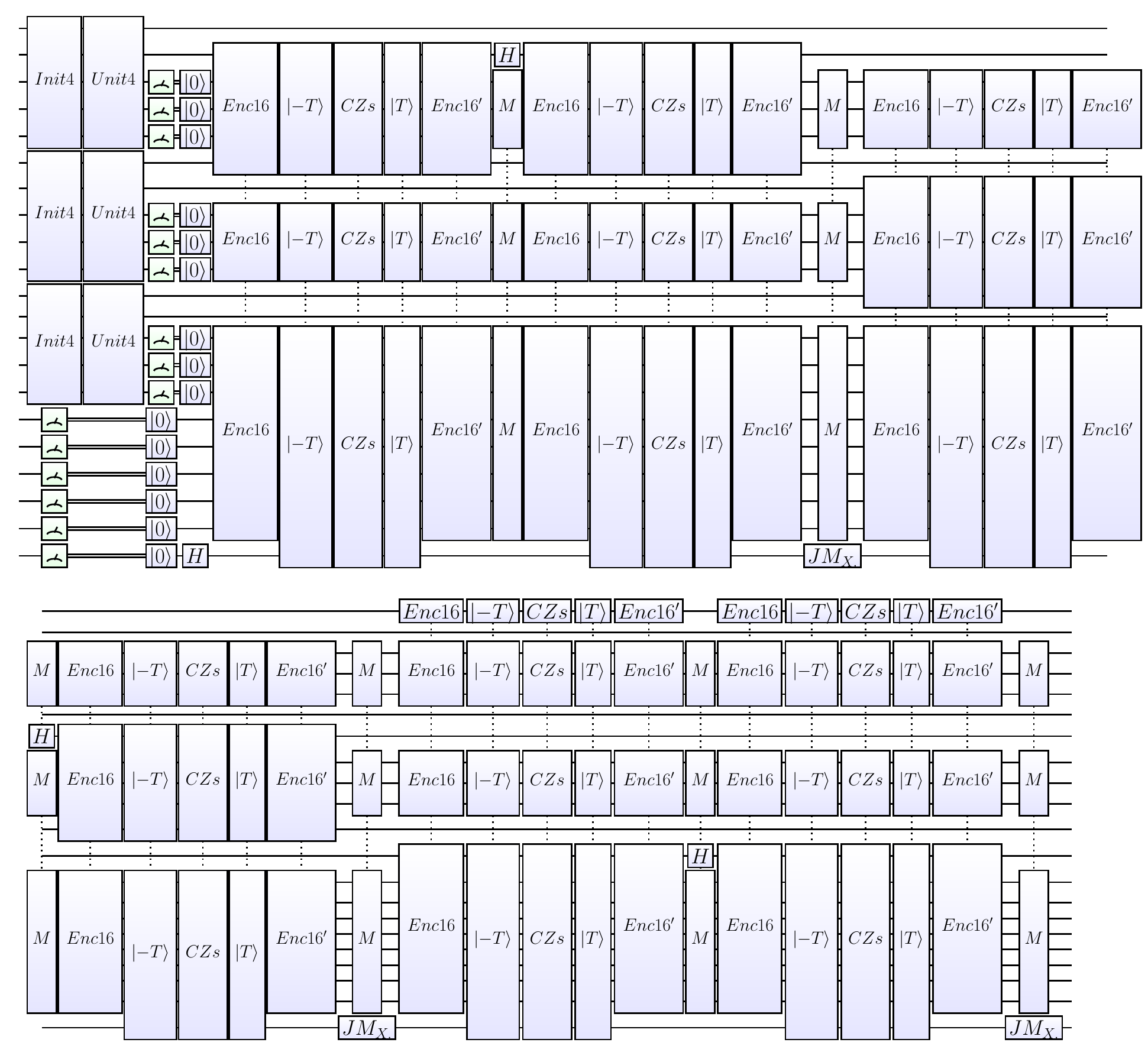}
\caption{Pipelined circuit using $[[16,2,4]]$ code described in section \ref{1624}.}
\label{1624fig}
\end{figure}

\begin{figure}
\includegraphics[width=6in]{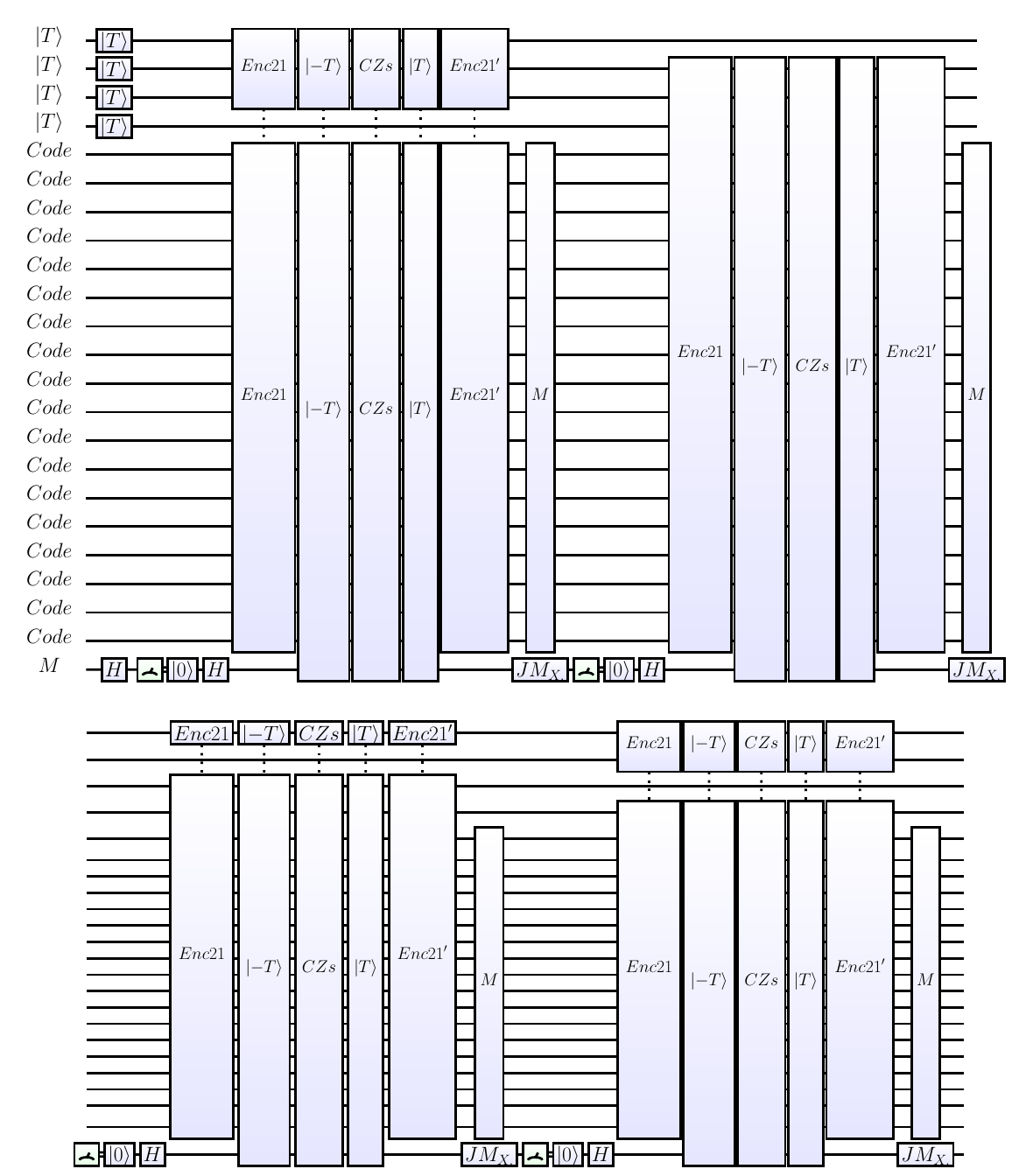}
\caption{Circuit using $[[21,3,5]]$ code described in section \ref{2135}}
\label{2135fig}
\end{figure}


\section{Coincidence among protocols}
\label{app:coincidence}

The Steane code has 7 $Y$-logical operators of weight 3.
In the distillation protocol using the Steane code as the inner code,
each logical error may appear in 4 different ways in the column that implements control-$H^{\otimes 7}$~\cite{Knill2004a}.
The measurement error at the lowest order can happen in 7 ways.
Overall, the cubic error can happen in $7 \cdot 4 + 7 =35$ ways.
This number matches the number of logical operators of weight 3
in the Bravyi-Kitaev 15-to-1 protocol~\cite{BravyiKitaev2005Magic}.
Reichardt~\cite{Reichardt2005} has noted this equivalence.

When we pipelined $[[7,1,3]]$ to $[[17,1,5]]$,
there are 48 $T$ gates and 1 $T$ states.
The number of logical operators of weight 5 in $[[17,1,5]]$ is 51.
Each logical operator can appear in 16 different configurations 
in the column that implements control-$H^{\otimes 17}$.
The measurement error from the 17-qubit code routine
occurs in 17 ways at the leading order.
Thus, the output error probability has leading term
$(51 \cdot 16 + 35 \cdot 17) \epsilon^5 = 1411 \epsilon^5$.
The coefficient matches the number of $Z$-logical operators of weight 5 
in the $[[49,1,5]]$ code, as reported in \cite[App.~B]{BravyiHaah2012Magic}.

Bravyi and Cross~\cite{BravyiCross2015}
gave a recursive construction for triply even codes.
They showed how to convert a pair of 
a (classical) triply even code of length $n_{t-1}$ with dual distance $2t-1$
and some (classical) self-orthogonal code of length $m_t$ with dual distance $2t+1$
into
a triply even code of length $n_t = 2m_t + n_{t-1} $ with dual distance $2t+1$.
The formula gives another coincidence with our pipeline.
$n_{t-1}$ is the number of $T$ gates/states, sitting before the final $H$-measurement routine
in the pipeline, and $m_t$ is the code length of the final $H$-measurement routine.
Thus, the recursive formula $n_t = 2m_t + n_{t-1}$ correctly counts
the number of $T$ gates/states used in the pipeline.

A similar coincidence was observed by Jones~\cite{Jones2012},
where the leading error probabilities of the distillation protocols by 
a family of weakly self-dual $[[k+4,k,2]]$ codes with $(k,0)$-magic basis
and those by a family of triorthogonal codes~\cite{BravyiHaah2012Magic}
are shown to be the same as $(3k+1)\epsilon^2$.
The total number of $T$ gates/states were also the same as $3k+8$.

\section{Qudits}
\label{app:qudits}

In this section, we consider an extension to qudits with local Hilbert space dimension $p > 2$, 
with $p$ a prime.
Previously, Reed-Muller codes over prime fields were used~\cite{Campbell2012,Campbell2014},
but our approach is more efficient.
In terms of the scaling exponent $\gamma$ 
(see Sec.~\ref{sec:asymptotic}),
previous schemes for a fixed $p$ 
did not achieve $\gamma \to 1$,
whereas our protocols below will.%
\footnote{
In Ref.~\cite{Campbell2014}, it is shown that $\gamma$ can be arbitrary close to $1$
in the limit of large $p$.
}
Specifically, we consider outer codes defined by {\em binary}
parity check matrices $M$ that are sensitive enough.
Those from biregular bipartite graph can be used.
The parity check matrices being binary is
for technical simplicity, 
and is not conceptually crucial ingredient,
though relaxing this condition might involve complicated calculation.
The main difference from the qubit protocols
is in inner codes.
We only consider analogs of normal codes,
where transversal $S$ gates become logical $S$ gates.
We show that measurement in the eigenbasis of $SX$
can be implemented fault-tolerantly 
on an inner code of parameters $[[\nin,\kin,d]]_p$ using $4\nin$ $T$ gates
if $p > 3$, or $2\nin$ $T$ gates if $p = 3$.
Therefore, the Lemma~\ref{senslemma} generalizes to odd prime dimensional qudits, 
where input $T$ count is $\nout + 4 \nin m$ if $p > 3$ 
where $m$ is the number of checks in the outer code, 
or $\nout + 2\nin m$ if $p = 3$.
We complement the construction in this section with a probabilistic existence proof for a good family of inner codes in the sense of Section~\ref{sec:asymptotic};
see Lemma~\ref{lem:qudit-rwsd} and 
combine it with the proof of Lemma~\ref{lem:randomcodes-distance}.
We also point out that quantum Reed-Muller code gives a family whose encoding rate approaches 1 for a given distance.
Hence, the statement of Theorem~\ref{thm:dasymptotic} remains unchanged in either case where $p =3$ or $p > 3$,
but that of Theorem~\ref{thm:noutasymptotic} for $p >3 $ becomes that, for odd $d \ge 5$ the number of input magic states per output approaches 
$1+4(d-1)/2 = 2d-1$ in the large code length limit.
The input $T$ count per output 
is still $d(1+o(1))$ if $p = 3$.
If one wishes to improve the asymptotic input count for $p > 3$,
then one has to solve an equation analogous to 
\eqref{eq:decomposition-CSX}.

Consider a basis of states $\ket j$, where $j=0,1,\ldots, p-1$ is periodic mod $p$.
We use the following operators and phase factor
\begin{align}
\omega&=e^{2\pi i/p}, &
 Z&=\sum_j \omega^j \ket j \bra j ,&
 X&=\sum_{j} \ket{j+1}\bra{j} ,
\nonumber \\
 H&=\frac{1}{\sqrt{p}} \sum_{j,k}\omega^{jk} \ket j \bra k ,&
 S&=\sum_j \omega^{j(j-1)/2}  \ket j \bra j , &
^CX &= \sum_j \ket j \bra j \otimes X^j , \nonumber\\
&& U(n) &= \sum_j \ket{nj}\bra j \quad (n \neq 0)
\end{align}
which generate the Clifford group.
It holds that $ZX=\omega XZ$.

We will work with a generalization of {\it normal} codes throughout this section, 
ignoring hyperbolic codes.  
One reason is that we cannot achieve control-Swap in the same way as we could previously.
The general method in the qubit case was to use some non-Clifford operation such as a $T$ gate,
conjugating controlled Pauli to obtain control-Swap on the code space of some code.
However, Swap is of order 2 while control-$Z$ is of order $p$.
One might hope to obtain a control-permutation of order $p$, 
but we do not consider this possibility.
For normal codes, we do not try to implement the control-Hadamard as was done before, 
because Hadamard is of order $4$ for $p>2$, and hence is not conjugate to control-$Z$.

\subsection{Preliminary}
Let us first define a $T$-gate~\cite{Campbell2012}.
The cases $p=3$ and $p>3$ are going to be different.
Define
\begin{align}
g(j) &:= \sum_{k=0}^{j} \frac 1 2 k(k-1) = \frac 1 6 (j^3-j),\\
g( j + p) &= g(j) \mod p \quad \text{ if $ p > 3$}
\end{align}
where the second line is because $6$ is invertible in $\FF_{p > 3}$,
and ensures that $g$ is a well defined function on $\FF_{p>3}$.
All arithmetic in the exponent of 
$\omega$, $Z$, $X$, and $S$ will be over $\FF_p$ for both $p = 3$ and $p > 3$.
Define the $T$-gate as
\begin{align}
T &= \sum_{j=0}^{p-1} \omega^{g(j)} \ket j \bra j,&
TX T^{-1} X^{-1} &= 
S & \text{if $p > 3$,}\label{thg3}
\\
T &= \ket 0 \bra 0 + e^{-2\pi i/ 9} \ket 1 \bra 1 + e^{2 \pi i / 9} \ket 2 \bra 2,&
T X T^{-1} X^{-1} &= e^{-2 \pi i /9} S &\text{if $p=3$}. \label{th3}
\end{align}
These show that in both cases the $T$ gate is at the third level of the generalized Clifford hierarchy.
More generally, we find
\begin{align}
T^m X T^{-m}&=
\begin{cases}
                   S^m X & \text{ for } p > 3, \\
e^{-2 \pi i m / 9} S^m X & \text{ for } p = 3.
\end{cases} 
\label{eq:TXTeqSX} 
\end{align}
For both $p=3$ and $p>3$, 
define $\ket{\psi_m}$ for $m = 0,1,\ldots, p-1$ be the $(+1)$-eigenstate of $T^m X T^{-m}$:
\begin{align}
T^m X T^{-m} \ket{\psi_m} =  \ket{\psi_m}.  
\end{align}
Any state $\ket{\psi_m}$ for $m=1,\ldots,p-1$ will be a ``magic state.''

How would one use these magic states?
Suppose $p>3$.
Consider a pair of qudits in a state
$\sum_j a_j \ket j \otimes \ket{\psi_m}$.
Apply a control-$X$ operation
with the first qudit as source and the second qudit as target.
This maps the state to
\begin{align}
\frac{1}{\sqrt{p}} \sum_{j,k} a_j \omega^{m g(k)} \ket{j,k+j}.
\end{align}
Now measure the second qudit in the computational basis, obtaining a result $\ell$.
This gives a state on the first qudit
$\sum_j a_j \omega^{m g(\ell-j)} \ket j$.
Thus, the transformation implemented on the first qudit is $\sum_j \omega^{mg(\ell-j)} \ket j \bra j$.
Expanding the exponent, we have
\begin{align}
\label{inject}
m g(\ell-j)&= mg(\ell)-mg(j)+\frac{m}{2}( \ell j^2 - \ell^2  j) \\ \nonumber
&=
m g(\ell) - mg(j) +m \ell \frac{j(j-1)}{2} - m \frac{\ell(\ell - 1) }{2} j.
\end{align}
The first term on the right-hand side of Eq.~(\ref{inject}) corresponds to an irrelevant global phase factor.
The second term, $-mg(j)$, corresponding to implementing transformation $T^{-m}$ on the first qudit.  
The third term gives a phase factor that can be corrected by applying a power of the $S$ gate
and the last term gives phase factors that can be corrected by a power of the $Z$ gate.
Thus, the state injection procedure works, 
in that we can use a magic state $\ket{\psi_m}$ to produce a transformation $T^{-m}$ up to Clifford corrections.

When $p=3$, we use the same state injection, with $m=1$.
One finds after some calculation that if the measurement outcome is $\ell=0$,
the implemented operations is $T^{-1}$ to the source,
if $\ell=1$, it is $e^{-2\pi i / 9} ST^{-1}$,
and if $\ell=2$, it is $e^{2\pi i / 9} Z^{-1} S^{-1}  T^{-1}$.  
Thus, in all cases, the implemented operation is $T^{-1}$ up to a Clifford correction.

The injected $T$ gates together with Cliffords
form a universal gate set~\cite[App.~D]{Campbell2012}.
This is a corollary of \cite[Thm.~7.3]{NebeRainsSloane2001}
that says the Clifford group is a maximal finite subgroup of $U(p^n)$ up to global phase factors,
and \cite[Cor.~6.8.2]{NebeRainsSloane2006book}
that says any infinite subgroup (even after quotienting out phase factors)
containing the Clifford group is dense in $U(p^n)$.

Note that $T^m$ and $T^{-m}$ are interconvertible by Cliffords.
More generally,
it is possible to use Clifford operations to convert a gate $T^m$ 
into another gate $T^{m'}$ with $m'=mn^3$ for $n \neq 0$,
by $U = U(n) = \sum_j \ket{nj}\bra j$ gate.
For $p > 3$, we have $U^\dagger T^m U = \sum_j \omega^{m g(nj)} \ket j \bra j$
where
\begin{align}
\label{inter}
g(nj) 
= \frac{1}{6}(n^3j^3-nj) 
= n^3 g(j)+\frac{n^3-n}{6} j.
\end{align}
Thus $U^\dagger T^m U = T^{mn^3} Z^{m\frac{n^3-n}{6}}$, 
and so indeed $T^m=C_1 T^{m'} C_2$ for some Cliffords $C_1,C_2$.
For $p=3$, we see $T = U(-1) T^{-1} U(-1)$.
Now, for which pairs $m,m'$ can we find an $n$ such that $m'=mn^3$?
The multiplicative group $\FF_p^\times$ is cyclic of order $p-1$.
Therefore, when $p-1$ is not a multiple of $3$,
then $\FF_p^\times \ni n \mapsto n^3 \in \FF_p^\times$ is a bijection,
and any $T^m$ can be interconverted into any other $T^{m'}$.
If $p-1$ is a multiple of $3$, there are three distinct classes of $T$ gates.
Since $-1 = (-1)^3$, $T^m$ and $T^{-m}$ are always interconvertible.

\subsection{Inner codes}

For arbitrary vector $v \in \FF_p^{\nin}$ 
we write $X(v) = X^{v_1} \otimes \cdots \otimes X^{v_\nin}$,
and $Z(v) = Z^{v_1} \otimes \cdots \otimes Z^{v_\nin}$.
As in the weakly self-dual CSS code construction for qubits,
it is straightforward to define a stabilizer code starting from
a self-orthogonal subspace $\calS \subset \calS^\perp \subset \FF_p^{\nin}$:
The stabilizer group is generated by $X(v)$ and $Z(v)$ where $v \in \calS$.
The quotient space $\calS^\perp / \calS$ is in one-to-one correspondence with
the set of $X$-type ($Z$-type) logical operators,
and the induced dot product on $\calS^\perp/\calS$ is non-degenerate.
In Section~\ref{symmformfinite} below,
we show that there is a basis $\{v^{(1)},\ldots, v^{(\kin)} \}$ of $\calS^\perp / \calS$
such that $ v^{(i)} \cdot v^{(j)} = \alpha_j \delta_{ij} $
where the scalars $\alpha_j$ are all $1$ possibly except the last one.
For simplicity we restrict ourselves to cases where
\begin{align}
 (1,1,\ldots,1) &\in \calS ,\label{eq:all1inclusion}\\
 v^{(i)} \cdot v^{(j)} &= \delta_{ij}. \label{eq:FpSymForm},
\end{align}
i.e., the second condition is that all scalars $\alpha_j$ are equal to $1$.
The first condition demands that $\nin$ to be a multiple of $p$.
The second is a mild restriction,
since $(\calS \oplus \calS)^{\perp} / (\calS \oplus \calS)$
always has a basis such that \eqref{eq:FpSymForm} holds.
Given a basis $\{ v^{(j)}\}$ satisfying \eqref{eq:FpSymForm},
we define logical operators of the inner code as
\begin{align}
 \tilde X^{(j)} &= X(v^{(j)}),\nonumber\\
 \tilde Z^{(j)} &= Z(v^{(j)}),
\end{align}
which indeed obey the commutation relation 
\begin{align}
 \tilde Z^{(a)} \tilde X^{(b)} = \omega^{\delta_{ab}} \tilde X^{(b)} \tilde Z^{(a)}
\end{align}
of the generalized Pauli operators on $\kin$ qudits.
Thus, this is a generalization of the normal codes in the qubit case.
Due to \eqref{eq:all1inclusion},
the transversal gate $\bar S = S^{\otimes \nin}$ is a logical operator:
\begin{align}
S^j X^k S^{-j} 
 &= 
\omega^{-j k (k+1)/2} Z^{jk} X^k \\
\bar S X(v) \bar S^{-1} 
&= 
\omega^{- (v \cdot v + v \cdot \vec 1)/2}
Z(v) X(v)
\end{align}
where in the second equation the phase factor vanishes when $v \in \calS$.

We will implement the measurement of the stabilizer $T^m X T^{-m}$ 
of the magic state $\ket{\psi_m}$ using the inner codes.
The measurement becomes feasible if $^C(T^m X T^{-m})$ can be implemented
for \emph{logical} qudits.
We begin searching for its fault-tolerant implementation by observing
an identity $^C(T^m X T^{-m}) = T^m (^C X) T^{-m}$
that enables us to implement some controlled Clifford on logical qudits.
The actual action on logical qubits depends on the inner code,
but our conditions (\ref{eq:all1inclusion},~\ref{eq:FpSymForm}) will make it uniform across all logical qudits. 

Recall $T^m X T^{-m} = \eta^{-1} S^m X$ where $\eta = 1$ if $p >3$ and $\eta = e^{2\pi i /9}$ if $p =3$.
The action of the transversal gate $\bar T^m \bar X \bar T^{-m}$
can be deduced by looking at the logical operators and phase.
The answer is
\begin{align}
\bar T^m \bar X \bar T^{-m} 
= 
\eta^{-\nin} \bar S^m \bar X 
\cong
\eta^{\kin-\nin} \prod_{a=1}^{\kin} \eta^{-1}(\tilde S^{(a)} (\tilde Z^{(a)})^{1/2})^{m} 
\end{align}
because
\begin{align}
\begin{cases}
\bar S^m \tilde X^{(a)} \bar S^{-m} 
=
\omega^{-m/2} (\tilde Z^{(a)})^m \tilde X^{(a)} \\ 
\bar S^m \tilde Z^{(a)} \bar S^{-m} 
=
\tilde Z^{(a)}\\
\bar S^m 
 \underbrace{\sum_{v \in \calS} \frac{X(v)}{\sqrt{|\calS|}} \ket{0}^{\otimes \nin}}_{\ket{\tilde 0}^{\otimes \kin}}
 = \ket{\tilde 0}^{\otimes \kin}
\end{cases},
\quad
\begin{cases}
 (S^m Z^{m/2}) X (S^m Z^{m/2})^{-1} = \omega^{-m/2} Z^m X\\
 (S^m Z^{m/2}) Z (S^m Z^{m/2})^{-1} = Z\\
 (S^m Z^{m/2}) \ket 0 = \ket 0
\end{cases}.
\end{align}

Suppose $p > 3$. In order to implement $^C(\tilde S^m \tilde X)$,
we consider an equation and a solution
\begin{align}
 ^C(S^m X) &= (^C X^{1-y}) (^C Z^{u}) (^C S^x) (^C X^y) (^C S^z) (^C Z^s) (^C \omega^t) \label{eq:decomposition-CSX}\\
 ^C(S^m X) &= (^C X^{1/3}) [(^C Z^{3m/8}) (^C S^{3m/4})] (^C X^{2/3}) [(^C S^{m/4}) (^C Z^{m/8})] (^C \omega^{-m/6})
\end{align}
where the control is common for every gate,
and $u,x,y,z,s,t$ are variables.
(Using $^C A = \sum_j \ket j \bra j \otimes A^j$,
one can evaluate matrix elements on both sides.)
Note that the operators in the brackets are powers of $^C(S^m Z^{m/2})$.
This implies that indeed simultaneous $^C(\tilde S^m \tilde X)$ on all logical qudits can be implemented
using $\bar T^m (^C\bar X^{3m/4}) \bar T^{-m}$, 
$\bar T^m (^C\bar X^{m/4}) \bar T^{-m}$, 
controlled Pauli logical operators,
and a power of $Z$ on the control.

When $p = 3$ it suffices to consider $m=1$. 
To remove the phase factor $\eta^{\kin-\nin}$
we require that the $\kin$ is a multiple of 3.
This can be achieved by considering three copies of a given code if necessary.
$\nin$ is already a multiple of $3$ due to \eqref{eq:all1inclusion}.
We can implement $\prod_{a=1}^\kin ~^C(\eta^{-1} \tilde S^{(a)} \tilde X^{(a)})$ by
an identity
\begin{align}
 ^C(\eta^{-1}SX) = (^C X^{-1}) (^C (\eta^{-1}SZ^{-1})) (^C Z) (^C X^{-1}) (^C Z^{-1}) (^C \omega^2) .
\end{align}

We have shown that it is possible to build a fault-tolerant routine to measure $\tilde T^m \tilde X \tilde T^{-m}$.

We have not yet shown how to construct such inner codes.
It is possible to generalize Lemma~\ref{rwsdlemma}
to the case of matrices over a field $\FF_p$ for $p>2$; 
however, the generalization is more difficult 
since the self-orthogonality constraint implies a nonlinear constraint 
on the rows of the matrix so that each row is null; see Lemma~\ref{lem:qudit-rwsd}.
Let us give an alternative construction 
which achieves the scaling similar to Lemma~\ref{fixedd}, 
namely that for any distance $d$,  
one can find a family of normal weakly self-dual qudit CSS codes 
with $X(\vec 1)$ in the stabilizer group 
such that the ratio $\kin/\nin \rightarrow 1$ as $\nin \rightarrow \infty$.
This construction is derived from Reed-Muller codes.
Let $C=RM_{\FF_p}(r,m)$ be a classical Reed-Muller code over $\FF_p$; 
the codewords have length $p^m$.
The dual code is $C^\perp=RM_{\FF_p}(m(p-1)-r-1,m)$; see Theorem 5.4.2 of Ref.~\cite{assmus1992designs}.
For any fixed $r$, for large enough $m$, $C\subset C^\perp$, so the codespace of $C$ is self-orthogonal, 
and $\vec 1$ is in the codespace of $C$.
We use the codespace of $C$ as the space $\calS$, 
and use the CSS construction to define a weakly self-dual code.
For fixed $r$, the rate of $C$ tends to zero at large $m$, 
so the rate of the resulting weakly self-dual tends to $1$.
See Ref.~\cite{RMqubit} for weakly self-dual qubit codes derived from Reed-Muller codes.
To make~\eqref{eq:FpSymForm} hold, it may be necessary to use $\calS \oplus \calS$ instead of $\calS$.

\subsection{Outer codes}

If the inner code has code distance $d$, then we should use an outer code 
with a parity check matrix that is $(d-1, \lceil \frac{d-1}{2} \rceil)$-sensitive.
In full generality,
one would want to use a parity check matrix with entries in $\FF_p$,
where an entry $\beta \neq 0$ would mean a stabilizer $(\eta^{-1}S^m X)^\beta$.
This makes it necessary to have a different logical operator choice than we have used above.

However, a check matrix that is given by the adjacency matrix of a biregular graph 
with large girth is sufficient for us.
Such a check matrix has only 0 and 1 entries,
so no other choice of logical operator is necessary beyond what we have given above.
Recall that a graph with large girth is locally a tree.
Hence, a bad magic state will be caught by many checks 
because it flips a single stabilizer in these checks,
and the required sensitivity is guaranteed.

\section{Symmetric forms over finite fields}
\label{symmformfinite}

We have classified nondegenerate symmetric forms 
over the binary field $\FF_2$ in Section~\ref{sec:innercodes}.
Over a field of odd characteristic,
the set of all finite dimensional vector spaces 
with nondegenerate symmetric forms ({\em quadratic spaces} for short)
constitute an abelian group under the direct sum, after identifying hyperbolic planes as the identity.
This group is known as the Witt group of the field,
and the group structure is well known.
Here we present a self-contained and elementary treatment of 
the Witt group of $\FF_p$,
and classify the quadratic spaces over fields of odd characteristic.
A {\bf square element}, or a {\bf square} for short,
is any member of the set $\{ x^2 : x \in \FF_p^2 \}$.

It is natural to distinguish two cases depending on whether $-1 \in \FF_p$ is a square,
since a one-dimensional quadratic space is classified by $\FF_p^\times / (\FF_p^\times)^2$,
where $\FF_p^\times := \FF_p \setminus \{0\}$ and $(\FF_p^\times)^2 := \{ x^2 ~|~ x \in \FF^\times\}$.
Since the multiplicative group $\FF_p^\times$ is a cyclic group of order $p-1$,
the element $-1$ being the unique element of $\FF^\times$ with multiplicative order 2,
is a square if and only if $p = 1 \mod 4$.

The part of the argument in Section~\ref{sec:innercodes} applies here without any change
where we have inductively converted any non-degenerate symmetric
matrix to a direct sum of a diagonal matrix and blocks of
$
 \begin{pmatrix}
 0 & 1 \\ 1 & 0
 \end{pmatrix},
$
which represents a {\em hyperbolic plane}.
Below, we assume that symmetric matrices are block diagonal in this form.
It is then easy to explain why quadratic spaces constitute a group:
\begin{align}
 \begin{pmatrix} 1 & 1 \\ 1 & -1 \end{pmatrix}^T
 \begin{pmatrix} a & 0 \\ 0 & -a \end{pmatrix}
 \begin{pmatrix} 1 & 1 \\ 1 & -1 \end{pmatrix}
=
\begin{pmatrix} 0 & 2a \\ 2a & 0 \end{pmatrix}
\simeq
\begin{pmatrix} 0 & 1 \\ 1 & 0 \end{pmatrix}
\end{align}
This means that the one-dimensional quadratic space with form $(-a)$ is the inverse of
the space with form $(a)$.
It is important here that $2$ is an invertible element of the field.

We note that the determinant of the symmetric form up to squares
is a nontrivial invariant valued in the multiplicative group $\FF_p^\times / (\FF_p^\times)^2$
which is isomorphic to the additive group $\mathbb Z/2\mathbb Z$.
Let $\alpha \in \FF_p$ be a non-square.

{\em Case I}: $p = 1 \mod 4$ so that $-1 \in (\FF_p^{\times})^2$.
Consider a block $\dm(a,a)$ of the symmetric matrix.
Since $-1$ is a square, we see $\dm(a,a) \simeq \dm(a,-a) \simeq \dm(1,-1) \simeq \dm(1,1)$
under congruent transformations.
Therefore, there are four classes of symmetric matrices up to hyperbolic planes:
$\dm(1)$, $\dm(\alpha)$, $\dm(1,\alpha)$, and $\dm(1,1)$.
By looking at the determinant of the form and the parity of the dimension,
we see that the four classes are distinct elements of the Witt group,
which is hence isomorphic to $\mathbb Z/ 2 \mathbb Z \oplus \mathbb Z / 2 \mathbb Z$.
Given a dimension of quadratic spaces,
we see there are only two exclusive possibilities:
\begin{align}
 \dm(1,1,\ldots,1,1), \quad\text{and}\quad \dm(1,1,\ldots,1,\alpha).
\end{align}

{\em Case II}: $p = 3 \mod 4$ so that $-1 \notin (\FF_p^{\times})^2$.
In this case, we can set $\alpha = -1$.
We claim that $\dm(1,1)$ is not hyperbolic.
If $v = a v_1 + b v_2$ is a vector in this two-dimensional space,
where $v_1,v_2$ are basis vectors with $v_i^2 = 1$
and $a,b \in \FF_p$,
then $v \cdot v = a^2 + b^2$.
Since $-1$ is not a square,
the equation $a^2 + b^2= 0$ does not have any nonzero solution,
and this proves the claim.
Next, we show that $\dm(1,1) \simeq \dm(-1,-1)$.
To this end, we will find a solution to $a^2 + b^2 +1 = 0$ over $\FF_p$.
Once we have such a solution, then we see
\begin{align}
\begin{pmatrix}
 a  & b \\
 b  & -a 
\end{pmatrix}^T
\begin{pmatrix}
 1  & 0 \\
 0  & 1
\end{pmatrix}
\begin{pmatrix}
 a  & b \\
 b  & -a 
\end{pmatrix}
 =
 \begin{pmatrix}
 -1 & 0 \\
 0 & -1
 \end{pmatrix}.
\end{align}
The existence of the solution follows from 
$(\FF_p^\times)^2 + (\FF_p^\times)^2 \not\subseteq (\FF_p^\times)^2$,
which implies that $(\FF_p^\times)^2 + (\FF_p^\times)^2 \ni -1$.
If $(\FF_p^\times)^2 + (\FF_p^\times)^2 \subseteq (\FF_p^\times)^2$,
then $(\FF_p^\times)^2$ would be a monoid under addition contained in a finite group,
and hence would be a group itself,
which must contain $0 \notin (\FF_p^\times)^2$.
Therefore, quadratic spaces given a dimension are classified by the determinant of the form up to squares.
\begin{align}
 \dm(1,1,\ldots, 1,1), \quad\text{and}\quad \dm(1,1,\ldots,1,-1)
\end{align}
The Witt group of $\FF_p$ is isomorphic to $\mathbb Z / 4 \mathbb Z$
generated by $\dm(1)$.

Now we state and prove some facts about quadratic spaces.
\begin{lemma}[Chapter~XV Theorem~10.2 of Ref.~\cite{Lang}]\label{lem:isometryextension}
Let $Q$ be a nondegenerate quadratic space.
If two subspaces $V$ and $U$ are isomorphic by an isometry $\sigma: V \to U$,
then there exists an isometry $\bar \sigma : Q \to Q$ such that $\bar \sigma | _V = \sigma$.
\end{lemma}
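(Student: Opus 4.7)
The plan is to prove this by induction on $\dim V$, following the classical Witt extension strategy. The base case $\dim V = 0$ is trivial (take $\bar\sigma = \mathrm{id}_Q$). The main tool throughout will be reflections: for any anisotropic $w \in Q$, the map $\tau_w(x) = x - \frac{2(x\cdot w)}{w\cdot w}w$ is an isometry of $Q$, well defined because we are in odd characteristic. I will first settle the one-dimensional case and then bootstrap it to arbitrary $\dim V$.

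For $\dim V = 1$, write $V = \FF_p v$ and $u = \sigma(v)$, so that $v\cdot v = u\cdot u$. Assume first that $v$ is anisotropic. A direct computation gives
\begin{align}
(v-u)\cdot(v-u) + (v+u)\cdot(v+u) = 4\, v\cdot v \neq 0,
\end{align}
so at least one of $v-u$ and $v+u$ is anisotropic. If $v-u$ is anisotropic, one checks $\tau_{v-u}(v) = u$ directly. Otherwise $v+u$ is anisotropic and $\tau_{v+u}(v) = -u$; composing with $\tau_u$ then sends $v$ to $u$. If instead $v$ is isotropic, I use nondegeneracy of $Q$ to pick $w \in Q$ with $v\cdot w = 1$, and after replacing $w$ by $w - \tfrac{1}{2}(w\cdot w)\,v$ I may assume $w\cdot w = 0$; I do the analogous construction to get $w'$ paired with $u$. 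Extending $\sigma$ by $\sigma(w) = w'$ gives an isometry of the hyperbolic plane $H = \FF_p v \oplus \FF_p w$ onto $H' = \FF_p u \oplus \FF_p w'$, and now $H$ contains the anisotropic vector $v+w$, reducing the isotropic one-dimensional case to the anisotropic two-dimensional case handled by the inductive step below.

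For the inductive step with $\dim V \geq 2$, the plan is to peel off one dimension. Case A: suppose $V$ contains an anisotropic vector $v$. Let $u = \sigma(v)$; by the anisotropic argument above, there is an isometry $\rho$ of $Q$ with $\rho(v) = u$. Replacing $\sigma$ by $\rho^{-1}\circ\sigma$, I may assume $\sigma$ fixes $v$, and then $\sigma$ restricts to an isometry $V \cap v^\perp \to U \cap v^\perp$ inside the nondegenerate quadratic space $v^\perp$. Applying the inductive hypothesis inside $v^\perp$ (which has $\dim Q - 1 < \dim Q$) extends this restriction to an isometry of $v^\perp$, and combining with the identity on $\FF_p v$ gives the desired $\bar\sigma$. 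Case B: $V$ is totally isotropic. Pick $v \in V \setminus \{0\}$ and use the hyperbolic-plane construction from the one-dimensional isotropic case, taking care to choose $w$ orthogonal to a fixed complement of $\FF_p v$ in $V$, and similarly for $w'$ relative to $U$. This enlarges $\sigma$ to an isometry on $V \oplus \FF_p w$, whose domain now contains the anisotropic $v+w$, placing us in Case A.

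The main obstacle will be Case B: I must verify that the partner vector $w$ can be chosen so that simultaneously $v\cdot w = 1$, $w\cdot w = 0$, and $w$ annihilates a chosen complement $V_0$ of $\FF_p v$ in $V$. This amounts to finding $w \in V_0^\perp$ with $v\cdot w = 1$, which exists because nondegeneracy of $Q$ and total isotropy of $V$ together guarantee that the linear functional $w \mapsto v\cdot w$ is nonzero on $V_0^\perp$ (since $v \in V_0^\perp$ but $v \notin (V_0^\perp)^\perp = V_0 + Q^\perp = V_0$). The rescaling $w \mapsto w - \tfrac{1}{2}(w\cdot w)v$ then kills $w\cdot w$ without disturbing the other conditions. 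Once this bookkeeping is in place, the induction closes and the extended isometry $\bar\sigma$ is obtained.
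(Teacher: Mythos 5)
The paper does not prove this lemma at all: it is quoted verbatim from the literature (Lang's \emph{Algebra}, Ch.~XV, Thm.~10.2, i.e., Witt's extension theorem) and used as a black box in Appendix~E. What you have written is a correct, self-contained proof, and it is essentially the classical argument that the cited source gives: reflections $\tau_w$ (legitimate here because the characteristic is odd), the identity $(v-u)\cdot(v-u)+(v+u)\cdot(v+u)=4\,v\cdot v$ to move one anisotropic vector to another, splitting off $\FF_p v \perp v^\perp$ when $V$ contains an anisotropic $v$, and a hyperbolic extension when $V$ is totally isotropic. Your Case~B bookkeeping is right: $v\in V_0^\perp$ but $v\notin (V_0^\perp)^\perp=V_0$ forces the functional $w\mapsto v\cdot w$ to be nonzero on $V_0^\perp$, and the correction $w\mapsto w-\tfrac12(w\cdot w)v$ preserves both $w\in V_0^\perp$ and $v\cdot w=1$. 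So there is no new route here, only a proof supplied where the paper chose citation; that is a reasonable thing to do, since the paper's appendix otherwise aims to be self-contained about quadratic forms over $\FF_p$.

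One structural point to repair: the induction cannot literally be on $\dim V$, because Case~B first \emph{enlarges} the subspace to $V\oplus \FF_p w$ and then invokes Case~A for a subspace of dimension $\dim V+1$, whose own reduction lands on a subspace of dimension $\dim V$ again --- outside the scope of an inductive hypothesis indexed by $\dim V$. The recursion is nonetheless well founded because every pass through Case~A strictly decreases the ambient dimension (you even note that $v^\perp$ has dimension $\dim Q-1$), so the clean fix is to induct on $\dim Q$, treating Case~B and the isotropic one-dimensional case as preprocessing steps that feed into Case~A within the same ambient space. With that relabeling the argument closes as written.
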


\begin{lemma}
Let $N$ be a null subspace (on which the symmetric form vanishes) 
of a nondegenerate quadratic space $Q$ over $\FF_p$.
Then, $Q$ is isometric to the orthogonal sum of $N^\perp / N$ 
and a minimal hyperbolic subspace that contains $N$.
\end{lemma}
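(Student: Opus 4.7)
The plan is to prove the lemma by induction on $k := \dim N$, at each stage peeling off a single hyperbolic plane. The base case $k = 0$ is immediate, with empty hyperbolic summand. For $k = 1$, write $N = \FF_p e$ with $e \neq 0$; nondegeneracy of $Q$ produces a $v \in Q$ with $e \cdot v = 1$ (after rescaling), and because $p$ is odd the vector $f := v - \tfrac12 (v \cdot v)\, e$ satisfies $e \cdot f = 1$ and $f \cdot f = 0$. Then $H_1 := \FF_p e \oplus \FF_p f$ is a hyperbolic plane, and since it is nondegenerate $Q = H_1 \oplus H_1^\perp$ as an orthogonal sum.

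For the inductive step with $k \ge 2$, pick any nonzero $e \in N$ and build the hyperbolic plane $H_1$ exactly as above. The crucial observation is that $N \subseteq \FF_p e \oplus H_1^\perp$: writing an arbitrary $x \in N$ as $x = a e + b f + y$ with $y \in H_1^\perp$, the identity $0 = e \cdot x = b$ (which holds because $N$ is null) forces $b = 0$. Consequently $N_1 := N \cap H_1^\perp$ has dimension $k-1$ and is a null subspace of the nondegenerate quadratic space $H_1^\perp$. By the inductive hypothesis, there is a hyperbolic subspace $H' \subseteq H_1^\perp$ of dimension $2(k-1)$ containing $N_1$, together with an orthogonal decomposition $H_1^\perp = H' \oplus H''$. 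Set $H := H_1 \oplus H'$; this is an orthogonal direct sum of hyperbolic planes, hence hyperbolic of dimension $2k$, and it contains $N$ by the observation above. Nondegeneracy of $H$ gives $Q = H \oplus H^\perp$ with $H^\perp = H''$.

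It remains to identify $H^\perp$ with $N^\perp/N$ and to verify minimality. Since $N \subseteq H$, one has $H^\perp \subseteq N^\perp$ and $H^\perp \cap N \subseteq H^\perp \cap H = 0$. A dimension count ($\dim H^\perp = \dim Q - 2k$ and $\dim N^\perp = \dim Q - k$) then forces $N^\perp = H^\perp \oplus N$, so the composition $H^\perp \hookrightarrow N^\perp \twoheadrightarrow N^\perp / N$ is an isometry of quadratic spaces. Minimality is automatic: any hyperbolic subspace has maximal null subspaces of exactly half its dimension, so a hyperbolic space containing the $k$-dimensional null subspace $N$ must have dimension at least $2k$, which matches the $H$ produced.

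The main obstacle is making the induction close cleanly. The hyperbolic extension $H'$ is constructed inside $H_1^\perp$ around $N_1 = N \cap H_1^\perp$, which a priori covers only a proper subspace of $N$. The one-line observation $N \subseteq \FF_p e \oplus H_1^\perp$, forced by $N$ being null, is precisely what bridges this gap and guarantees $N \subseteq H_1 \oplus H' = H$. Two ingredients are essential for the construction to work at all: the invertibility of $2$ in $\FF_p$ (i.e.\ odd characteristic), used to define $f$ via completing the square, and the nondegeneracy of $Q$, used to produce the partner $v$ in the first place.
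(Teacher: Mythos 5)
Your proof is correct, and it takes a genuinely different route from the paper's. The paper dispatches this lemma in three lines by invoking Witt's extension theorem (the preceding lemma, cited from Lang): extending the identity isometry shows that the hyperbolic extension of $N$ exists, the form then splits as $\Lambda'\oplus\lambda$ with $\lambda$ hyperbolic, and $N^\perp/N$ carries $\Lambda'$. You instead construct the hyperbolic extension from scratch by induction on $\dim N$, peeling off one hyperbolic plane at a time: the completion-of-the-square step $f = v - \tfrac12(v\cdot v)e$ (valid precisely because $2$ is invertible in odd characteristic, and because $e\cdot e=0$ since $N$ is null) produces each plane, and the observation $N\subseteq \FF_p e\oplus H_1^\perp$ is exactly the bridge needed to make the induction close, since it guarantees $N\subseteq H_1\oplus H'$. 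Your identification $N^\perp = H^\perp\oplus N$ by dimension count, and the minimality argument via $\dim N\le \tfrac12\dim H$ for any null $N$ in a nondegenerate space, are both sound. What your approach buys is self-containedness and an explicit algorithm for the hyperbolic extension, at the cost of length; what the paper's approach buys is brevity, at the cost of outsourcing the real content to Witt's theorem. One small point of hygiene: you should note explicitly that $H_1^\perp$ is nondegenerate (which follows from $Q=H_1\oplus H_1^\perp$ with $H_1$ nondegenerate) before applying the inductive hypothesis to it, and that the induction is really on the strengthened internal statement ``there is a hyperbolic $H\supseteq N$ of dimension $2\dim N$ with $Q=H\oplus H^\perp$'' rather than on the lemma's literal ``isometric to'' phrasing; neither is a gap, just a matter of stating what is being inducted on.
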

\begin{proof}
Applying Lemma~\ref{lem:isometryextension} to the identity map $\sigma$,
we conclude that any orthogonal set of vectors extends to an orthogonal basis.
Since the form is nondegenerate,
there exists a minimal hyperbolic subspace that includes $N$ ({\em hyperbolic extension}),
and the symmetric form can be written as $\Lambda' \oplus \lambda$.
where $\lambda$ is hyperbolic, and $\Lambda'$ is nondegenerate.
It is then clear that $N^\perp / N$ has the symmetric form $\Lambda'$.
\end{proof}

\begin{lemma}
Let $Q$ be a nondegenerate quadratic space of dimension $n$ over $\FF_p$.
Every maximal null subspace of $Q$ has the same dimension $m$.
Given any null subspace $N$ of dimension $k \le m$,
the number of null vectors of $Q$ that are orthogonal to $N$ is
\begin{align}
\#\mathcal{Z}(Q,N) = p^{n-k-1} +p^{m} - p^{n-m-1} =: \zeta(n,m,k).
\end{align}
\end{lemma}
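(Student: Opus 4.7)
The plan is to establish constancy of the maximal-null dimension via Witt's extension theorem, then reduce the counting problem to the case $k=0$ by passing to $N^\perp/N$, and finally settle that base case using the Witt decomposition of $Q$.

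For the dimension claim, let $N_1,N_2\subseteq Q$ be two maximal null subspaces, assume $\dim N_1\le\dim N_2$, and pick any subspace $N_2'\subseteq N_2$ with $\dim N_2'=\dim N_1$. The zero forms on $N_1$ and $N_2'$ make them trivially isometric via some $\sigma:N_1\to N_2'$, which by Lemma~\ref{lem:isometryextension} extends to a global isometry $\bar\sigma$ of $Q$. Then $\bar\sigma^{-1}(N_2)$ is a null subspace containing $N_1$, and maximality of $N_1$ forces $\dim N_1=\dim N_2$. Hence every maximal null subspace of $Q$ has the same dimension $m$.

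For the general reduction, I would note that $(v+w)\cdot(v+w)=v\cdot v$ whenever $v\in N^\perp$ and $w\in N$, so the form descends to $N^\perp/N$ and $v\in N^\perp$ is null iff its class $[v]\in N^\perp/N$ is null. Since each null class has $p^k$ null representatives, $\#\mathcal{Z}(Q,N)=p^k\cdot\#\mathcal{Z}(N^\perp/N,0)$. By the same argument as in Lemma~\ref{lem:nondegeneracy}, $N^\perp/N$ is nondegenerate of dimension $n-2k$; moreover, null subspaces of $N^\perp/N$ lift to null subspaces of $Q$ containing $N$, so by the Witt-extension argument above the Witt index of $N^\perp/N$ is exactly $m-k$. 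Setting $n'=n-2k$ and $m'=m-k$, the identity to prove becomes $\#\mathcal{Z}(Q',0)=p^{n'-1}+p^{m'}-p^{n'-m'-1}$ for any nondegenerate quadratic space $Q'$ of dimension $n'$ and Witt index $m'$.

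For this base case, I would use the hyperbolic-extension lemma (applied to a maximal null subspace of $Q'$) to write $Q'\simeq H\oplus A$ with $H$ hyperbolic of dimension $2m'$ and $A$ anisotropic of dimension $n'-2m'\in\{0,1,2\}$, where the bound on the anisotropic dimension is a consequence of the Witt-group computation given earlier in the appendix. Counting directly in the hyperbolic basis yields $N_0=p^{2m'-1}+p^{m'}-p^{m'-1}$ null vectors in $H$ and $p^{2m'-1}-p^{m'-1}$ vectors of each nonzero form value; the uniformity across nonzero values follows from the bijection $b_1\mapsto\lambda b_1$ of $H$, which rescales the form by $\lambda$. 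Then partitioning $\#\mathcal{Z}(Q',0)$ by $c=a\cdot a$ and summing $\#\{a\in A:a\cdot a=c\}\cdot\#\{h\in H:h\cdot h=-c\}$ produces the formula in each of the three cases.

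The main obstacle is the case $\dim A=2$. Here one must identify the (unique up to isomorphism) anisotropic $2$-dimensional form over $\FF_p$ with the norm form of the extension $\FF_{p^2}/\FF_p$, so that the norm map $\FF_{p^2}^\times\to\FF_p^\times$ is surjective with fibers of size $p+1$ and $0\in A$ is the only null vector. After combining this with the hyperbolic counts and simplifying, the dependence on the specific anisotropic form disappears and the three case-by-case expressions collapse to the single formula $p^{n-k-1}+p^m-p^{n-m-1}=\zeta(n,m,k)$.
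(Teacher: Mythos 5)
Your proof is correct, and its skeleton matches the paper's: Witt's extension theorem (Lemma~\ref{lem:isometryextension}) for the constancy of the maximal null dimension, the quotient map to $N^\perp/N$ with fibers of size $p^k$ to reduce to $k=0$, and a decomposition of $Q'$ into a hyperbolic part plus an anisotropic part for the base case. The only genuine divergence is in how the base case is counted. The paper writes a null vector as $x'\oplus(u,u')$ and solves $x'\cdot x' + u\cdot u' = 0$ by splitting on whether $x'\cdot x'=0$; since the number of $(u,u')$ with $u\cdot u'=c$ is the same for every nonzero $c$, each of the $p^{n-2m}-1$ nonzero anisotropic vectors contributes identically, and no further structure of the anisotropic part is needed beyond definiteness. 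You instead partition by the value $c=a\cdot a$ and therefore feel obliged to compute the fiber sizes $\#\{a: a\cdot a=c\}$, which leads you to the classification $\dim A\le 2$ and the identification of the binary anisotropic form with the norm form of $\FF_{p^2}/\FF_p$. That identification is correct and your three case computations check out, but the detour is unnecessary: since your own hyperbolic count $\#\{h: h\cdot h=-c\}=p^{2m'-1}-p^{m'-1}$ is already uniform over $c\neq 0$, the sum over nonzero $c$ collapses to $(\#A-1)\cdot(p^{2m'-1}-p^{m'-1})=(p^{\dim A}-1)(p^{2m'-1}-p^{m'-1})$ using only anisotropy of $A$, which is exactly the paper's shortcut and eliminates the case analysis on $\dim A$ entirely.
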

\begin{proof}
To prove the first claim,
suppose $M, M'$ are maximal null subspaces.
If $\dim M \le \dim M'$, 
then any injection from $M$ to $M'$ is an isometry,
which can be extended to $Q$ as $\bar \sigma$.
Then, $\bar \sigma^{-1}(M')$ is a null superset of $M$,
and hence is $M$ itself since $M$ is maximal. Thus, $\dim M = \dim M'$. 

Let $\mathcal{Z}(Q,N)$ be the set of all null vectors of $Q$ that are orthogonal to $N$.
($\mathcal{Z}$ is not a subspace in general.)
Consider $\phi : \mathcal{Z}(Q,N) \to \mathcal{Z}(N^\perp/N,0)$, 
a restriction of the canonical projection map $Q \to Q/N$.
The map $\phi$ is surjective by definition of $\zeta$.
If $x,y \in \mathcal{Z}(Q,N)$ are mapped to the same element,
then $x-y \in N$.
This implies that $\phi$ maps exactly $\# N$ elements to one.
(Here, $\#$ denotes the number of elements of the finite set.)
Therefore,
\begin{align}
 \# \mathcal{Z}(Q,N) = (\# N) (\# \mathcal{Z}(N^\perp/N,0)).
\end{align}
Due to the preceding lemma, 
the dimension of a maximal null subspace of $N^\perp/N$ is $m-k$.
Thus, it remains only to prove the lemma when $k = 0$ since
\begin{align}
 \#\mathcal{Z}(Q,k) = p^k (p^{n-2k-1}-p^{n-m-k-1}+p^{m-k}) = p^{n-k-1}-p^{n-m-1} + p^{m}.
\end{align}

A {\em definite} quadratic space is one in which $w \cdot w = 0$ implies $w = 0$.%
\footnote{
A definite space corresponds to a nontrivial element of the Witt group.
For example, a one-dimensional nondegenerate space is definite.
If $\alpha$ is not a square, then the symmetric form $\dm(1,-\alpha)$ is definite.
}
To count all null vectors, we work in a basis 
such that the $n$-by-$n$ symmetric matrix is 
\begin{align}
\Lambda = \Lambda' \oplus \Lambda_{2m} 
\end{align}
where $\Lambda'$ definite, and 
$\Lambda_{2m} = \frac 1 2 \begin{pmatrix} 0 & 1 \\ 1 & 0 \end{pmatrix} \otimes I_m$
is an orthogonal sum of $m$ hyperbolic planes.
In this basis, let us write any vector $x$ as $x' \oplus (u,u')$.
The nullity is then expressed by a quadratic equation of coordinates
\begin{align}
 x' \cdot x' + u \cdot u' = 0.
\end{align}
The solutions of this equation are divided into two classes: $x' \cdot x' = 0$ or $x' \cdot x' \neq 0$.
In the former case, $x' = 0$ and $u \cdot u' =0$. 
Given arbitrary $u$ there is $u'$ such that this equation holds.
The number of solutions is $p^m + (p^m-1)p^{m-1}$.
In the latter case, we must have $u \neq 0$,
and 
$u \cdot u' = c = - x' \cdot x' \neq 0$ is a inhomogeneous equation in $u'$,
whose solution always exists.
For any given nonzero $c$, there are thus $(p^m-1)p^{m-1}$ choices of $(u,u')$.
$x'$ can be any nonzero vector, so there are $p^{n-2m}-1$ choices.
In sum, the number of null vectors in an $n$-dimensional quadratic space $Q$ over $\FF_p$ is
\begin{align}
\#\mathcal{Z}(Q,0) &= p^m + (p^m-1)p^{m-1} + (p^{n-2m}-1)(p^m-1)p^{m-1} = p^{n-1} - p^{n-m-1} + p^m.
\end{align}
\end{proof}

\begin{lemma}
\label{lem:qudit-rwsd}
Let $w_1 = \vec 1 \in \FF_p^n$ be the all-1 vector where $n$ is a multiple of $p \ge 3$.
Assume $c < (n - 2)/2$,
and let $w_2, \ldots, w_c$ be null vectors of $\FF_p^n$ chosen inductively
such that $w_j$ is chosen uniformly at random from $\mathcal{Z}(\FF_p^n, V_{j-1})$
where $V_{j-1} = \mathrm{span}(w_1,\ldots,w_{j-1})$.
Let $M$ be a $c$-by-$n$ matrix with rows $w_j$.

Consider a fixed $n$-component vector $v$,
with $v \neq 0$ and $v\neq \vec 1$.
The probability that $M v=0$ is bounded from above by
\begin{align}
 20 \left(\frac 3 5\right)^{n-c} + \left( \frac{11}{15}\right)^{c-1}.
\end{align}
\end{lemma}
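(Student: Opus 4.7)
The plan is to mirror the structure of Lemma~\ref{rwsdlemma}, adapting each step to handle the additional quadratic nullity constraint on $w_j$ by counting inside $\mathcal Z(\FF_p^n, V_{j-1})$ via the explicit formula $\zeta(n,m,k) = p^{n-k-1} + p^m - p^{n-m-1}$ of the preceding lemma. I would decompose $\{Mv = 0\} = \{v \in V_c^\perp\}$ by a union bound into event (A) $v \in V_c^\perp$ and $v \notin V_c$, and event (B) $v \in V_c$; the two resulting estimates will supply the $(11/15)^{c-1}$ and $20(3/5)^{n-c}$ terms, respectively.

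For event (A), set $A_j = \{v \in V_j^\perp,\, v \notin V_j\}$ and telescope $\Pr[A_c] \leq \prod_{j=2}^c \Pr[w_j \cdot v = 0 \mid A_{j-1}]$. The key step is to show each factor is at most $11/15$, via a case analysis on whether $v$ is null ($v \cdot v = 0$) and whether $v \in V_{j-1}^\perp$. When $v$ is null and in $V_{j-1}^\perp$, the conditional probability equals the ratio $\zeta(n,m,j)/\zeta(n,m,j-1)$, which the explicit formula shows to be at most $p/(2p-1) \leq 3/5$ for $p \geq 3$ (attained at $j = n-m$ with $n = 2m$). In the other subcases one counts null vectors in the $(n-j)$-dimensional subspace $V_{j-1}^\perp \cap v^\perp$ by isolating its radical and applying the $\zeta$-formula to the non-degenerate quotient; the worst-case ratio here is bounded by $11/15$, yielding $\Pr[A_c] \leq (11/15)^{c-1}$.

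For event (B), note that $V_c$ is a null subspace so $v \in V_c$ forces $v$ to be null, and the assumption $v \neq 0, \vec 1$ (strengthened to $v \notin \langle \vec 1\rangle$ in order for the bound to be meaningful when $p > 2$) guarantees that the least index $j^\star$ with $v \in V_{j^\star}$ satisfies $j^\star \geq 2$. I would then bound $\Pr[B] \leq \sum_{j=2}^c \Pr[A_{j-1}] \cdot \Pr[v \in V_j \mid A_{j-1},\, v \text{ null}]$. The inner factor equals $(p-1)p^{j-1}/\zeta(n,m,j-1)$, because the $(p-1)p^{j-1}$ vectors of the form $\alpha^{-1}(v - u)$ with $\alpha \in \FF_p^\times$ and $u \in V_{j-1}$ are exactly those $w_j$ that would place $v$ into $V_j$, and they are automatically null and orthogonal to $V_{j-1}$ under the conditioning. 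Combined with the tighter $v$-null bound $\Pr[A_{j-1}\mid v \text{ null}] \leq (3/5)^{j-2}$ (arising because every subcase of the $v$-null case analysis gives ratio at most $3/5$) together with $\zeta(n,m,j-1) \geq \tfrac12 p^{n-j}$ in the range $j \leq c < (n-2)/2$, the resulting sum is a geometric series dominated by its $j=c$ term and is controlled by $20(3/5)^{n-c}$ after absorbing subleading corrections into the constant $20$.

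The main obstacle will be the case analysis underlying the $11/15$ bound in (A) when $v$ is not null: the inherited quadratic form on $V_{j-1}^\perp \cap v^\perp$ has a radical whose dimension depends on whether $v \in V_{j-1}^\perp$, and the non-degenerate quotient is either 1- or 2-dimensional and may be either hyperbolic or definite, so the number of null vectors must be counted after classifying the quotient form using the general classification of symmetric forms over $\FF_p$ in Appendix~\ref{symmformfinite}, and the bound must be verified uniformly across all primes $p \geq 3$, all admissible $j$, and all combinations of radical dimension and quotient type. A secondary technical point is tracking constants in the geometric summation for (B) to ensure the absorbed prefactor is at most $20$.
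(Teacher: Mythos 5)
Your proposal follows essentially the same route as the paper's proof: the same union-bound split into the events $v\in V_c^\perp\setminus V_c$ and $v\in V_c$, the same telescoping of conditional probabilities through the counts $\zeta(n,m,k)$, the same ratios $p/(2p-1)\le 3/5$ (for null $v$) and $(p^2+p-1)/(2p^2-p)\le 11/15$ (for non-null $v$, obtained by counting $\mathcal{Z}(v^\perp,V_{j-1})$), and the same geometric summation over the least $j$ with $v\in V_j$ for the second event. Your side remark that the hypothesis should be strengthened to $v\notin\langle\vec 1\rangle$ --- since $M(\alpha\vec 1)=0$ holds deterministically for every scalar $\alpha$ when $p>2$ --- is a correct observation about a point the paper's statement glosses over.
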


\begin{proof}
We will estimate the desired probability by a union bound, 
considering separately the event that $v \in V_c^\perp$ and $v\not\in V_c$, 
and the event that $v\in V_c^\perp$ and $v \in V_c$.
The second event is possible only if $v \cdot v = 0$.
By the classification of symmetric forms,
a maximal null space of $\FF_p^n$ has dimension $m$ such that $n-2 \le 2m \le n$.
The assumption that $c < (n-2)/2$ implies that
\begin{align}
k_j := \dim V_j \le j \le c \le m-1.
\end{align}

Consider the first event, assuming $v\cdot v = 0$.
Let $j>1$.  Then
\begin{align}
\label{est1d}
\Pr[ v \in V_j^\perp \,{\rm and}\, v \notin V_j | v \in V_{j-1}^\perp \setminus V_{j-1} ] 
\le \frac{\zeta(n,m, k_{j-1}+1 )}{\zeta(n, m, k_{j-1})} \le \frac{p}{2p-1} \le \frac 3 5
\end{align}
because $w_j$ has to be orthogonal to $\text{span}(v) + V_{j-1}$, which is null and is a proper superset of $V_{j-1}$.
Thus, for any $t$,
\begin{align}
\label{estkd}
\Pr[v \in V_t^\perp \, {\rm and} \, v \not \in V_t] 
&=\Pr[ v \in V_1^\perp \setminus V_1 ] \prod_{j=2}^t \Pr[v \in V_j^\perp \setminus V_j ~|~ v \in V_{j-1}^\perp \setminus V_{j-1}] \nonumber \\
&\leq \prod_{j=2}^t \frac{\zeta(n,m, k_{j-1}+1 )}{\zeta(n ,m, k_{j-1})}\nonumber \\
&\le \left(\frac 3 5\right)^{t-1}.
\end{align}
For $t=c$, we find in particular that
\begin{align}
\label{estcd}
\Pr[v \in V_c^\perp \, {\rm and} \, v \not \in V_c] \leq \left(\frac 3 5\right)^{c-1}.
\end{align}

Now assume $v \cdot v \neq 0$.
The event that $v \in V_j^\perp $ happens only if $w_j$ is chosen from $v^\perp$.
We bound the decomposition 
$
\Pr[ v \in V_c^\perp] 
= 
\Pr[v \in V_1^\perp] \prod_{j=2}^c \Pr[v \in V_j^\perp | v \in V_{j-1}^\perp]
$.
The first term is bounded by 1 trivially.
For other factors, we observe that the dimension of $v^\perp$ is $n-1$,
and a maximal null subspace in $v^\perp$ has dimension $m' \le m$.
Under the conditioning $v \in V_{j-1}^\perp$,
the null space $V_{j-1}$ is a subspace of $v^\perp$,
and $\#\mathcal{Z}(v^\perp,V_{j-1}) = \zeta(n-1,m',k_{j-1}) \le \zeta(n-1,m,k_{j-1})$.
Hence,
\begin{align}
\Pr[ v \in V_c^\perp] 
&= 
\Pr[v \in V_1^\perp] \prod_{j=2}^c \Pr[v \in V_j^\perp | v \in V_{j-1}^\perp] \nonumber\\
&\le
\prod_{j=2}^c \frac{\zeta(n-1,m,k_{j-1})}{\zeta(n,m,k_{j-1})} \nonumber\\
&\le
\left(\frac{p^2+p-1}{2p^2-p}\right)^{c-1} \le \left(\frac{11}{15}\right)^{c-1} \label{eq:vvnonzero}
\end{align}
where in the second inequality, we used the assumption that $k_{j-1} \le c < (n-2)/2 \le m$.

Let us turn to the second event, assuming $v \cdot v = 0$.
Note that if $v \in V_c$, there is a least $j$ such that $v\in V_j$.  
So,
\begin{align}
\label{sumeqd}
\Pr[v \in V_c^\perp \, {\rm and} \, v \in V_c] 
\leq 
\sum_{j=2}^c 
\Pr[v\in V_j \, {\rm and} \, v \in V_{j-1}^\perp \, {\rm and} \, v \not \in V_{j-1}].
\end{align}
We have
\begin{align}
&\Pr[v \in V_{j} \, {\rm and} \, v\in V_{j-1}^\perp \, {\rm and} \, v \not \in V_{j-1}] \nonumber \\ 
&= 
\Pr[v \in V_{j-1}^\perp \, {\rm and}\, v\not \in V_{j-1}] \cdot  
\Pr [v\in V_j  | v \in V_{j-1}^\perp \, {\rm and}\, v\not \in V_{j-1}] \nonumber\\
&\leq 
\left(\frac 3 5\right)^{j-2}  \Pr [ v \in V_j  | v \in V_{j-1}^\perp \, {\rm and}\, v\not \in V_{j-1}],
\end{align}
where we used Eq.~(\ref{estkd}).
The second factor is bounded as
\begin{align}
 \Pr [ v \in V_j | v \in V_{j-1}^\perp \, {\rm and}\, v \not \in V_{j-1}] 
 \le \frac{p^{1+k_{j-1}}}{\zeta(n,m,k_{j-1})} 
 < \frac{3}{2p^{n-2j}}
\end{align}
because $w_j$ belongs to $\text{span}(v) + V_{j-1}$.
Hence,
$\Pr[v \in V_j \, {\rm and} \, v\in V_{j-1}^\perp \, {\rm and} \, v \notin V_{j-1}] 
\leq 5 (\frac 3 5)^{n-j}$.
So by Eq.~(\ref{sumeqd}),
\begin{align}
\label{event2d}
\Pr[v \in V_c^\perp \, {\rm and} \, v \in V_c] \leq \sum_{j=2}^c 5 \left(\frac 3 5\right)^{n-j} 
< 20 \left(\frac 3 5\right)^{n-c}.
\end{align}

Summing the probabilities of \eqref{eq:vvnonzero} and \eqref{event2d},
we conclude the proof.
\end{proof}

\end{document}